\newtheorem{theorem}{Theorem}[section]
\newtheorem{lemma}[theorem]{Lemma}
\newtheorem{observation}[theorem]{Observation}
\newtheorem{claim}[theorem]{Claim}
\newtheorem{remark}[theorem]{Remark}
\newtheorem{algo}[theorem]{Algorithm}
\newcommand{\remove}[1]{}
  \let\@@Statex\Statex
  \renewcommand\Statex[1][0]{%
    \setlength\@tempdima{\algorithmicindent}%
    \@@Statex\hskip\dimexpr#1\@tempdima\relax}
\begin{document}

\title{Faster Negative-Weight Shortest Paths and\\ Directed Low-Diameter Decompositions}
\author{
Jason Li\footnote{Carnegie Mellon University. email: \tt jmli@cs.cmu.edu} \and
Connor Mowry\footnote{Carnegie Mellon University. email: \tt cmowry@andrew.cmu.edu} \and
Satish Rao\footnote{UC Berkeley. email: \tt satishr@berkeley.edu} \and
}
\date{\today}
\maketitle

\begin{abstract}
We present a faster algorithm for low-diameter decompositions on directed graphs, matching the $O(\log n\log\log n)$ loss factor from Bringmann, Fischer, Haeupler, and Latypov (ICALP 2025) and improving the running time to $O((m+n\log\log n)\log n\log\log n)$ in expectation. We then apply our faster low-diameter decomposition to obtain an algorithm for negative-weight single source shortest paths on integer-weighted graphs in $O((m+n\log\log n)\log(nW)\log n\log\log n)$ time, a nearly log-factor improvement over the algorithm of Bringmann, Cassis, and Fischer (FOCS 2023).

\end{abstract}

\section{Introduction}

We consider the problem of computing shortest paths in a graph
$G=(V,E,w)$ with vertices $V$, edges $E$ and weights $w:E \rightarrow \mathbb Z$. That is, the edge weights are integral but could be negative.  In particular, we give an algorithm that runs in $O((m+n\log\log
n)\log(nW)\log n\log\log n)$ for a graph with $m$ edges, $n$ vertices
and where $W$ bounds the maximum absolute value of a negative edge
weight. This is a nearly log-factor improvement over the
$O((m+n\log\log n)\log(nW)\log^2n)$ time algorithm of Bringmann, Cassis
and Fischer~\cite{bringmann2023negative}. Their work, in turn, improved
on the breakthrough work of Bernstein, Nanongkai, and Wulff-Nilsen~\cite{bernstein2025negative} that gave a near linear $O(m \log^8 n \log W)$ time
algorithm, which itself improved over decades-old previous algorithms
\cite{gabow1989faster,goldberg1995scaling} where the runtime incurred a  polynomial factor of  $\sqrt{n}$ over linear time.

A main tool in these recent developments is a (probabilistic) low-diameter
decomposition (LDD) on a directed graph: given a diameter parameter
$\Delta$, delete a random subset of edges such that (1)~each strongly
connected component in the remaining graph has (weak) diameter at most
$\Delta$, and (2)~each edge is deleted  (cut) with probability at most
$\frac{\ell(n)}{\Delta}$ times the weight of the edge, where $\ell(n)$ is a
loss factor that should be minimized.

In this paper, we give an algorithm to find an LDD with loss $\ell(n) = O(\log
n\log\log n)$ in expected time $O((m+n\log\log n)\log n\log\log n)$,
improving upon the previous algorithm of \cite{bringmann2025near} by three
log-factors in terms of runtime.\footnote{They state their running time as $O(m\log^5n\log\log n)$ but note that one log-factor can be trivially shaved, so we treat their number of log-factors as four.}

This type of decomposition was introduced by Awerbuch
\cite{awerbuch1989network} and has a long history of use for algorithms on
undirected graphs
\cite{bartal1996probabilistic,fakcharoenphol2003tight,calinescu2005approximation,charikar1998approximating}. Related decompositions without
the per-edge probabilistic bounds on being cut were previously used in
approximation algorithms as well in both directed and undirected
graphs. See, for example,
\cite{seymour1995packing,leighton1999multicommodity,klein1997approximation,even1998approximating}

We note the recent $m^{1 + o(1)}$
algorithm of \cite{chen2025maximum} for the minimum cost flow problem
 can also be used to solve the shortest path problem.
This result follows from a long series of work combining convex optimization,
combinatorial pre-conditioning, and dynamic algorithms. We refer the
reader to \cite{chen2025maximum} for a history of this amazing set of
results and techniques.

We also mention that an alternate frame for
strongly polynomial time algorithms, where the runtime only depends on
$n$ and $m$ and not on the size of the edge weights.  Here too, there
have been staggering breakthroughs;  Fineman \cite{fineman2024single}
improved over the classical bound of Bellman-Ford of $O(mn)$ time by
giving a randomized algorithm that runs in time
$\tilde{O}(mn^{7/8})$ and  Fineman's ideas were then used by Huang, Jin and Quanrud
\cite{huang2025faster} to give an $\tilde{O}(mn^{4/5})$ algorithm.\footnote{$\tilde O(\cdot)$ notation hides factors polylogarithmic in $n$.}

Back to the results in this paper, we also replace the noisy binary search procedure from
\cite{bringmann2023negative} that finds a negative cycle when the
shortest path problem is infeasible by directly including logic in the
algorithms themselves to address this case.

We proceed with a discussion of techniques that are used
in previous algorithms and ours. The discussion aims to explain
the shortest path algorithms, the relationship to low diameter decompositions, and how algorithms for both are combined.

\subsection{Techniques}



In the following, we sketch the ideas used in the relevant algorithms and in ours. At times, we sacrifice mathematical precision to communicate the central ideas.


\paragraph{Connecting Negative Shortest Paths to Low Diameter Decompositions.}

If one can compute a shortest path from an added source vertex with a
zero weight arc to all the vertices, one can compute non-negative weights
that preserve all shortest paths. In particular, a {\em price} function, $\phi$, for an edge $e = (u,v)$, has {\em reduced cost} $w_\phi (e) = w(e) - \phi(v) + \phi(u)$.  For shortest
path labels, we have $d(v) \leq d(u) + w(u,v)$ and thus using $d$ as a price function ensures that every edge is non-negative.  The algorithms cannot find such a price function in one go, so they repeatedly compute weaker price functions to reduce the most negative edges. 

For the sake of this discussion, the central structure to think about are  negative weight
paths. And also recall that the problem is only feasible if there is
no negative cycle which we assume for the following description.

In this context, Bernstein, Nanongkai, and Wulff-Nilsen
\cite{bernstein2025negative} consider a graph, $G_{\geq 0}$, where all
the negative weight edges in $G$ are assigned weight 0.  A path whose
length is $-L$ in $G$ cannot have both endpoints in a strongly
connected component of $G_{\geq 0}$ of diameter less than $L$; otherwise, a negative cycle is formed by the negative path of length $-L$ and the path between the endpoints of
length less than $L$ in the strongly connected component.

Furthermore, in a low diameter decomposition with parameter $\Delta < L$, there is no negative weight path of length less than $-L$ entirely inside any strongly
connected component of the decomposition.  That is, for decompositions
of $G_{\geq 0}$ with smaller parameter $\Delta$, the shortest paths
in each component become smaller which suggests the recursive use
of low diameter decompositions is useful.

To complete the recursion, one needs to stitch together solutions from the strongly connected components in the LDD which suggests that
negative length paths should not cross between strongly connected
components in the LDD too often. Here, the probabilistic bound in part (2) of the LDD definition is useful.

The algorithm of~\cite{bernstein2025negative} requires one more idea based on scaling. They consider
the edges, $E'$, of roughly the maximum negative weight $-W$ and add
$W/2$ to each edge in $E'$ to produce a graph, $G_+$.  A shortest
path labelling in $G_+$ can be used as a price function where the maximum absolute negative
value drops by a factor of $2$.

For a path with $k$ edges in $E'$ which remains negative
in $G_+$, we argue that its endpoints cannot be in the same connected component
for an LDD with parameter $\Delta = kW/2$ as follows. The path in $G$ 
has length $ \leq -kW/2$ as the path is negative in $G_+$ and each
of $k$ edges is more negative by $W/2$ in $G$ and the diameter of
the any SCC is $\Delta = kW/2$ and thus there would be a negative
cycle if the endpoints were to be in the same SCC.

Moreover, the positive length  of such a path is at most $\Delta$ since the entire path is negative in $G_+$.  This latter observation and the
cutting probability bounds from the LDD to ensures  the path does
not cross between strongly connected components too often, in
particular, no more than the loss, $\ell(n)$ of the LDD construction on expectation.

Notice, there are two parameters; the number $k$ of edges in $E'$ and
$W$. They fix $W$ and then recursively decompose $G'$ with LDDs such
that for level $i$, the LDD cuts paths with $k = n/2^i$ edges in $E'$.
Moreover, each such path is only cut polylogarithmic times at each
level on average.

This leads to a near linear time procedure that produces a weight
function that reduces the negativity of edges in $E'$ by half and
preserves all shortest paths. Recursing on the value of $W$ completes
their algorithm at a cost of  $O(\log W)$ factor.

The other factors are the $\log n$ levels of the LDD that deals with
number of edges in $E'$ on a path and the polylogarithmic loss of the LDD, as well as the runtime for computing the LDDs.

Bringmann, Cassis and Fischer \cite{bringmann2023negative} improve this
in a variety of ways including new ideas for constructing Low Diameter
Decompositions as well as integrating the recursions of the LDD
constructions with the recursion for the negative weight shortest
paths. Later on,~\cite{fischer2025simple} simplify the argument by ignoring the parameter $k$ and working with the diameter $\Delta$ directly.

We proceed first by discussing background on constructing low diameter decompositions and then proceed with a discussion of \cite{bringmann2023negative}.

\paragraph{Constructing Low Diameter Decompositions: techniques.}

Informally, low diameter decompositions remove edges (1) to decompose
a graph into low diameter pieces and (2) have each edge have
low probability of being removed. We say an edge is cut if it is removed below.

For this discussion, we assume we have an unweighted graph as that captures
the bulk of the interesting ideas. In this context,
producing a piece in the decomposition is typically done by 
breadth first search of low depth which in undirected graphs produces
a low diameter piece. The piece may not have low diameter strongly connected components in directed
graphs and thus all current methods recurse on the pieces until they do indeed satisfy the low diameter condition. 

We proceed, for now, by sketching techniques developed for undirected graphs.
The frame is typically do a truncated shortest path say truncated at
$\Delta$, remove the piece and continue. This kind of procedure can produce
cuts of small size, intuitively, $\tilde{O}(\frac{|E|}{\Delta})$, as one
can choose to cut in $\Theta(\Delta)$ possible depths and choose according
to the fewest edges. Or can do this probabilistically by choosing a random
depth among the $\Delta$ choices, which then has the property that each
edge is cut with probability, intuitively, proportional to
$\frac{1}{\Delta}$.

The flaw in this reasoning is that when one cuts another piece from
the remaining graph each remaining edge again suffers a risk of being
cut. Naively, this could lead to extra factor of $n$.  Choosing the
depth  of the breadth first search from an exponential distribution fixes this as when an edge suffers the
possibility of being cut it also has a good probability of being
inside the piece (``saved'' from further risk) by the memoryless
property of the exponential distribution. There is an overhead of
$O(\log n)$ introduced here as the depth may become $\Omega(\Delta \log n)$ when choosing from an exponential distribution with mean $\Delta$.

An alternative idea from \cite{calinescu2005approximation} in the context of approximation
algorithms, was to choose starting vertices from a random permutation
as well as a random diameter in a range of width $\Theta(\Delta)$
preserving the diameter bound. Here an edge can only be cut by the $i$th closest vertex if that vertex is first in the random permutation as the closer vertices would save the edge in this iteration. This happens with probability $1/i$. This leads to a possibly penalty of $\sum_{i}
1/i = O(\ln n)$ over $O(\frac{1}{\Delta})$ on the probability of
being cut.

This idea not only works for fixed $\Delta$, but can be used to produce
hierarchical decompositions with $O(\log n)$ loss \cite{fakcharoenphol2003tight}. This is
important for directed decompositions as one needs to recurse as noted above and also for
our recursive negative weight shortest path algorithm.

To deal with the recursion in finding LDD's for directed graphs, Seymour \cite{seymour1995packing}
observed that the $O(\log n)$ penalty on cut size for low depth breadth first search could be varied
according to the size of the piece that is cut off. That is, one can
pay very little overhead to cut a large piece off and more to cut off
a small piece. But for a recursion based on piece size, the small
pieces need to be recursed on less.  Seymour optimized this tradeoff to
give a low diameter deterministic decomposition with overhead $O(\log
n \log \log n)$. While this does bound the total number of edges that
are cut, it does not give a per edge bound on the probability of being
in the cut. The runtime is also polynomial, not near linear.

We note that Seymour's ideas were used to produce a hierarchical
decompositions of undirected graphs with overheads of $O(\log
n\log\log n)$ over all the levels of recursion as well. See for
example, \cite{bartal1998approximating}.

\paragraph{Entwining recurrences for LDD and Negative Weight Shortest Path.}

Recall that there is a recurrence for negative weight shortest paths
that entails doing low diameter decompositions for a geometric
sequence of diameters, $\Delta$.  Moreover, for a low diameter decomposition
in directed graphs, algorithms (currently) recurse for even one
diameter.

Bringmann, Cassis and Fischer \cite{bringmann2023negative} combine
the recurrences for shortest paths and low diameter decompositions in an ingenious but simple manner. At a particular
level of decomposition, they find pieces that either have diameter
$\Delta$ or are smaller by a constant factor and  where each edge has probability
$O(\frac{\log n}{\Delta})$ of being cut.  Then they recursively compute price
functions on each piece to ensure that all internal edge weights are
positive in $G_+$. Given that any negative weight shortest path on
expectation has $\ell(n) = O(\log n)$ crossings in this decomposition, they use expected time proportional to the time of $\ell(n)$ Dijkstra computations to complete the algorithm.

They build on the ``ball-growing'' (which is a directed version of
breadth first search) procedures of \cite{bernstein2025negative} to do
so efficiently. They also use the exponential distribution to pick
ball radii and use sampling techniques to estimate the
sizes of balls for all vertices. 

Since either the size drops or the length drops, the number of
recursive levels is $O(\log n)$. Combined with the loss factor, the total overhead becomes $O(\log^2 n)$, and then
the scaling approach of halving the most negative weight edge adds another $O(\log(nW))$ factor.

Note they do not compute a complete LDD for any level. Though, one
such procedure is developed by Bringmann, Fischer, Haeupler and Latypov in
\cite{bringmann2025near} with near optimal $\ell(n)$ of $O(\log n \log
\log n)$ but runtime that is a factor of $O(\log^4 n)$ more than
linear. They use a version of random permutations from \cite{calinescu2005approximation}
along with a sequence of samples that we use as well.

\paragraph{This paper.}

We improve the algorithm in \cite{bringmann2023negative} that produces a price function whose reduced costs are non-negative for $G_+$  from having $O(\log^2 n)$ expected time to $O(\log n \log\log n)$ expected time.

To review, the shortest path algorithm proceeds by decomposing the
current graph into pieces where the diameter of pieces, $\Delta$, is significantly reduced or the size is significantly reduced and the cost of that decomposition is
the probability that an edge gets cut is $O(\frac{\ell(n)}{\Delta})$.  That
probability corresponds to the cleanup work to find a global price
function at that level. The total work (disregarding the computation
time for the decomposition) is roughly $O(\ell(n) \log n)$ over 
$\log n$ levels of recursion to deal with the number of edges in a path. And in \cite{bringmann2023negative},
$\ell(n)$ is $O(\log n)$.

As done by Seymour \cite{seymour1995packing}, the cutting procedure can be adjusted
so that the ``overhead'' is larger on small pieces. In particular, if
at a level all the pieces are of size say $\Theta(m/2^s)$ the
probability of an edge being cut can be made (proportional to)
$O(s)$.\footnote{ The factor of proportionality is $O(\log \log n)$
which we leave out in this discussion.} Moreover, for pieces of size $m/2^s$ one only need recurse $\log n - s$ more
times.  Thus, the total overhead over all levels is (proportional to)
$O(\log n)$ in this case. 

Thus, the algorithm proceeds at each level to determine whether one
can divide the graph into pieces of size $m/2^s$ or divide into
possibly large pieces of diameter $L/2$.  To be sure, the sizes are
grouped according to sizes that are between $m/2^{2^i}$ and $m/2^{2^{i+1}}$  as the cutting overhead for pieces  inside a group are within a constant factor:  the penalty is proportional to $\log (2^{2^i}) \log\log n$
and thus the penalty of $2^i \log \log n  = \Theta((2^{i+1}) \log\log n)$ is constant
for sizes in the group determined by $i$.
 Moreover, with this parameterization, the number of sizes is $O(\log \log n)$ so one can iterate over them.

We do this by iterating over the possible size groups from largest size to smallest. We compute shortest
paths of limited depth from each sample of increasing size over the iterations.  The failure of one iteration ensures that a typical ball is small. Thus, the work in the next iteration can be bounded with a larger sample of sources as the number of nodes touched even from many sources in the larger sample remains small. 

This is a main thread of the algorithm and its improvement
but there is technical work to make it precise and complete.
Moreover, we need  ideas from
\cite{calinescu2005approximation} to get overhead bounds and from
\cite{bringmann2025near} to obtain efficient runtimes. There is also technical work needed to get good success probability without
oversampling.  We leave that to the technical sections.

\remove{

\section{Introduction}
\textcolor{red}{This is the old intro for LDD.} Some stuff can stay. What we do now:
\begin{enumerate}
\item Improve negative-weight SSSP $O((m+n\log\log n)\log(nW)\log^2n)$ to $O((m+n\log\log n)\log(nW)\log n\log\log n)$, a nearly log-factor improvement from BCF
\item Directed LDD with loss $O(\log n\log\log n)$ and expected time $O((m+n\log\log n)\log n\log\log n)$, improving [BFHL] by 3 log factors
\item Integrate negative-cycle finding to eliminate noisy binary search from BCF
\end{enumerate}

We study the problem of computing a (probabilistic) low-diameter decomposition (LDD) on a directed graph: given a diameter parameter $\Delta$, delete a random subset of edges such that (1)~each strongly connected component in the remaining graph has weak diameter at most $\Delta$, and (2)~each edge is deleted with probability at most $\ell(n)/\Delta$ times the weight of the edge, where $\ell(n)$ is a loss factor that should be minimized.

Historically, LDDs were first studied on \emph{undirected} graphs, where it has become a central tool in the design of graph algorithms~\cite{linial1993low,bartal1996probabilistic}, especially in constrained computational models such as parallel~\cite{blelloch2011near,miller2013parallel}, distributed~\cite{awerbuch1989network,awerbuch1992fast}, and dynamic~\cite{forster2019dynamic,chechik2020dynamic}. For directed graphs, a non-probabilistic version of LDDs was studied in the context of directed feedback vertex set \cite{seymour1995packing}, minimum multi-cut~\cite{klein1997approximation}, and multi-commodity flow~\cite{leighton1999multicommodity}, though its presence was limited in the following two decades until the seminal result of Bernstein, Nanongkai, and Wulff-Nilsen~\cite{bernstein2025negative}, who used (probabilistic) directed LDDs to obtain the first near-linear time algorithm for negative-weight single-source shortest paths. They devised a simple directed LDD algorithm that achieved a loss factor of $\ell(n)=O(\log^2n)$, though optimizing this bound was not their main focus. Directed LDDs were then systematically studied by Bringmann, Fischer, Haeupler, and Latypov~\cite{bringmann2025near}, who improved the bound to $\ell(n)=O(\log n\log\log n)$, coming close to the $\Omega(\log n)$ lower bound which also holds for undirected graphs. They also designed a randomized algorithm achieving this bound that runs in $O(m\log^5n\log\log n)$ time with high probability,\footnote{\emph{With high probability} means with probability $1-1/n^C$ for arbitrarily large constant $C>0$.} and they remark that one logarithmic factor can be shaved by using a faster single-source shortest paths subroutine.

In this work, we present a simpler and faster algorithm for directed LDDs, achieving the same bound $\ell(n)=O(\log n\log\log n)$ and running in $O((m+n\log\log n)\log^2n)$ time, an improvement of two logarithmic factors over the optimized version of~\cite{bringmann2025near}.
\begin{theorem}\label{thm:main}
Given a graph with integral and polynomially-bounded edge weights, there is a directed LDD algorithm that achieves loss factor $\ell(n)=O(\log n\log\log n)$, runs in $O((m+n\log\log n)\log n\log\log n)$ expected time, and succeeds with high probability.
\end{theorem}

}

\subsection{Preliminaries and Paper Structure}
Let $G=(V,E,w)$ be a directed graph with non-negative and integral edge weights. The degree $\deg(v)$ of a vertex $v\in V$ is the number of edges incident to $v$ (which does not depend on edge directions or weights), and the volume $\textup{vol}(U)$ of a vertex subset $U\subseteq V$ equals $\sum_{v\in U}\deg(v)$. For a vertex subset $U\subseteq V$, define $G[U]$ as the induced graph on $U$, and define $E[U]$ as the edges of this induced graph. For vertices $u,v\in V$, define $d(u,v)$ to be the distance from $u$ to $v$ according to the edge weights. For a vertex subset $U\subseteq V$, define the \emph{weak diameter} of $U$ as $\max\{d(u,v):u,v\in U\}$.

The paper is divided into two main sections. \Cref{sec:1} describes our new directed LDD algorithm, and \Cref{sec:sssp-main} applies it to speed up negative-weight shortest paths. \Cref{sec:1} is fully self-contained, so a reader primarily interested in directed LDD may skip \Cref{sec:sssp-main}. For a reader mainly interested in negative-weight shortest paths, we package the necessary guarantees of our LDD algorithm in \Cref{sec:decomposition-interface}, so that the rest of \Cref{sec:1} can be skipped.

We defer additional preliminaries for negative-weight shortest paths to its own section; see \Cref{sec:sssp-prelims}.

\section{Directed LDD Algorithm}\label{sec:1}

In this section, we present our improved algorithm for low-diameter decomposition (LDD) on directed graphs. We begin with the main theorem statement and a technical overview.

\begin{theorem}
There is a randomized algorithm for directed low-diameter decomposition that achieves loss factor $O(\log n\log\log n)$, runs in $O((m+n\log\log n)\log n\log\log n)$ expected time, and succeeds with high probability.\footnote{\emph{With high probability} means with probability at least $1-1/n^C$ for arbitrarily large constant $C>0$.}
\end{theorem}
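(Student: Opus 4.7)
The plan is to design a recursive algorithm \textsc{LDD}$(G,\Delta)$ that, on input a directed graph $G$ and diameter parameter $\Delta$, returns a set $E_{\textup{cut}}\subseteq E$ such that every SCC of $G\setminus E_{\textup{cut}}$ has weak diameter at most $\Delta$ and $\Pr[e\in E_{\textup{cut}}]=O(w(e)\log n\log\log n/\Delta)$. Each recursive call either terminates (when the graph is already within the diameter bound), or carves off one piece and recurses on both the piece and on the remainder.

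The heart of a single call is a loop over $O(\log\log n)$ \emph{size groups}: group $i$ (for $i=0,1,\ldots,O(\log\log n)$) attempts to carve off a piece whose volume lies in the interval $[m/2^{2^{i+1}},\,m/2^{2^i}]$. For group $i$, the algorithm samples $\Theta(2^i)$ source vertices uniformly at random and performs a truncated ball-growing from each sample using a Calinescu-Karloff-Rabani style random permutation together with a radius drawn from a range of width $\Theta(\Delta)$. If some sample's grown ball has volume in the target range, the corresponding piece is cut out (its outgoing edges are added to $E_{\textup{cut}}$), the algorithm recurses inside the piece, and the call restarts on the remainder; otherwise it proceeds to group $i+1$. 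When no group succeeds, the remaining graph must already satisfy the diameter bound.

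The central analytical step is bounding the per-edge cut probability. The random-permutation argument restricted to the $\Theta(2^i)$ samples, combined with the width-$\Theta(\Delta)$ radius distribution, gives a per-level cut probability of $O(2^i\log\log n\cdot w(e)/\Delta)$ at group $i$. Crucially, a piece of volume $m/2^{2^i}$ admits at most $O(\log n - 2^i)$ further decomposition levels before becoming trivial, so summing the contributions $O(2^i\log\log n)$ along the recursion path of any edge telescopes to the target $O(\log n\log\log n)$.

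The main obstacle I anticipate is the runtime. For group $i$ to cost only $\tilde O(m+n\log\log n)$ time, the total volume explored by $\Theta(2^i)$ ball-growings must remain near-linear. The key invariant, inherited from Bringmann-Fischer-Haeupler-Latypov, is that the failure of groups $0,1,\ldots,i-1$ implies that balls of volume exceeding $m/2^{2^{i-1}}$ occupy at most an $O(1/2^{i-1})$ fraction of the vertex set, so the expected work per sample in group $i$ is bounded accordingly. I would use their size-estimation primitive to abort an oversized ball-growing the moment its target volume is exceeded, ensuring no wasted exploration. Combining $O(\log\log n)$ groups per level with $O(\log n)$ recursion depth gives the claimed $O((m+n\log\log n)\log n\log\log n)$ expected runtime. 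Finally, choosing the sample-size constants sufficiently large and taking a union bound over the $O(\log n\log\log n)$ total iterations yields the with-high-probability success guarantee.
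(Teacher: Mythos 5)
Your high-level plan (doubly-exponential size groups, CKR-style random order, Seymour-style telescoping of $O(2^i\log\log n)$ against the reduced recursion budget of pieces of size $m/2^{2^i}$) is indeed the skeleton of the paper's algorithm, but as written the proposal has quantitative and structural gaps that break the argument. First, the sample size is off by an exponential: to make "failure of group $i$" certify that every surviving vertex has ball volume at most $\approx m/2^{2^{i-1}}$ (which is exactly what the telescoping $f(m)\le \max_i\bigl(O(2^i\log\log n)/\Delta + f(O(m/2^{2^{i-1}}))\bigr)$ needs), you must sample each vertex with probability $\approx 2^{2^i}\log(1/\epsilon)\cdot\deg(v)/m$, i.e.\ $\Theta(2^{2^i}\log(1/\epsilon))$ degree-weighted sources per group; with only $\Theta(2^i)$ uniform samples the probability of hitting a set of volume $m/2^{2^i}$ is about $2^i/2^{2^i}$, so neither the carving step nor your "failed groups imply small balls" invariant holds, and the claimed per-group cut bound $O(2^i\log\log n\cdot w(e)/\Delta)$ has no derivation (in the correct parameterization the $2^i$ arises as $\ln(2^{2^i}\log(1/\epsilon))$ from the CKR harmonic sum, balanced against a radius interval that must be \emph{narrowed} to width $\Theta(\Delta/2^i)$ per group so that the radii of all $O(\log\log n)$ groups fit inside one window of length $\Theta(\Delta)$). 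Second, the termination claim "when no group succeeds, the remaining graph must already satisfy the diameter bound" is simply false for directed graphs: all balls of radius $\Theta(\Delta)$ being small says nothing about weak diameter. In the paper nothing remains after the last group because by then every vertex is sampled (probability $1$); a diameter-$\le\Delta$ certificate is only ever produced by cutting an intersection $B^+(t,r)\cap B^-(t,r)$, and all other components make progress by edge-count reduction and are recursed on.

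Two further ingredients you are missing are essential rather than cosmetic. (i) The heavy-ball case: if some vertex's out-ball or in-ball of radius $\Theta(\Delta)$ contains, say, $90\%$ of the edges, then the carved piece is nearly the whole graph, the "piece of volume $m/2^{2^i}$ has only $O(\log n-2^i)$ remaining levels" accounting collapses, and the recursion depth (hence both the loss factor and the runtime) blows up. The paper spends a dedicated preprocessing phase (its Theorem on preprocessing, with the three randomized cases and the contracted super-source $t'$) to reduce, in $O((m+n\log\log n)\log\log n)$ expected time, to the situation where every vertex is both out-light and in-light, and this cannot be skipped. (ii) The runtime mechanism: once the sample sizes are corrected to $\Theta(2^{2^i}\log(1/\epsilon))$, "abort a ball when it exceeds the target volume" gives worst-case work $2^{2^i}\cdot m/2^{2^{i-1}}\gg m$ per group; the near-linear bound instead comes from running the ball-growings in the CKR random order with a Dijkstra speedup that refuses to enqueue a vertex $u$ from source $v$ whenever an earlier source $v'$ in the permutation satisfies $d(v',u)\le d(v,u)$, so that each vertex is enqueued only an expected harmonic-sum $O(2^i+\log\log(1/\epsilon))$ times per group, and $O(\log n\log\log n)$ times overall (combined with an $O(\log\log n)$-time priority queue). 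Without this overlap-avoidance argument the claimed $O((m+n\log\log n)\log n\log\log n)$ expected time does not follow.
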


Our algorithm follows the same template as~\cite{bernstein2025negative,bringmann2025near}: compute an out-ball or in-ball $B$, remove the edges cut by the ball, and recursively solve the induced graphs $G[B]$ and $G[V\setminus B]$. Observe that by removing all edges cut by $B$, the vertices in $B$ can no longer share strongly connected components with the vertices in $V\setminus B$. Therefore, if the algorithm recursively computes directed LDDs in $G[B]$ and $G[V\setminus B]$, which means deleting edges such that each strongly connected component has weak diameter at most $\Delta$, then deleting these same edges, along with the edges cut by the ball $B$, also produces strongly connected components with weak diameter at most $\Delta$ in the original graph $G$.

To bound the recursion depth and speed up the running time, the algorithms actually compute multiple balls in each recursive instance so that each recursive component has sufficiently fewer edges.

The algorithm of~\cite{bringmann2025near} adopts a specialized recursive structure colloquially known as Seymour's trick~\cite{seymour1995packing}, and we use the same general approach. At a high level, our algorithm has two main differences.

\paragraph{CKR instead of geometric ball-growing.}
The algorithm of~\cite{bringmann2025near} uses the geometric ball-growing technique: sample a random radius from a geometric distribution, and cut the appropriate out-ball or in-ball. The main benefit of the geometric distribution is its \emph{memoryless} property, which is crucial in bounding edge cut probabilities. On the other hand, to avoid losing another factor of $O(\log n)$ in the bound of $\ell(n)$, the parameters are set in a way that the probability that the radius grows too large is larger than $1/\textup{poly}(n)$, so it cannot be ignored with high probability and forces a delicate probability analysis.

Instead, we adopt an approach first used by Calinescu, Karloff, and Rabani~\cite{cualinescu2000improved} in the context of LP rounding for multi-way cut. Here, the radius is no longer geometric, but uniformly random in a given range, and each ball uses the same sampled radius. The twist is that CKR computes the balls in a \emph{random} order, which is key to their analysis.

Similar to \cite{bringmann2025near}, we cannot afford to grow balls from all (remaining) vertices, which would incur a factor of $O(\log n)$ in the bound of $\ell(n)$, so we also \emph{sample} a subset of vertices from which to grow balls. On each application of CKR, for a given edge $(u,v)$, the cut probability has a factor that is logarithmic in the number of sampled vertices whose balls can possibly cut $(u,v)$. This factor is carefully balanced with the sizes of the recursive instances by an application of Seymour's trick~\cite{seymour1995packing}.

\paragraph{Elimination of heavy vertices.}
In the procedure of cutting  balls, it is possible that the ball itself contains nearly all the vertices. In this case, the recursion depth can be large, affecting both the cut probability and running time. Our second contribution is a preprocessing step in $O((m+n\log\log n)\log\log n)$ expected time that reduces to the special case where all out-balls and in-balls contain at most 75\% of the edges. We remark that the threshold of 75\% is arbitrary and can be replaced by any constant, even less than 50\%. We expect this reduction to be useful in future developments, since it reduces the general problem to a setting that is easier to reason about.

The main difficulty in eliminating heavy vertices is that existing algorithms to even \emph{detect} heavy vertices run in $O((m+n\log\log n)\log n)$ time~\cite{bringmann2023negative}, by running single-source shortest paths from $O(\log n)$ sampled vertices. We cannot afford a logarithmic overhead on this step, since our recursion depth may also be logarithmic. Instead, we run shortest paths from only $O(\log\log n)$ sampled vertices, which may incorrectly label each vertex as light or heavy with probability $1/\text{polylog}(n)$. If the algorithm attempts to grow a ball from an incorrectly labeled light vertex, and discovers that it is actually heavy, then the algorithm simply gives up and restarts. Fortunately, we only try to grow balls from $O(\log n)$ vertices, so the overall probability of restarting is low.

The remaining details of the algorithm are technical, splitting into cases depending on whether the majority of vertices (by volume) are in-heavy, and whether the majority are out-heavy.



\subsection{The Main Algorithm}

Let $G=(V,E,w)$ be the input graph on $n$ vertices and $m$ edges, and let $\Delta$ be the diameter parameter. For vertex $t\in V$ and non-negative number $r$, define the \emph{out-ball} $B^+_G(t,r)=\{v\in V:d(t,v)\le r\}$ and \emph{in-ball} $B^-_G(t,r)=\{v\in V:d(v,t)\le r\}$. We say that an edge $(u,v)$ is \emph{cut} by out-ball $B^+_G(t,r)$ if $u\in B^+_G(t,r)$ and $v\notin B^+_G(t,r)$, and it is \emph{cut} by in-ball $B^-_G(t,r)$ if $v\in B^-_G(t,r)$ and $u\notin B^-_G(t,r)$.

Call a vertex \emph{out-light} in subgraph $H\subseteq G$ if $|E[B^+_H(v,\Delta/8)]|\le\frac34m$, and \emph{in-light} if\linebreak $|E[B^-_H(v,\Delta/8)]|\le\frac34m$. The algorithm begins with a preprocessing step where $V$ is replaced by a subset $U\subseteq V$ of vertices that are both out-light and in-light. This preprocessing step is technical, so we only state its guarantees in this section, deferring the algorithm to \Cref{sec:2}.

\begin{restatable}[Preprocessing]{theorem}{Preprocessing}\label{thm:2.2}
Let $G=(V,E,w)$ be the input graph on $n$ vertices and $m$ edges, let $\Delta$ be the diameter parameter, and let $\epsilon$ be the error parameter. There is an algorithm that runs in $O((m+n\log\log n)\log\log(1/\epsilon))$ expected time and cuts a random subset $S\subseteq E$ of edges such that
 \begin{enumerate}
 \item Each edge joins $S$ with probability at most $O(\frac{\log\log(1/\epsilon)}\Delta)$.
 \item The algorithm correctly labels each strongly connected component of $G-S$ by one of the following:\label{item:2}
  \begin{enumerate}
  \item The component has at most $\frac34m$ edges.\label{item:a}
  \item The component has weak diameter at most $\Delta$\label{item:b}.
  \item All vertices in the component are both out-light and in-light in $G[U]$, where $U$ is the union of all components with this label. This case is only correct with probability at least $1-2m\epsilon$.\label{item:c}
  \end{enumerate}
 \end{enumerate}
\end{restatable}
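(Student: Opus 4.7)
The plan is to design a $k = O(\log\log(1/\epsilon))$-round elimination procedure that progressively certifies vertices as light. Initialize $H \leftarrow G$. In round $i$, sample a set $T_i \subseteq V(H)$ of $s_i$ vertices uniformly by volume, with the $s_i$ scheduled so that $s_k = \Theta(\log(1/\epsilon))$; for each $t \in T_i$, run truncated forward and backward Dijkstra in $H$ to radius $\Delta/8$, aborting as soon as either exploration exceeds $\tfrac34 m$ edges (where $m = |E(G)|$). If an abort fires at some $t$, pick a random radius in $[\Delta/8,\Delta/4]$ and cut the corresponding ball of $t$ from $H$: the outside has $<m/4$ edges and receives label (a), while the inside---refined to the intersection of the in- and out-ball of $t$ so its weak diameter in $G$ is at most $\Delta/2$---receives label (b). Since each abort removes at least $\tfrac34 m$ edges, only $O(1)$ aborts occur in total, and once $|E(H)| \le \tfrac34 m$ every vertex is automatically light, so the remaining rounds focus purely on certifying this.

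For correctness of case (c): after round $k$, if some vertex $v$ still in the final set $U = V(H)$ had $|E[B_{G[U]}^+(v,\Delta/8)]| > \tfrac34 m$, then each volume-weighted sample in round $k$ lands in this ball with probability at least $\tfrac34$, so the probability that none of the $s_k = \Theta(\log(1/\epsilon))$ samples detects it is at most $4^{-s_k} \le \epsilon$. A union bound over the $m$ edges gives the stated $\le 2m\epsilon$ failure probability; symmetric reasoning handles in-balls. Labels (a) and (b) are deterministic by construction. For the per-edge cut probability, I would use CKR-style random-radius selection (as in Calinescu--Karloff--Rabani) with $\Theta(\log(1/\epsilon))$ ball centers in the final round, which gives per-edge cut probability $O(\log\log(1/\epsilon)/\Delta)$; combined with the $O(1)$ earlier cuts each contributing $O(1/\Delta)$, this meets the stated bound.

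Running time. Every non-aborting truncated Dijkstra in a round explores at most $\tfrac34 m$ edges, and every aborting one is followed by deleting at least $\tfrac34 m$ edges from $H$. Using the shared-source shortest-path techniques of Bringmann--Cassis--Fischer, the $s_i$ simultaneous truncated Dijkstras in a round can be executed in $O(m + n\log\log n)$ total time, which summed over $k = O(\log\log(1/\epsilon))$ rounds yields the target $O((m + n\log\log n)\log\log(1/\epsilon))$ expected running time.

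The main obstacle I anticipate is gracefully handling samples that early rounds incorrectly classify as light and that are only later revealed to be heavy. On such a discovery, the algorithm must either charge the extra exploration work to the $\tfrac34 m$ edges about to be deleted, or restart the round; the latter must be argued to be rare enough---at most $1/\text{polylog}(n)$ per sample, by the Chernoff concentration above---not to inflate either the expected running time or the per-edge cut probability beyond the stated bounds. A secondary subtlety is ensuring that the ``inside'' piece after a cut truly has weak diameter $\le \Delta$ in $G$ (not merely in $H$), which is why I intersect the in- and out-ball of the aborting sample rather than simply cutting on one side.
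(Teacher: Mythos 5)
There is a genuine gap, and it is exactly the asymmetric-heavy situation that the paper's three-case structure is built to handle. When your truncated exploration from a sample $t$ aborts because (say) the \emph{out}-ball of $t$ exceeds $\frac34m$ edges, nothing forces the \emph{in}-ball of $t$ to be large. Cutting around $t$ and labeling ``outside $= V\setminus B^+(t,r)$'' as (a) and ``inside, refined to $B^+(t,r)\cap B^-(t,r)$'' as (b) leaves the third piece $B^+(t,r)\setminus B^-(t,r)$ unaccounted for: it can contain nearly all of the edges (so it cannot be labeled (a)), it has no diameter guarantee (so not (b)), and its vertices have not been certified light (so not (c)). Consequently your claim that ``each abort removes at least $\frac34m$ edges, so only $O(1)$ aborts occur'' is unfounded, and the whole fixed-round schedule with $O(1)$ aborts collapses. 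A related hole appears in your final CKR round: a sampled center's in-ball (the piece you cut out and implicitly label (a)) can itself have more than $\frac34m$ edges, and you have no mechanism ruling this out. The paper needs real machinery for both problems: it first tests (implicitly, by randomizing over three cases weighted by $1/(2\ln\ln(1/\epsilon))$) whether the vertices with large out-balls, respectively in-balls, have total volume at least $\frac54m$; in the symmetric-heavy case a single random center works because with constant probability it is heavy in \emph{both} directions, and the one remaining big component is \emph{recursed} upon (expected $O(\log\log(1/\epsilon))$ recursion depth, not a fixed round count); in the asymmetric case (Case 2/3) it first contracts $O(\log\log(1/\epsilon))$ sampled non-heavy vertices into a super-source and cuts its ball to eliminate all vertices whose opposite-direction ball is huge (Lemma 2.6), and only then runs two nested-radius CKR passes (radii $[\Delta/6,\Delta/4]$ then $[\Delta/8,\Delta/6]$), with an explicit ``give up and recurse on $G[U_1]$'' fallback when a cut ball turns out too large (Lemma 2.9). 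None of these ingredients has a counterpart in your proposal, and without them the labeling guarantee in property (2) fails in the one-sided heavy regime.

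A secondary issue: even where your plan applies, the accounting of the per-edge cut probability and running time relies on the abort count being $O(1)$ and on every cut piece being small, both of which fail for the reason above; the paper instead gets the $O(\log\log(1/\epsilon)/\Delta)$ cut bound by making each recursive call cost only $O(1/\Delta)$ per edge (the expensive Cases 2/3 are invoked with probability $\Theta(1/\log\log(1/\epsilon))$) and multiplying by the expected $O(\log\log(1/\epsilon))$ number of calls, an argument you would need to reconstruct once the recursion replaces your fixed rounds. Your concentration argument for label (c) and the intersection-of-balls diameter observation are fine as far as they go, but they are the easy parts of the theorem.
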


Given this preprocessing step, we now describe the main algorithm. At a high level, the algorithm is recursive, cutting out-balls and in-balls of a random radius for a number of iterations. The algorithm is simple enough to state completely, so we present it in pseudocode below.
\begin{algo} \label{algo:1} Let $G=(V,E,w)$ be the input graph on $n$ vertices and $m$ edges, let $\Delta$ be the diameter parameter, and let $\epsilon$ be the error parameter.
\begin{enumerate}
\item Run the preprocessing step (\Cref{thm:2.2}) and recursively solve all components labeled~(\ref{item:a}). Ignore all components labeled~(\ref{item:b}), since they have the target weak diameter. Let $U\subseteq V$ be the union of all components labeled~(\ref{item:c}), which we now process. Throughout the algorithm, $U$ is the set of remaining vertices.\label{step:1}
\item Let $L=\lceil\lg\lg m\rceil$, and define the decreasing sequence $a_0=\Delta/8$ and $a_i=a_{i-1}-\frac\Delta{16\min\{L,\,2^i\}}$ for $i\in\{1,2,\ldots,L\}$.
\item For $i\in\{1,2,\ldots,L\}$ in increasing order:\label{item:for}
 \begin{enumerate}
 \item Let $U_i$ be a snapshot of the current set $U$ of remaining vertices.
 \item Sample each vertex $v\in U_i$ with probability $\min\{1,\, 2^{2^i}\ln(1/\epsilon)\cdot\frac{\deg(v)}{2m}\}$. Let $S_i$ be the random sample.\label{step:a}
 \item Sample a random radius $r_i\in[a_i,a_{i-1}]$.
 \item For each vertex $v\in S_i$ in a random order:\label{step:c}
  \begin{enumerate}
  \item Cut the out-ball $B^+_{G[U_i]}(v,r_i)$ from the remaining graph $G[U]$. Recursively solve\linebreak $G[B^+_{G[U_i]}(v,r_i)\cap U]$, and then set $U\gets U\setminus B^+_{G[U_i]}(v,r_i)$.
  \item Cut the in-ball $B^-_{G[U_i]}(v,r_i)$ from the remaining graph $G[U]$. Recursively solve\linebreak $G[B^-_{G[U_i]}(v,r_i)\cap U]$, and then set $U\gets U\setminus B^-_{G[U_i]}(v,r_i)$.
  \end{enumerate}
 \end{enumerate}
\end{enumerate}
\end{algo}
\begin{remark}
The sequence $a_0,a_1,\ldots,a_L$ is non-negative since
\[ \sum_{i=1}^L\frac\Delta{16\min\{L,\,2^i\}}\le\sum_{i=1}^L\left(\frac\Delta{16L}+\frac\Delta{16\cdot2^i}\right)=\frac1{16}\bigg(\sum_{i=1}^L\frac\Delta{L}+\sum_{i=1}^L\frac\Delta{2^i}\bigg)\le\frac1{16}(\Delta+\Delta)=\frac\Delta8=a_0 ,\]
so the algorithm is computing balls of non-negative radius.
\end{remark}

\begin{lemma}\label{lem:2}
At the beginning of each iteration $i>1$, with probability at least $1-2m\epsilon$, each vertex $u\in U_i$ satisfies $\textup{vol}(B^\pm_{G[U_i]}(u,a_{i-1})\cap U_i)\le 2m/2^{2^{i-1}}$.
\end{lemma}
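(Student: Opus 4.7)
The plan is to reduce the claim to a per-vertex event and close by a union bound. Condition on all randomness from iterations $1,\ldots,i-2$, so that the snapshot $U_{i-1}$ is determined and only the iteration-$(i-1)$ sample $S_{i-1}$ and radius $r_{i-1}$ remain random. The first ingredient is a graph-inclusion monotonicity: since $U_i\subseteq U_{i-1}$ and every path in $G[U_i]$ is also a path in $G[U_{i-1}]$, we have
\[
B^\pm_{G[U_i]}(u,a_{i-1})\cap U_i\ \subseteq\ B^\pm_{G[U_{i-1}]}(u,a_{i-1})\cap U_{i-1}
\]
for every $u\in U_i$. Hence it suffices to upper-bound the volumes of the snapshot balls $B^\pm_{G[U_{i-1}]}(u,a_{i-1})\cap U_{i-1}$ by $2m/2^{2^{i-1}}$ for every $u$ that ends up in $U_i$.

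Fix $u\in U_{i-1}$ and focus on the out-ball (the in-ball case is symmetric). Suppose $\textup{vol}(B^+_{G[U_{i-1}]}(u,a_{i-1})\cap U_{i-1})>2m/2^{2^{i-1}}$. Every $v$ in this ball satisfies $d(u,v)\le a_{i-1}\le r_{i-1}$, so $u\in B^-_{G[U_{i-1}]}(v,r_{i-1})$; consequently, as soon as any such $v$ is drawn into $S_{i-1}$, the algorithm removes $u$ when it processes $v$'s in-ball, forcing $u\notin U_i$. Each $v$ in the ball is sampled independently with probability $p_v=\min\{1,\,2^{2^{i-1}}\ln(1/\epsilon)\deg(v)/(2m)\}$ (any $p_v=1$ only helps), so the probability that no such $v$ is sampled is at most
\[
\prod_v(1-p_v)\ \le\ \exp\!\Big(-\sum_v p_v\Big)\ \le\ \exp\!\Big(-2^{2^{i-1}}\ln(1/\epsilon)\cdot\frac{2m/2^{2^{i-1}}}{2m}\Big)\ =\ \epsilon.
\]
Therefore $\Pr[u\in U_i\text{ and the out-ball bound fails for }u]\le\epsilon$, and a union bound over $u\in U_{i-1}$ and over the two ball directions yields the advertised failure probability of at most $2m\epsilon$.

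The main obstacle is careful bookkeeping rather than technical depth: setting up the conditioning so that $U_{i-1}$ and the associated snapshot balls are fixed before iteration $(i-1)$'s randomness is revealed, and explicitly justifying the inclusion that reduces balls in $G[U_i]$ to balls in $G[U_{i-1}]$. Once those pieces are in place, the cutting argument is immediate from $r_{i-1}\ge a_{i-1}$ and the failure bound follows from the one-line sampling calculation above.
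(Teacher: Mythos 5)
Your proof is correct and follows essentially the same route as the paper's: show that any vertex whose snapshot ball in $G[U_{i-1}]$ of radius $a_{i-1}$ is heavy gets hit by a sampled vertex (and hence removed) during iteration $i-1$ except with probability $\epsilon$, union bound over vertices and the two ball directions, and use $U_i\subseteq U_{i-1}$ to transfer the bound to balls in $G[U_i]$. The only differences are cosmetic (you make the conditioning and the monotonicity inclusion explicit, and you treat out-balls where the paper writes out the in-ball case).
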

\begin{proof}
For simplicity, we only consider the case of in-balls.  Consider a vertex $u\in U_{i-1}$ with\linebreak $\textup{vol}(B^-_{G[U_{i-1}]}(u,a_{i-1})\cap U_{i-1})>2m/2^{2^{i-1}}$ at the beginning of iteration $i-1$. Our goal is to show that $u$ is almost certainly removed from $U$ on iteration $i-1$. On that iteration, each vertex in $U_{i-1}$ is sampled with probability $\min\{1,\, 2^{2^{i-1}}\ln(1/\epsilon)\cdot\frac{\deg(v)}{2m}\}$, so the probability of sampling no vertex in $B^-_{G[U_{i-1}]}(u,a_{i-1})\cap U_{i-1}$ is
\begin{align*}
&\prod_{v\in B^-_{G[U_{i-1}]}(u,a_{i-1})\cap U_{i-1}}\left(1-\min\left\{1,\, 2^{2^{i-1}}\ln(1/\epsilon)\cdot\frac{\deg(v)}{2m}\right\}\right)
\\\le{}&\prod_{v\in B^-_{G[U_{i-1}]}(u,a_{i-1})\cap U_{i-1}}\exp\left(- 2^{2^{i-1}}\ln(1/\epsilon)\cdot\frac{\deg(v)}{2m}\right)
\\={}&\exp\left(- 2^{2^{i-1}}\ln(1/\epsilon)\cdot\frac{\textup{vol}(B^-_{G[U_{i-1}]}(u,a_{i-1})\cap U_{i-1})}{2m}\right)
\\\le{}&\exp\bigg(- 2^{2^{i-1}}\ln(1/\epsilon)\cdot\frac{2m/2^{2^{i-1}}}{2m}\bigg)
\\={}&\exp(-\ln(1/\epsilon))
\\={}&\epsilon.
\end{align*}
That is, with probability at least $1-\epsilon$, at least one vertex $v\in B^-_{G[U_{i-1}]}(u,a_{i-1})$ is sampled on iteration $i-1$. In this case, since $r_{i-1}\ge a_{i-1}$, we also have $v\in B^-_{G[U_{i-1}]}(u,r_{i-1})$, or equivalently $u\in B^+_{G[U_{i-1}]}(v,r_{i-1})$, so vertex $u$ is removed from $U$ when cutting $B^+_{G[U_{i-1}]}(v,r_{i-1})$ (if not earlier). By a union bound over all $u\in U_{i-1}$, with probability at least $1-m\epsilon$ the same holds for all $u\in U_{i-1}$ with $\textup{vol}(B^-_{G[U_{i-1}]}(u,a_{i-1})\cap U_{i-1})\ge 2m/2^{2^{i-1}}$. In this case, the only vertices that remain in $U$ after iteration $i-1$ must have $\textup{vol}(B^-_{G[U_{i-1}]}(u,a_{i-1})\cap U_{i-1})\le 2m/2^{2^{i-1}}$. Since $U_i\subseteq U_{i-1}$, we also have $\textup{vol}(B^-_{G[U_i]}(u,a_{i-1})\cap U_i)\le 2m/2^{2^{i-1}}$, as promised.

The proof for out-balls $\textup{vol}(B^+_{G[U_i]}(u,a_{i-1})\cap U_i)\le 2m/2^{2^{i-1}}$ is symmetric, and we union bound over both cases for a final probability of $1-2m\epsilon$.
\end{proof}
\begin{lemma}\label{lem:4}
Consider an edge $(u,v)$ with $u,v\in U_i$ at the beginning of iteration $i$. Then, edge $(u,v)$ is cut on iteration $i$ with probability $O(\frac{2^i\log\log(1/\epsilon)}{\Delta})\cdot w(u,v)$.
\end{lemma}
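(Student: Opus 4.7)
The plan is to run a CKR-style charging argument adapted to the subsampling of step~\ref{step:a} and the narrow radius range $[a_i,a_{i-1}]$. For each $t\in U_i$, I will work with the touching distance
\[ \rho(t)=\min\bigl(d(t,u),\,d(u,t),\,d(t,v),\,d(v,t)\bigr), \]
the smallest radius at which any ball of $t$ (out or in) contains $u$ or $v$. Since $r_i\le a_{i-1}$, only vertices with $\rho(t)\le a_{i-1}$ can affect $\{u,v\}$ on iteration $i$; I will enumerate these as $t_1,\ldots,t_K$ in increasing order of $\rho$.

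I will then charge any cut to the first sampled $t_k$ (in the random order of step~\ref{step:c}) whose attempt touches $\{u,v\}$; the edge is cut iff that attempt cuts rather than engulfs $\{u,v\}$. For $t_k$ to play this role, three events must occur: (i) one of $t_k$'s two balls cuts $(u,v)$, an event whose probability over $r_i$ is at most $2w(u,v)/A_i$ where $A_i=a_{i-1}-a_i=\Delta/(16\min\{L,2^i\})$ (each of the two cutting intervals has width $\le w(u,v)$); (ii) $t_k\in S_i$; and (iii) $t_k$ comes before every $t_j\in S_i$ with $j<k$ in the random order. Condition (iii) is necessary because such a $t_j$ has $\rho(t_j)\le\rho(t_k)\le r_i$, so $t_j$'s attempt also touches $\{u,v\}$ and, if processed first, would settle the outcome.

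Using independence of $r_i$, $S_i$, and the random order, the summands factor, giving
\[ \Pr[(u,v)\text{ cut on iteration }i]\le\frac{2w(u,v)}{A_i}\cdot\sum_{k=1}^K\Pr\!\bigl[t_k\in S_i\text{ and first in random order in }\{t_1,\ldots,t_k\}\cap S_i\bigr]. \]
Conditioning on $S_i$ and enumerating the sampled vertices in sorted order, the $j$-th sampled vertex is first-in-order among the first $j$ with probability exactly $1/j$, so the inner sum telescopes to $\mathbb{E}[H_M]$ with $M=|S_i\cap\{t_1,\ldots,t_K\}|$, which by concavity is $O(\log(1+\mathbb{E}[M]))$. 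To bound $\mathbb{E}[M]$, I drop the cap at $1$ in the sampling probability and apply subadditivity of volume:
\[ \mathbb{E}[M]\le\frac{2^{2^i}\ln(1/\epsilon)}{2m}\sum_{x\in\{u,v\}}\Bigl(\textup{vol}\bigl(B^+_{G[U_i]}(x,a_{i-1})\cap U_i\bigr)+\textup{vol}\bigl(B^-_{G[U_i]}(x,a_{i-1})\cap U_i\bigr)\Bigr). \]
For $i>1$, \Cref{lem:2} bounds each of the four volumes by $2m/2^{2^{i-1}}$, giving $\mathbb{E}[M]=O(2^{2^{i-1}}\ln(1/\epsilon))$; for $i=1$, the trivial bound $\textup{vol}(\cdot)\le 2m$ suffices. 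In either case $\log(1+\mathbb{E}[M])=O(2^i+\log\log(1/\epsilon))$. Multiplying by the radius factor $O(\min\{L,2^i\}\,w(u,v)/\Delta)$ and using $L=O(\log\log(1/\epsilon))$, a short case split on whether $2^i\le L$ turns the product into the target $O(2^i\log\log(1/\epsilon)\cdot w(u,v)/\Delta)$.

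The hardest step will be making (iii) airtight: the sufficient event ``$t_k$ first in random order in $\{t_1,\ldots,t_k\}\cap S_i$'' is exactly what the harmonic-sum telescoping needs, and verifying that it is implied by the true ``first-to-touch'' event requires the two-sided sorting key $\rho$ together with careful handling of the interleaved out/in-ball processing inside step~\ref{step:c}. A minor bookkeeping issue is that the analysis conditions on the high-probability event of \Cref{lem:2}, whose $2m\epsilon$ failure probability contributes a lower-order additive term absorbed by the global choice of $\epsilon$.
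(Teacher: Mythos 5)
Your proposal is correct and is essentially the paper's proof: the same CKR-style charging (radius falls in an interval of width $O(w(u,v))$ out of $a_{i-1}-a_i=\frac{\Delta}{16\min\{L,2^i\}}$ choices, times $1/j$ for being first in the random order among closer sampled centers), the same harmonic-sum-plus-Jensen bound $O(\log\mathbb{E}[\text{sample size}])=O(2^i+\log\log(1/\epsilon))$, and the same final $\min\{a,b\}(a+b)\le 2ab$ balancing; the only structural difference is your unified two-sided key $\rho$ in place of the paper's separate (and symmetric, doubled) out-ball/in-ball analyses, and your justification of condition (iii) does go through since any earlier-processed sampled $t_j$ with $\rho(t_j)\le r_i$ removes at least one endpoint from $U$, after which the edge can no longer join $S$. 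One small wrinkle: invoking \Cref{lem:2} for $i>1$ is unnecessary and is what forces you to condition on its success, producing an additive failure term that is not in the lemma statement; the trivial bound $\mathbb{E}[M]\le 2^{2^i}\ln(1/\epsilon)$ (which the paper uses, and which you already use for $i=1$) gives $\log(1+\mathbb{E}[M])=O(2^i+\log\log(1/\epsilon))$ for every $i$ and yields the clean unconditional bound.
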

\begin{proof}
For simplicity, we only consider out-balls $B^+_{G[U_i]}(t,r_i)$.  Let random variable $k$ be the number of sampled vertices in $U_i$ at the beginning of iteration $i$. The expected value of $k$ is
\begin{align*}
\sum_{v\in U_i}\min\left\{1,\, 2^{2^i}\ln(1/\epsilon)\cdot\frac{\deg(v)}{2m}\right\}&\le 2^{2^i}\ln(1/\epsilon)\cdot\frac{\textup{vol}( U)}{2m}
\\&\le 2^{2^i}\ln(1/\epsilon)\cdot\frac{\textup{vol}(V)}{2m}
\\&= 2^{2^i}\ln(1/\epsilon). 
\end{align*}

Given the value of $k$, order the sampled vertices $t_1,\ldots,t_k$ by value of $d(t_j,u)$, with ties broken arbitrarily. Recall that the algorithm iterates over $t_1,\ldots,t_k$ in a random order. In order for ball $B^+_{G[U_i]}(t_j,r_i)$ to cut edge $(u,v)$, two conditions must hold:
 \begin{enumerate}
 \item The radius $r_i$ must satisfy $d(t_j,u)\le r_i<d(t_j,u)+w(u,v)$, since this holds whenever $u\in B^+_{G[U_i]}(t_j,r_i)$ and $v\notin B^+_{G[U_i]}(t_j,r_i)$. There are $a_{i-1}-a_i=\frac\Delta{16\min\{L,\,2^i\}}$ many choices of $r_i$, so this happens with probability at most $\frac{16\min\{L,\,2^i\}}{\Delta}\cdot w(u,v)$.
 \item The vertex $t_j$ must precede $t_1,\ldots,t_{j-1}$ in the ordering, since otherwise another ball would claim $u$ first. Since the ordering is random, this happens with probability $1/j$.
 \end{enumerate}
The two events above are independent, so the probability that ball $B^+_{G[U_i]}(t_j,r_i)$ cuts edge $(u,v)$ is at most $\frac{16\min\{L,\,2^i\}}\Delta\cdot w(u,v)\cdot\frac1j$. Summing over all $j$, the overall probability that edge $(u,v)$ is cut by a ball $B^+_{G[U_i]}(t_j,r_i)$ is at most
\[ \frac{16\min\{L,\,2^i\}}\Delta\cdot\left(\frac11+\frac12+\cdots+\frac1k\right)\cdot w(u,v)\le \frac{16\min\{L,\,2^i\}}\Delta\cdot(\ln k+1)\cdot w(u,v) .\]
Taking the expectation over $k$, and using the fact that $\ln k$ is concave, the overall probability is at most
\begin{align*}
&\frac{16\min\{L,\,2^i\}}\Delta\cdot(\mathop{\mathbb E}[\ln k]+1)\cdot w(u,v)
\\\le{}&\frac{16\min\{L,\,2^i\}}\Delta\cdot(\ln\mathop{\mathbb E}[k]+1)\cdot w(u,v)
\\\le{}&\frac{16\min\{L,\,2^i\}}\Delta\cdot(\ln(2^{2^i}\ln(1/\epsilon))+1)\cdot w(u,v)
\\\le{}&O\left(\frac{\min\{\log\log(1/\epsilon),\,2^i\}\cdot(2^i+\log\log(1/\epsilon))}{\Delta}\right)\cdot w(u,v)
\\\le{}&O\left(\frac{2^i\log\log(1/\epsilon)}\Delta\right)\cdot w(u,v) ,
\end{align*}
where the last step uses the inequality $\min\{a,b\}\cdot(a+b)\le\min\{a,b\}\cdot2\max\{a,b\}=2ab$.

Finally, the analysis is symmetric for in-balls, so the overall cut probability at most doubles.
\end{proof}

\begin{lemma}\label{lem:5}
Over the entire recursive algorithm, each edge $(u,v)$ is cut with probability $O(\frac{\log m\log\log(1/\epsilon)}\Delta)\cdot w(u,v)+O(m^3\epsilon)$.
\end{lemma}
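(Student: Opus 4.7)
The plan is to follow the edge $(u,v)$ down its random path in the recursion tree and sum per-iteration cut probabilities using a Seymour-style telescoping that ties the $O(2^i)$ cost of iteration $i$ to the $\Omega(2^{i-1})$ log-size reduction it forces. First I would define a \emph{good event} $E$: at every recursive call in the algorithm, both the preprocessing guarantee of \Cref{thm:2.2}(\ref{item:c}) and the volume bound of \Cref{lem:2} hold. Across the recursion tree the total of $m'$ over all calls is $O(m\log m)$, and each call contributes failure probability $O(m'\log\log m' \cdot \epsilon)$, so a union bound yields $\Pr[E^c] = O(m^3\epsilon)$, supplying the additive error term.

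Conditional on $E$, consider the sequence $m_0 > m_1 > \cdots$ of subgraph sizes along the edge's path together with the exit iteration $i_j^*$ at each level $j$ (the iteration at which the edge is cut, falls inside a ball, or $i_j^* = L$ if it survives all iterations). Two ingredients are needed: \emph{(i)} if the edge enters a ball at iteration $i_j^*\ge 2$, then \Cref{lem:2} gives $\delta_j := \log(m_j/m_{j+1}) \ge 2^{i_j^*-1}$; if $i_j^* = 1$, the out-light/in-light property from preprocessing forces $\delta_j \ge \log(4/3)$. In both cases $2^{i_j^*} = O(\delta_j + 1)$. \emph{(ii)} By \Cref{lem:4}, the cut probability at level $j$, iteration $i$, conditional on reaching it, is $O(\tfrac{2^i \log\log(1/\epsilon)}{\Delta}) \cdot w(u,v)$, and preprocessing at level $j$ adds $O(\tfrac{\log\log(1/\epsilon)}{\Delta})$.

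Union-bounding over all (level, iteration) pairs the edge visits and taking expectation gives
\[
\Pr[\text{cut}\mid E] \le O\!\left(\tfrac{\log\log(1/\epsilon)}{\Delta}\right)\cdot \mathbb{E}[\text{depth}\mid E] + O\!\left(\tfrac{\log\log(1/\epsilon)}{\Delta}\right)\cdot w(u,v)\cdot \mathbb{E}\!\left[\sum_j 2^{i_j^*+1} \,\middle|\, E\right].
\]
By (i), $\sum_j 2^{i_j^*}$ telescopes: each non-terminal level contributes $O(\delta_j + 1)$ with $\sum_j \delta_j \le \log m$ and depth $= O(\log m)$ (since every $\delta_j$ is bounded below by a positive constant), while the terminal level contributes at most $2^{\lceil\lg\lg m_j\rceil} = O(\log m)$. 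Hence $\sum_j 2^{i_j^*} = O(\log m)$, and absorbing the preprocessing term using $w(u,v)\ge 1$ yields $\Pr[\text{cut}\mid E] = O(\tfrac{\log m \log\log(1/\epsilon)}{\Delta}) \cdot w(u,v)$. Combining with $\Pr[E^c]$ completes the bound.

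The main obstacle is this Seymour-style balancing: the naive approach of bounding the per-call cut probability by $O(\tfrac{\log m_j}{\Delta}) \cdot w(u,v)$ and summing over the $O(\log m)$ recursion levels would yield a spurious $\log^2 m$ factor. Avoiding it requires pairing the $O(2^{i_j^*})$ cost at each level with the $\Omega(2^{i_j^*-1})$ guaranteed log-size reduction from \Cref{lem:2} (resp.\ from preprocessing when $i_j^* = 1$), so the per-level contributions telescope against the single log-size budget $\log m$ instead of being summed against the depth.
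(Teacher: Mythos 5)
Your proposal is correct and takes essentially the same route as the paper: both combine the per-iteration cut bound of \Cref{lem:4} with the size reductions guaranteed by \Cref{thm:2.2} and \Cref{lem:2}, balance the $O(2^i)$ cost of exiting at iteration $i$ against the $2^{2^{i-1}}$ drop in edge count (Seymour-style), and push all failure events into an additive $O(m^3\epsilon)$ union bound; the paper merely packages this amortization as the recurrence $f(m)\le\max_i\big(O(2^{i}\log\log(1/\epsilon))/\Delta+f(1.5m/2^{2^{i-1}})\big)$ solved by induction, which you unroll into an equivalent pathwise telescoping sum. One small fix: the preprocessing term should carry the factor $w(u,v)$ directly from \Cref{thm:2.2}, so the appeal to $w(u,v)\ge1$ is unnecessary (and weight-zero edges are then handled correctly).
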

\begin{proof}
By \Cref{thm:2.2}, each edge $(u,v)$ is cut with probability $O(\frac{\log\log(1/\epsilon)}\Delta)\cdot w(u,v)$ in the preprocessing step, and since the recursion depth is $O(\log m)$, the total probability of being cut by preprocessing is $O(\frac{\log m\log\log(1/\epsilon)}\Delta)\cdot w(u,v)$. For the rest of the proof, we focus on the probability of cutting an edge after preprocessing.

Consider a function $f(m)$ such that any edge $(u,v)$ is cut with probability at most $f(m)\cdot w(u,v)$ on any graph with at most $m$ edges (after preprocessing). Consider an instance of the recursive algorithm. For each $i\in\{1,2,\ldots,L\}$, let $p_i$ be the probability that $u,v\in U_i$. For each $i$, by \Cref{lem:4}, a fixed edge $(u,v)$ is cut on iteration $i$ with probability at most $p_i\cdot\frac{C\cdot2^i\log\log(1/\epsilon)}{\Delta}\cdot w(u,v)$ for some constant $C>0$. By the definition of $p_i$, the probability that $u,v\in U_i$ holds and $u,v\in U_{i+1}$ does \emph{not} hold is exactly $p_i-p_{i+1}$, where we define $p_{L+1}=0$. (Note that all vertices leave $U$ after iteration $L$ since every vertex is sampled on that iteration.) In particular, with probability at most $p_i-p_{i+1}$, both $u$ and $v$ are cut out by some common ball $B^\pm_{G[U_i]}$ on iteration $i$, which opens the possibility that the edge is cut recursively. If $i=1$, then \Cref{thm:2.2} holds with probability at least $1-2m\epsilon$, in which case the recursive instance $G[B^\pm_{G[U_i]}\cap U]$ has at most $\frac34m=1.5m/2^{2^{i-1}}$ edges since all vertices are out-light and in-light. If $i>1$, then \Cref{lem:2} holds on iteration $i$ with probability at least $1-2m\epsilon$, in which case $|E[B^\pm_{G[U_i]}\cap U]|\le\textup{vol}(B^\pm_{G[U_i]}\cap U)/2\le m/2^{2^{i-1}}\le1.5m/2^{2^{i-1}}$. Summing over all $i$, and ignoring the $2m\epsilon$ probabilities of failure for now, we can bound the cut probability divided by $w(u,v)$ by
\begin{align*}
&\sum_{i=1}^L\left( p_i\cdot\frac{C\cdot2^i\log\log(1/\epsilon)}{\Delta}+(p_i-p_{i+1})f(1.5m/2^{2^{i-1}})\right)
\\={}&\sum_{i=1}^L\left(\big((p_i-p_{i+1})+(p_{i+1}-p_{i+2})+\cdots+(p_L-p_{L+1})\big)\cdot\frac{C\cdot2^i\log\log(1/\epsilon)}{\Delta}+(p_i-p_{i+1})f(1.5m/2^{2^{i-1}})\right)
\\={}&\sum_{i=1}^L(p_i-p_{i+1})\cdot\left(\frac{C\cdot(2^1+2^2+\cdots+2^i)\log\log(1/\epsilon)}{\Delta}+f(1.5m/2^{2^{i-1}})\right)
\\\le{}&\sum_{i=1}^L(p_i-p_{i+1})\cdot\left(\frac{C\cdot2^{i+1}\log\log(1/\epsilon)}{\Delta}+f(1.5m/2^{2^{i-1}})\right).
\end{align*}
Since the values $(p_i-p_{i+1})$ sum to $1$, the summation is a convex combination of the terms $\big(\frac{C\cdot2^{i+1}\log\log(1/\epsilon)}{\Delta}+f(1.5m/2^{2^{i-1}})\big)$, so we can bound the summation by the maximum such term. In conclusion, we have the recursive bound
\[ f(m)\le\max_{i=1}^L\left(\frac{C\cdot2^{i+1}\log\log(1/\epsilon)}{\Delta}+f(1.5m/2^{2^{i-1}})\right) .\]
We now solve the recurrence by $f(m)\le\frac{18C\lg m\log\log(1/\epsilon)}\Delta$. By induction on $m$, we bound each term in the maximum above by
\begin{align*}
&\frac{C\cdot2^{i+1}\log\log(1/\epsilon)}{\Delta}+f(1.5m/2^{2^{i-1}})
\\\le{}&\frac{C\cdot2^{i+1}\log\log(1/\epsilon)}{\Delta}+\frac{18C\lg(1.5m/2^{2^{i-1}})\log\log(1/\epsilon)}\Delta
\\={}&\frac{2C\cdot2^i\log\log(1/\epsilon)}{\Delta}+\frac{18C(\lg m+\lg 1.5-2^{i-1})\log\log(1/\epsilon)}\Delta
\\\le{}&\frac{2C\cdot2^i\log\log(1/\epsilon)}{\Delta}+\frac{18C(\lg m-\frac13\cdot2^{i-1})\log\log(1/\epsilon)}\Delta
\\={}&\frac{18C\lg m\log\log(1/\epsilon)}\Delta-\frac{C\cdot2^i\log\log(1/\epsilon)}\Delta,
\end{align*}
and therefore
\begin{align*}
f(m)&\le\max_{i=1}^L\left(\frac{C\cdot2^{i+1}\log\log(1/\epsilon)}{\Delta}+f(1.5m/2^{2^{i-1}})\right)
\\&\le\max_{i=1}^L\left(\frac{18C\lg m\log\log(1/\epsilon)}\Delta-\frac{C\cdot2^i\log\log(1/\epsilon)}\Delta\right)
\\&\le\frac{18C\lg m\log\log(1/\epsilon)}\Delta ,
\end{align*}
completing the induction.

Finally, we consider the probability of failing either \Cref{thm:2.2} or \Cref{lem:2} on any recursive instance. There are at most $m$ recursive instances, each with $L$ iterations, so the overall failure probability is at most $Lm\cdot O(m\epsilon)\le O(m^3\epsilon)$.
\end{proof}

\begin{lemma}\label{lem:6}
Assume that $\epsilon\le m^{-5}$. The algorithm can be implemented in $O((m+n\log\log n)\log n\log\log n)$ expected time on graphs with integral and polynomially-bounded edge weights.
\end{lemma}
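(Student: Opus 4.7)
The plan is to bound the work at each recursive instance (excluding the recursive calls) by $O((m' + n'\log\log n)\log\log n)$, show that the recursion has depth $O(\log n)$, and observe that the instances at any fixed recursion level are vertex- and edge-disjoint. Multiplying the per-level cost $O((m + n\log\log n)\log\log n)$ by $O(\log n)$ levels then yields the claimed total. Since $\epsilon \le m^{-5}$ we have $\log\log(1/\epsilon) = O(\log\log n)$, so \Cref{thm:2.2} directly gives the preprocessing cost per recursive instance as $O((m' + n'\log\log n)\log\log n)$. The main loop has $L = O(\log\log n)$ iterations, so it will suffice to implement each iteration in $O(m' + n'\log\log n)$ time.

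The hard part is efficient CKR-style ball growing inside a single iteration: computing each ball $B^\pm_{G[U_i]}(v, r_i)$ independently would cost $\Theta(|S_i|\cdot\textup{vol}(\text{ball}))$, which under the expected sample count $O(2^{2^i}\log n)$ and the \Cref{lem:2} volume bound $O(m'/2^{2^{i-1}})$ already becomes super-linear in $m'$ at the late iterations. I would avoid this by simulating the random-order assignment with a single shifted multi-source Dijkstra in $G[U_i]$: endow each $v\in S_i$ with an initial priority $C\pi(v)$, where $\pi$ is the random permutation on $S_i$ and $C$ is chosen larger than $r_i$, run multi-source Dijkstra, and then for each vertex $u$ read off the source minimizing $C\pi(v)+d_{G[U_i]}(v,u)$, which since $C>r_i$ is the earliest-in-$\pi$ source reaching $u$ within pure distance $r_i$, matching the algorithm's CKR rule exactly. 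Doing the same on the reverse graph handles in-balls, and a single $O(|E[U_i]|)$ scan then classifies cut edges and forms the recursive subinstances. Using an integer-weight directed Dijkstra (e.g., Thorup's), one iteration costs $O(|E[U_i]| + |U_i|\log\log n)$, so the full main loop costs $O((m' + n'\log\log n)\log\log n)$ per instance.

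Finally, every recursive subinstance has at most $\tfrac34 m'$ edges: label-(\ref{item:a}) components are bounded by \Cref{thm:2.2} directly, iteration-$1$ balls sit inside an out-light or in-light ball of at most $\tfrac34 m'$ edges by the label-(\ref{item:c}) guarantee, and iteration-$i$ balls for $i \ge 2$ have at most $m'/2^{2^{i-1}} \le m'/2$ edges by \Cref{lem:2} (volume bound divided by $2$). Hence the recursion depth is $O(\log m) = O(\log n)$; since subinstances at each level partition the vertices and the edges, the per-instance bound sums to $O((m + n\log\log n)\log\log n)$ per level and $O((m + n\log\log n)\log n\log\log n)$ in total, as desired.
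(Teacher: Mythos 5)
There is a genuine gap, and it sits exactly where you place the main work: the claim that a single shifted multi-source Dijkstra with source offsets $C\pi(v)$, $C>r_i$, reproduces the algorithm's random-order, radius-capped claiming rule. It does not. First, the arg-min of $C\pi(v)+d(v,u)$ need not be a source within distance $r_i$ of $u$: take $\pi(v_1)=1$ with $d(v_1,u)=r_i+1$ and $\pi(v_2)=2$ with $d(v_2,u)\le r_i$; for any $C>r_i+1$ the minimizer is $v_1$, so $u$ would be read off as unclaimed (or misassigned), whereas \Cref{algo:1} removes $u$ via $v_2$. Second, the natural repair---truncating each search at pure distance $r_i$ while keeping one label per vertex---also fails: an earlier source $v_1$ can dominate the shifted key at an intermediate vertex $x$ (so $v_2$'s label is discarded there) and yet be unable to extend to $u$ within the cap, while every short $v_2$-to-$u$ path runs through $x$; since balls in \Cref{algo:1} are computed in the snapshot $G[U_i]$, the true process does remove $u$, but your simulation does not. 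This is not a cosmetic discrepancy: \Cref{lem:2} (every surviving vertex has small balls) and the CKR cut-probability bound of \Cref{lem:4} are proved for the exact claiming rule, so a modified rule would require re-proving both. The faithful implementation is the paper's: run Dijkstra from the sources sequentially in the random order with a skip rule, which allows a vertex to be inserted into the priority queue several times; the CKR ordering argument bounds the expected number of insertions in iteration $i$ by roughly $2^i+O(\log\log(1/\epsilon))$, not $O(1)$.

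Once the per-iteration cost carries this $2^i$ factor, your accounting "disjoint instances per level $\times$ $O(\log n)$ depth" no longer gives the theorem: a vertex surviving all $L$ iterations of one instance already incurs $\sum_i 2^i=\Theta(\log m)$ expected insertions, and multiplying by the depth gives $\Theta(\log^2 m)$ per vertex. The paper avoids this with Seymour's trick, encoded as the recurrence $f(m)\le\max_i\bigl(2^{i+1}+i\cdot C\log\log(1/\epsilon)+f(1.5m/2^{2^{i-1}})\bigr)$: the $2^i$ cost of being cut late is paid for by landing in a doubly-exponentially smaller instance, which is what yields $O(\log m\log\log(1/\epsilon))$ insertions per vertex overall. (Your observation that every child instance has at most $\frac34$ of the parent's edges and that siblings are disjoint is correct, but it is not enough by itself.) A secondary omission: $\epsilon\le m^{-5}$ is needed for more than $\log\log(1/\epsilon)=O(\log\log n)$---the size bounds on cut balls hold only when \Cref{thm:2.2} and \Cref{lem:2} succeed, so one must also absorb the $O(m^3\epsilon)$ failure probability against a trivial $O(m^3)$ running-time bound, as the paper does at the end of its proof.
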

\begin{proof}
By \Cref{thm:2.2}, the preprocessing step takes $O((m+n\log\log n)\log\log(1/\epsilon))$ expected time, and since the recursion depth is $O(\log m)$, the total expected running time over all preprocessing is\linebreak $O((m+n\log\log n)\log m\log\log(1/\epsilon))$.

For the rest of the algorithm, the running time is dominated by computing the sets $B^\pm_{G[U_i]}(v,r_i)\cap U$, which requires single-source shortest paths computations. For simplicity, we only consider out-balls $B^+_{G[U_i]}(v,r_i)\cap U$, since the other case is symmetric. To implement this step, we run Dijkstra's algorithm for each $v\in S_i$ in the random ordering with one important speedup. Suppose that we are currently computing $B^+_{G[U_i]}(v,r_i)\cap U$ for some $v\in S_i$ and are about to add a vertex $u\in B^+_{G[U_i]}(v,r_i)$ to the priority queue in Dijkstra's algorithm. If there is a vertex $v'\in S_i$ that precedes $v$ in the random ordering such that $d(v',u)\le d(v,u)$, then the algorithm skips $u$, i.e., do not add it to the priority queue. We now justify why this speedup does not affect our goal of computing $B^+_{G[U_i]}(v,r_i)\cap U$. If the algorithm skips $u$, then for any vertex $t\in B^+_{G[U_i]}(v,r_i)$ whose shortest path from $v$ passes through $u$, there is a shorter path from $v'$ to $t$ (also passing through $u$), so we must have $t\in B^+_{G[U_i]}(v',r_i)$ as well. Since $v'$ precedes $v$ in the random ordering, the vertex $t$ is already removed from $U$ when $v'$ is processed (if not earlier). Hence, our speedup only skips over vertices that no longer belong to $U$.

We now bound the expected number of times a vertex $u$ can be added to Dijkstra's algorithm over all computations of $B^+_{G[U_i]}(v,r_i)\cap U$. For a given iteration $i$, recall from the proof of \Cref{lem:4} that the number $k$ of sampled vertices in $U_i$ is at most $ 2^{2^i}\ln(1/\epsilon)$ in expectation. Order the sampled vertices $v_1,\ldots,v_k$ by value of $d(v_j,u)$, with ties broken arbitrarily. For each vertex $v_j$, if any of $v_1,\ldots,v_{j-1}$ precedes $v_j$ in the random ordering, then Dijkstra's algorithm from $v_j$ skips $u$. It follows that vertex $u$ is added to the priority queue of Dijkstra's algorithm from $v_j$ with probability at most $1/j$. The expected number of times vertex $u$ is added to a priority queue is at most $\frac11+\frac12+\cdots+\frac1k\le\ln k+1$. Taking the expectation over $k$, and using the fact that $\ln k$ is concave, this expectation is at most $\ln( 2^{2^i}\ln(1/\epsilon))+1\le2^i+C\log\log(1/\epsilon)$ for some constant $C>0$.

Next, we bound the expected number of times a vertex $u$ can be added to Dijkstra's algorithm over the entire recursive algorithm. Let $f(m)$ be an upper bound of this expected number on any graph with at most $m$ edges. Given a recursive instance, let $p_i$ be the probability that $u\in U_i$ for each $1\le i\le L$. The probability that $u\in U_i$ and $u\notin U_{i+1}$ is exactly $p_i-p_{i+1}$, where we define $p_{L+1}=0$. (Note that all vertices leave $U$ after iteration $L$ since every vertex is sampled on that iteration.) In particular, $p_i-p_{i+1}$ is the probability that vertex $u$ is cut by a ball on iteration $i$. If $i=1$, then \Cref{thm:2.2} holds with probability at least $1-2m\epsilon$, in which case the recursive instance $G[B^\pm_{G[U_i]}\cap U]$ has at most $\frac34m=1.5m/2^{2^{i-1}}$ edges since all vertices are out-light and in-light. If $i>1$, then \Cref{lem:2} holds on iteration $i$ with probability at least $1-2m\epsilon$, in which case $|E[B^\pm_{G[U_i]}\cap U]|\le\textup{vol}(B^\pm_{G[U_i]}\cap U)/2\le m/2^{2^{i-1}}\le1.5m/2^{2^{i-1}}$. Summing over all $i$, and ignoring the $2m\epsilon$ probabilities of failure for now, we can upper bound $f(m)$ by
\begin{align*}
&\sum_{i=1}^L\left(p_i\cdot(2^i+C\log\log(1/\epsilon))+(p_i-p_{i+1})\cdot f(1.5m/2^{2^{i-1}})\right)
\\={}&\sum_{i=1}^L\left(\big((p_i-p_{i+1})+(p_{i+1}-p_{i+2})+\cdots+(p_L-p_{L+1})\big)\cdot(2^i+C\log\log(1/\epsilon))+(p_i-p_{i+1})\cdot f(1.5m/2^{2^{i-1}})\right)
\\={}&\sum_{i=1}^L(p_i-p_{i+1})\cdot\left(2^1+2^2+\cdots+2^i+i\cdot C\log\log(1/\epsilon)+f(1.5m/2^{2^{i-1}})\right)
\\\le{}&\sum_{i=1}^L(p_i-p_{i+1})\cdot\left(2^{i+1}+i\cdot C\log\log(1/\epsilon)+f(1.5m/2^{2^{i-1}})\right).
\end{align*}
Since the values $(p_i-p_{i+1})$ sum to $1$, the summation is a convex combination of the terms\linebreak $2^{i+1}+i\cdot C\log\log(1/\epsilon)+f(1.5m/2^{2^{i-1}})$, so we can bound the summation by the maximum such term. In conclusion, we have the recursive bound
\[ f(m)\le\max_{i=1}^L\left(2^{i+1}+i\cdot C\log\log(1/\epsilon)+f(1.5m/2^{2^{i-1}})\right) .\]

We now solve the recurrence by $f(m)\le3(C+4)\lg m\log\log(1/\epsilon)$. By induction on $m$, we bound each term in the maximum above by
\begin{align*}
&2^{i+1}+i\cdot C\log\log(1/\epsilon)+f(1.5m/2^{2^{i-1}})
\\\le{}&2^{i+1}+i\cdot C\log\log(1/\epsilon)+3(C+4)\lg(1.5m/2^{2^{i-1}})\log\log(1/\epsilon)
\\={}&2^{i+1}+i\cdot C\log\log(1/\epsilon)+3(C+4)(\lg m+\lg1.5-2^{i-1})\log\log(1/\epsilon)
\\\le{}&2^{i+1}+i\cdot C\log\log(1/\epsilon)+3(C+4)(\lg m-\frac13\cdot2^{i-1})\log\log(1/\epsilon)
\\={}&2^{i+1}+i\cdot C\log\log(1/\epsilon)+3(C+4)\lg m\log\log(1/\epsilon)-(C+4)\cdot2^{i-1}\log\log(1/\epsilon)
\\\le{}&3(C+4)\lg m\log\log(1/\epsilon),
\end{align*}
where the last inequality uses $i\le2^{i-1}$ for all $i\ge1$. It follows that $f(m)\le3(C+4)\lg m\log\log(1/\epsilon)$, completing the induction.

In conclusion, the expected total number of times vertex $u$ is added to a priority queue in Dijkstra's algorithm is $O(\log n\log\log n)$. Using Thorup's priority queue~\cite{thorup2003integer}, each vertex update takes $O(\log\log n)$ time and each edge update takes $O(1)$ time, so the expected total running time is $O((m+n\log\log n)\log n\log\log n)$, as promised.

Finally, we address what happens when the $2m\epsilon$ probability of failure occurs. There are at most $m$ recursive instances, each with $L$ iterations, so the overall failure probability is at most $Lm\cdot O(m\epsilon)\le O(m^3\epsilon)$. In this case, the running time can be trivially bounded by $O(m^3)$. The overall expected running time increase from this failure case is $O(m^3\epsilon\cdot m^3)\le O(m)$ since $\epsilon\le m^{-5}$.
\end{proof}

\subsection{Preprocessing Step}\label{sec:2}

In this section, we present the preprocessing algorithm from \Cref{thm:2.2}, restated below.
\Preprocessing*

Let $C>0$ be a large enough constant. Define vertex subsets $V^+,V^-\subseteq V$ as follows: let $V^\pm$ be all vertices $v\in V$ with $|E[B^\pm_G(v,\Delta/4)]|\ge\frac m{8\ln(2C\ln(1/\epsilon))}$. The algorithm randomly selects one of three cases to follow, and possibly makes a recursive call. For ease of exposition, we first describe the three cases and their guarantees, and then discuss how the algorithm selects which one to follow.

\subsubsection{Case 1}
This case aims to succeed under the assumption that $\textup{vol}(V^+)\ge\frac54m$ and $\textup{vol}(V^-)\ge\frac54m$. Of course, the algorithm cannot tell whether this assumption holds, but stating the assumption upfront may provide useful context. (If the assumption does not hold, the algorithm still cuts balls but may not make any progress.)

The algorithm selects a random vertex $t\in V$ from the distribution that assigns probability $\frac{\deg(v)}{2m}$ to each vertex $v\in V$. The algorithm samples a random radius $r\in[\Delta/4,\Delta/2]$ and cuts the in-ball $B^-_G(t,r)$ and out-ball $B^+_G(t,r)$, adding the cut edges to $S$.

Each strongly connected component in $G-S$ is contained in one of $B^+_G(t,r)\cap B^-_G(t,r)$,\linebreak$B^+_G(t,r)\setminus B^-_G(t,r)$, $B^-_G(t,r)\setminus B^+_G(t,r)$, and $V\setminus(B^+_G(t,r)\cup B^-_G(t,r))$. The algorithm labels all connected components of  $B^+_G(t,r)\cap B^-_G(t,r)$ by property~(\ref{item:b}) of \Cref{thm:2.2}, which is justified below.
\begin{lemma}
$B^+_G(t,r)\cap B^-_G(t,r)$ has weak diameter at most $\Delta$.
\end{lemma}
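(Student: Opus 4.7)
The plan is to prove this directly by triangle inequality, routing every pair of vertices through the common center $t$. The weak diameter is measured using distances in the original graph $G$, so I do not need to worry about whether paths stay inside any particular subgraph.

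Fix any two vertices $u, v \in B^+_G(t,r)\cap B^-_G(t,r)$. By membership in the out-ball, $d(t,u)\le r$ and $d(t,v)\le r$; by membership in the in-ball, $d(u,t)\le r$ and $d(v,t)\le r$. Concatenating the $u\to t$ path with the $t\to v$ path gives $d(u,v)\le d(u,t)+d(t,v)\le 2r$, and symmetrically $d(v,u)\le d(v,t)+d(t,u)\le 2r$. Since the algorithm samples $r\in[\Delta/4,\Delta/2]$, we have $2r\le\Delta$, so $\max\{d(u,v),d(v,u)\}\le\Delta$. Taking the maximum over all such pairs $u,v$ shows that $B^+_G(t,r)\cap B^-_G(t,r)$ has weak diameter at most $\Delta$, and the same bound inherits to any connected component contained in this intersection.

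There is essentially no obstacle here; the entire content is the triangle inequality plus the upper bound on the sampled radius. The only subtlety worth flagging is that weak diameter uses distances in $G$ (not in the induced subgraph $G[B^+_G(t,r)\cap B^-_G(t,r)]$), which is exactly why routing through $t$ is legitimate even though $t$ itself need not lie in the final strongly connected component after edges are cut.
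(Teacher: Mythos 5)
Your proof is correct and is essentially identical to the paper's: both route each pair $u,v$ through the center $t$ via the triangle inequality, using $d(u,t)\le r$ from the in-ball, $d(t,v)\le r$ from the out-ball, and $2r\le\Delta$ since $r\le\Delta/2$. Your added remark that weak diameter is measured in $G$ rather than in the induced subgraph is a fair clarification but not a difference in substance.
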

\begin{proof}
Consider two vertices $u,v\in B^+_G(t,r)\cap B^-_G(t,r)$. Since $u\in B^-_G(t,r)$, we have $d(u,t)\le r$, and since $v\in B^+_G(t,r)$, we have $d(t,v)\le r$. By the triangle inequality, we conclude that $d(u,v)\le d(u,t)+d(t,v)\le r+r\le\Delta$.
\end{proof}
The lemma below shows that if the stated assumption holds, then with constant probability, each remaining component is sufficiently smaller in size.
\begin{lemma}
Assume that $\textup{vol}(V^+)\ge\frac54m$ and $\textup{vol}(V^-)\ge\frac54m$. With probability at least $1/4$, each strongly connected component outside $B^+_G(t,r)\cap B^-_G(t,r)$ has at most $m-\frac m{8\ln(2C\ln(1/\epsilon))}$ edges.
\end{lemma}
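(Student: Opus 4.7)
The plan is to show that with probability at least $1/4$, the randomly chosen pivot $t$ lands in $V^+\cap V^-$, and then argue that in this case every one of the three remaining component regions is missing a dense ball and is hence not too large.

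First I would bound the probability that $t$ falls in $V^+$ and in $V^-$. Since $t$ is drawn with probability $\deg(v)/(2m)$, we have
\[ \Pr[t\in V^\pm]=\frac{\textup{vol}(V^\pm)}{2m}\ge\frac{5m/4}{2m}=\frac58.\]
By inclusion–exclusion (or a union bound on the complements), $\Pr[t\in V^+\cap V^-]\ge\tfrac58+\tfrac58-1=\tfrac14$.

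Next I would condition on the event $t\in V^+\cap V^-$. By definition of $V^\pm$,
\[ |E[B^+_G(t,\Delta/4)]|\ge\frac m{8\ln(2C\ln(1/\epsilon))}\quad\text{and}\quad|E[B^-_G(t,\Delta/4)]|\ge\frac m{8\ln(2C\ln(1/\epsilon))}.\]
Since $r\ge\Delta/4$, the ball $B^+_G(t,r)$ contains $B^+_G(t,\Delta/4)$, so $|E[B^+_G(t,r)]|\ge\frac m{8\ln(2C\ln(1/\epsilon))}$, and symmetrically for $B^-_G(t,r)$.

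Finally I would check each of the three regions outside the intersection. Any SCC of $G-S$ outside $B^+_G(t,r)\cap B^-_G(t,r)$ is contained in one of $B^+_G(t,r)\setminus B^-_G(t,r)$, $B^-_G(t,r)\setminus B^+_G(t,r)$, or $V\setminus(B^+_G(t,r)\cup B^-_G(t,r))$. The first of these is a subset of $V\setminus B^-_G(t,r)$, and the other two are subsets of $V\setminus B^+_G(t,r)$. For any subset $W\subseteq V$ with $W\cap B^\pm_G(t,r)=\emptyset$, no edge of $E[W]$ belongs to $E[B^\pm_G(t,r)]$, so $|E[W]|\le m-|E[B^\pm_G(t,r)]|\le m-\frac m{8\ln(2C\ln(1/\epsilon))}$, which yields the desired bound.

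No step looks like a serious obstacle here—the main content is packaging the degree-weighted sampling to get $t\in V^+\cap V^-$ with constant probability, and then observing that the monotonicity of balls in the radius lets us transfer the dense-ball lower bound from $\Delta/4$ to any $r\ge\Delta/4$.
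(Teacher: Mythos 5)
Your proposal is correct and follows essentially the same argument as the paper: lower-bound $\Pr[t\in V^+\cap V^-]$ by $\tfrac58+\tfrac58-1=\tfrac14$ via the degree-weighted sampling, use monotonicity of balls in the radius to transfer the $\frac m{8\ln(2C\ln(1/\epsilon))}$ edge bound from radius $\Delta/4$ to $r$, and observe that each component outside the intersection avoids at least one of the two balls and hence misses that many edges. Your last step is in fact stated slightly more carefully than the paper's (which loosely says such components contain edges of neither ball), but the content is identical.
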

\begin{proof}
Under the assumption that $\textup{vol}(V^+)\ge\frac54m$ and $\textup{vol}(V^-)\ge\frac54m$, the sampled vertex $t$ belongs to $V^+$ and $V^-$ each with probability at least $5/8$, so it belongs to their intersection $V^+\cap V^-$ with probability at least $1/4$. In this case, we have $|E[B^\pm_G(t,r)]|\ge|E[B^\pm_G(t,\Delta/8)]|\ge\frac m{8\ln(2C\ln(1/\epsilon))}$. Each component contained in $B^+_G(t,r)\setminus B^-_G(t,r)$, $B^-_G(t,r)\setminus B^+_G(t,r)$, or $V\setminus(B^+_G(t,r)\cup B^-_G(t,r))$ does not contain edges in either $B^+_G(t,r)$ or $B^-_G(t,r)$, so it misses at least $\frac m{8\ln(2C\ln(1/\epsilon))}$ edges, as promised.
\end{proof}
The algorithm recursively solves the largest component if it has more than $\frac34m$ edges, and labels all others by property~(\ref{item:a}) of \Cref{thm:2.2}.

We now bound the probability of cutting each edge. In order for ball $B^+_G(t,r)$ to cut edge $(u,v)$, the radius $r$ must satisfy $d(t,u)\le r<d(t,u)+w(u,v)$, since this holds whenever $u\in B^+_G(t,r)$ and $v\notin B^+_G(t,r)$. There are $\Delta/4$ many choices of $r$, so this happens with probability at most $\frac4\Delta\cdot w(u,v)$. Similarly, the edge $(u,v)$ is cut by ball $B^-_G(s,r)$ with probability at most $\frac4\Delta\cdot w(u,v)$. Overall, each edge $(u,v)$ is cut with probability at most $\frac8\Delta\cdot w(u,v)$.

We summarize the guarantees of Case~1 in the lemma below. Note that the running time is dominated by two calls to Dijkstra's algorithm, which takes $O(m+n\log\log n)$ time using Thorup's priority queue~\cite{thorup2003integer}.
\begin{lemma}\label{lem:2.5}
In Case~1 of the algorithm, each edge $(u,v)$ joins $S$ with probability at most $\frac8\Delta\cdot w(u,v)$. The algorithm correctly labels components by properties~(\ref{item:a})~and~(\ref{item:b}) of \Cref{thm:2.2} and makes a recursive call on the remaining component, if any.
 Under the assumption that $\textup{vol}(V^+)\ge\frac54m$ and $\textup{vol}(V^-)\ge\frac54m$, with constant probability, the recursive component has at most $m-\frac m{8\ln(2C\ln(1/\epsilon))}$ edges. The running time is $O(m+n\log\log n)$.
\end{lemma}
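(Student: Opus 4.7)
The plan is to observe that Lemma~2.5 is essentially a bookkeeping statement that packages together the facts already established in the exposition immediately preceding it, so the proof will be short and mostly cite-and-assemble rather than introduce new arguments. I will proceed in three stages corresponding to the three claims in the lemma: correctness of labels, the per-edge cut probability, the size-reduction claim under the stated assumption, and finally the running time.

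For the per-edge cut probability, I would simply recall the counting argument already given in the paragraph above the lemma statement: the only way edge $(u,v)$ is cut by $B^+_G(t,r)$ is when $r$ falls in the half-open interval $[d(t,u),\, d(t,u)+w(u,v))$, which is an event of probability at most $\frac{4}{\Delta}\cdot w(u,v)$ because $r$ is uniform on an interval of length $\Delta/4$; the analogous bound holds for $B^-_G(t,r)$, and a union bound yields the claimed $\frac{8}{\Delta}\cdot w(u,v)$. For the labeling, I would appeal directly to the two intermediate lemmas in Case~1: components inside $B^+_G(t,r)\cap B^-_G(t,r)$ deserve label~(\ref{item:b}) by the weak-diameter lemma, while every component outside this intersection is contained in one of the three sets $B^+_G(t,r)\setminus B^-_G(t,r)$, $B^-_G(t,r)\setminus B^+_G(t,r)$, or $V\setminus(B^+_G(t,r)\cup B^-_G(t,r))$, and the algorithm examines their edge counts and either labels them~(\ref{item:a}) or hands the unique oversized one (if it exists) to the recursive call.

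For the size-reduction-under-assumption claim, I would invoke the second intermediate lemma verbatim: assuming $\textup{vol}(V^+)\ge\tfrac54m$ and $\textup{vol}(V^-)\ge\tfrac54m$, with probability at least $1/4$ every component outside the intersection has at most $m-\frac{m}{8\ln(2C\ln(1/\epsilon))}$ edges, which is in particular at most $\tfrac34m$ for large enough $C$, so no recursive call is made and the lemma's ``constant probability'' conclusion holds with constant $1/4$. (If $C$ is not large enough to guarantee this directly, one can still argue that the recursive instance, if any, has size at most $m-\frac{m}{8\ln(2C\ln(1/\epsilon))}$, which is the form the lemma actually asks for.)

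Finally, for the running time, the only substantive computations are two single-source shortest-path searches from $t$, truncated at radius $\Delta/2$, to identify $B^+_G(t,r)$ and $B^-_G(t,r)$; using Thorup's integer priority queue~\cite{thorup2003integer}, each runs in $O(m+n\log\log n)$ time on integer-weighted graphs, and the downstream steps (computing pairwise intersections of ball membership, identifying strongly connected components inside each of the four parts via e.g.\ Tarjan's algorithm, counting edges, and assigning labels) are linear in $m+n$. I do not anticipate a genuine obstacle: all nontrivial probabilistic and geometric content has already been discharged in the two preceding lemmas and the cut-probability paragraph, so the main care point is only to be explicit that the recursive call in property~(\ref{item:a}) happens on at most one component (the largest, if it exceeds $\tfrac34m$ edges) and that the remaining components with this property do not trigger further recursion within Case~1 itself.
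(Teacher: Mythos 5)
Your proof is correct and follows essentially the paper's own route: the lemma is a summary statement assembled from the weak-diameter lemma (justifying label~(\ref{item:b})), the constant-probability size-reduction lemma for components outside $B^+_G(t,r)\cap B^-_G(t,r)$, the uniform-radius counting argument giving the $\frac{4}{\Delta}\cdot w(u,v)$ bound per ball (hence $\frac{8}{\Delta}\cdot w(u,v)$ overall), and the two Dijkstra computations with Thorup's priority queue, which is exactly what you do. One small caveat: your parenthetical claim that $m-\frac{m}{8\ln(2C\ln(1/\epsilon))}\le\frac34m$ for large enough $C$ is backwards (increasing $C$ only weakens the bound, pushing it toward $m$), but since your stated fallback --- that the recursive component, if any, has at most $m-\frac{m}{8\ln(2C\ln(1/\epsilon))}$ edges with constant probability --- is precisely what the lemma asserts, this aside does not affect correctness.
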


\subsubsection{Case 2}
This case aims to succeed under the assumption that $\textup{vol}(V^+)<\frac54m$. 

The algorithm first samples $8\ln(2C\ln(1/\epsilon))$ vertices with replacement from the distribution that assigns probability $\frac{\deg(v)}{2m}$ to each vertex $v\in V$. The algorithm contracts all sampled vertices that do \emph{not} belong to $V^+$ into a single vertex $t'$ and computes $B^+_{G'}(t',r)$ in the contracted graph $G'$, where $r\in[\Delta/4,\Delta/2]$ is a random radius.

Throughout this case, we study vertices that are out-heavy and in-heavy with respect to different radii and size thresholds. To avoid confusion, we do not use the terms out-heavy and in-heavy unless the radius is specifically $\Delta/8$ and the size threshold is specifically $\frac34m$ as defined.
\begin{lemma}\label{lem:2.6}
Assume that $\textup{vol}(V^+)<\frac54m$. For each vertex $v\in V$ with $|E[B^-_G(v,\Delta/4)]|>\frac34m$, we have $v\notin B^+_{G'}(t',r)$ with probability at most $\frac1{2C\ln(1/\epsilon)}$.
\end{lemma}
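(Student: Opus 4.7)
The plan is to show that with high probability, at least one of the $8\ln(2C\ln(1/\epsilon))$ sampled vertices lies in $B^-_G(v,\Delta/4)\setminus V^+$. If so, that sampled vertex $u$ is contracted into $t'$ and satisfies $d_G(u,v)\le\Delta/4\le r$, which immediately gives $d_{G'}(t',v)\le r$ and hence $v\in B^+_{G'}(t',r)$. So the entire task reduces to lower bounding the probability of hitting $B^-_G(v,\Delta/4)\setminus V^+$ via a single sample, and then taking a union bound over trials.

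First I would lower bound the volume of the in-ball. Since each edge induced in $B^-_G(v,\Delta/4)$ contributes $2$ to the sum of degrees within this set, we have $\textup{vol}(B^-_G(v,\Delta/4))\ge 2\cdot|E[B^-_G(v,\Delta/4)]|>\tfrac32 m$. Combining with the hypothesis $\textup{vol}(V^+)<\tfrac54 m$, the inclusion–exclusion bound gives
\[
\textup{vol}\bigl(B^-_G(v,\Delta/4)\setminus V^+\bigr)\ge \textup{vol}(B^-_G(v,\Delta/4))-\textup{vol}(V^+)>\tfrac32 m-\tfrac54 m=\tfrac{m}{4}.
\]

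Next, because each trial samples a vertex $u$ with probability $\deg(u)/(2m)$, the probability a single trial lands in $B^-_G(v,\Delta/4)\setminus V^+$ is at least $(m/4)/(2m)=1/8$. Since the $8\ln(2C\ln(1/\epsilon))$ trials are independent, the probability that none of them lands in this set is at most
\[
\Bigl(1-\tfrac18\Bigr)^{8\ln(2C\ln(1/\epsilon))}\le e^{-\ln(2C\ln(1/\epsilon))}=\tfrac{1}{2C\ln(1/\epsilon)},
\]
using $1-x\le e^{-x}$. Whenever at least one trial succeeds (an event of probability $\ge 1-\tfrac{1}{2C\ln(1/\epsilon)}$), the contraction step folds such a $u$ into $t'$, so $v\in B^+_{G'}(t',r)$ as argued above, completing the bound.

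There is no real obstacle here; the main thing to be careful about is the factor-$2$ passage between induced edges and volume, and the precise choice of sample count $8\ln(2C\ln(1/\epsilon))$, which is tuned exactly to convert the per-trial success probability $\tfrac18$ into the stated failure bound $\tfrac{1}{2C\ln(1/\epsilon)}$.
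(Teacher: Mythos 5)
Your proof is correct and follows essentially the same approach as the paper: lower-bound $\textup{vol}(B^-_G(v,\Delta/4)\setminus V^+)>\frac m4$, bound the probability that all $8\ln(2C\ln(1/\epsilon))$ samples miss this set, and note that any hit is contracted into $t'$ and certifies $v\in B^+_{G'}(t',\Delta/4)\subseteq B^+_{G'}(t',r)$. The only cosmetic difference is that you bound the per-trial hit probability by $\frac18$ directly (since each with-replacement trial picks exactly one vertex), whereas the paper bounds the product of per-vertex miss probabilities via negative correlation; your version is slightly cleaner but mathematically equivalent.
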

\begin{proof}
Consider a vertex $v\in V$ with $|E[B^-_G(v,\Delta/4)]|>\frac34m$, which means $\textup{vol}(B^-_G(v,\Delta/4))>\frac32m$. Together with the assumption $\textup{vol}(V^+)<\frac54m$, we have $\textup{vol}(B^-_G(v,\Delta/4)\setminus V^+)>\frac14m$. If the algorithm samples a vertex in $B^-_G(v,\Delta/4)\setminus V^+$, then that vertex is contracted into $t'$, and we conclude that $v\in B^+_{G'}(t',\Delta/4)\subseteq B^+_{G'}(t',r)$. Conversely, the probability that $v\notin B^+_{G'}(t',r)$ is at most the probability of sampling no vertex in $B^-_G(v,\Delta/4)\setminus V^+$. Due to negative correlation, this probability is at most the product of the probabilities of not sampling each vertex in $B^-_G(v,\Delta/4)\setminus V^+$. This product is bounded by
\begin{align*}
&\prod_{v\in B^-_G(v,\Delta/4)\setminus V^+}\left(1-\frac{\deg(v)}{2m}\right)^{8\ln(2C\ln(1/\epsilon))}
\\\le{}&\prod_{v\in B^-_G(v,\Delta/4)\setminus V^+}\exp\left(-8\ln(2C\ln(1/\epsilon))\cdot\frac{\deg(v)}{2m}\right)
\\={}&\exp\left(-8\ln(2C\ln(1/\epsilon))\cdot\frac{\textup{vol}(B^-_G(v,\Delta/4)\setminus V^+)}{2m}\right)
\\\le{}&\exp\bigg(-8\ln(2C\ln(1/\epsilon))\cdot\frac{m/4}{2m}\bigg)
\\={}&\exp(-\ln(2C\ln(1/\epsilon))
\\={}&\frac1{2C\ln(1/\epsilon)},
\end{align*}
as promised.
\end{proof}

The algorithm emulates cutting the ball $B^+_{G'}(t',r)$ in $G$, which means adding all cut edges to $S$ (after undoing the contraction) and removing all (original) vertices in $B^+_{G'}(t',r)$, namely the union of $B^+_G(t,r)$ over all vertices $t\in V$ that were contracted to $t'$. Since each contracted vertex $t$ does not belong to $V^+$, we have $\textup{vol}(B^+_G(t,r))<\frac m{8\ln(2C\ln(1/\epsilon))}$, so the total volume of removed vertices is at most $8\ln(2C\ln(1/\epsilon))\cdot\frac m{8\ln(2C\ln(1/\epsilon))}=m$. In other words, at most $m/2$ edges are removed. The algorithm labels all strongly connected components of removed vertices by property~(\ref{item:a}) of \Cref{thm:2.2}.

Let $U$ be the remaining vertices. The algorithm executes the following, which is essentially steps~(\ref{step:a}) to~(\ref{step:c}) of \Cref{algo:1} for iteration $i=1$, except that only in-balls are cut (and the parameters are slightly different).

\begin{algo}\label{algo:6} \
 \begin{enumerate}
 \item[1.] Initialize $U$ as the remaining vertices. Let $U_1$ be a snapshot of $U$ at this time.
 \item[(3a)] Sample each vertex $v\in U$ with probability $C\ln(1/\epsilon)\cdot\frac{\deg(v)}{2m}$. Let $S_1$ be the random sample.
 \item[(3b)] Sample a random radius $r_1\in[\frac\Delta6,\frac\Delta4]$.
 \item[(3c)] For each vertex $v\in S_1$ in a random order:
  \begin{enumerate}
  \item[i.] Cut the in-ball $B^-_{G[U_1]}(v,r_1)$ from the remaining graph $G[U]$. That is, add the edges of $G[U]$ cut by $B^-_{G[U_1]}(v,r_1)$ to $S$, and then set $U\gets U\setminus B^-_{G[U_1]}(v,r_1)$.
  \end{enumerate}
 \end{enumerate}
\end{algo}

The lemma below implies that all out-heavy vertices in $G[U_1]$ are eliminated with good probability. In fact, the statement is even stronger, with a radius of $\Delta/6$ instead of $\Delta/8$.
\begin{lemma}\label{lem:2.8}
With probability at least $1-m\epsilon$, the final set $U$ has no vertices $v\in V$ satisfying $|E[B^+_{G[U_1]}(v,\Delta/6)]|>\frac34m$.
\end{lemma}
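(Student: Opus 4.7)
The plan is to show that for each fixed vertex $v$ with $|E[B^+_{G[U_1]}(v,\Delta/6)]|>\frac34m$, the probability that $v$ survives \Cref{algo:6} is at most $\epsilon$, and then union bound over the at most $n\le m$ candidate vertices. This mirrors \Cref{lem:2} and the contracted-sampling argument of \Cref{lem:2.6}, but now uses the in-ball cuts of \Cref{algo:6} to kill out-heavy vertices.

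First, I would translate the survival event into a statement about $S_1$. Step~(3c)i of \Cref{algo:6} removes $v$ from $U$ as soon as some sampled $t\in S_1$ has $v\in B^-_{G[U_1]}(t,r_1)$, which by definition is equivalent to $t\in B^+_{G[U_1]}(v,r_1)$. Since $r_1\ge\Delta/6$, a sufficient condition for $v$ to be removed is $S_1\cap B^+_{G[U_1]}(v,\Delta/6)\ne\emptyset$. So it suffices to upper bound the probability that this intersection is empty.

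Next, using the general inequality $|E[S]|\le\textup{vol}(S)/2$, the hypothesis gives $\textup{vol}(B^+_{G[U_1]}(v,\Delta/6))>\tfrac32m$. Independent sampling in step~(3a), together with $1-x\le e^{-x}$, yields
\[\Pr\bigl[S_1\cap B^+_{G[U_1]}(v,\Delta/6)=\emptyset\bigr]\le\exp\left(-C\ln(1/\epsilon)\cdot\frac{\textup{vol}(B^+_{G[U_1]}(v,\Delta/6))}{2m}\right)\le\epsilon^{3C/4},\]
exactly as in the product-to-exponential computation of \Cref{lem:2.6}. Taking the constant $C$ large enough (any $C\ge4/3$) makes this at most $\epsilon$, and a union bound over the at most $m$ candidate vertices gives the claimed $m\epsilon$ failure bound.

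I do not expect a significant obstacle. The only conceptual ingredient beyond the template of \Cref{lem:2.6} is the identity $v\in B^-(t,r)\iff t\in B^+(v,r)$, which converts the hypothesis of $v$ having a heavy \emph{forward} ball into the statement that the pool of sample sources whose in-ball would cut $v$ is large; once this is noted, the sampling calculation is identical. The minor bookkeeping point to verify is that vertices $u$ with $C\ln(1/\epsilon)\cdot\deg(u)/(2m)\ge 1$ are sampled with probability $1$, which only makes the empty-intersection probability smaller, so the $1-x\le e^{-x}$ step is justified without further care.
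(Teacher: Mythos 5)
Your proof is correct and follows essentially the same route as the paper: translate survival of an out-heavy vertex $v$ into the event $S_1\cap B^+_{G[U_1]}(v,\Delta/6)=\emptyset$ via the duality $v\in B^-_{G[U_1]}(t,r_1)\iff t\in B^+_{G[U_1]}(v,r_1)$ and $r_1\ge\Delta/6$, bound that event by $\epsilon$ with the same product-to-exponential sampling calculation (the paper cites the computation in \Cref{lem:2} rather than \Cref{lem:2.6}, but it is the identical estimate, with $C$ large enough), and union bound over the heavy vertices. Your handling of the capped sampling probabilities is the right minor check; no gaps.
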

\begin{proof}
Consider a vertex $u\in V$ satisfying $|E[B^+_{G[U_1]}(u,\Delta/6)]|>\frac34m$. By a similar calculation to the proof of \Cref{lem:2}, with probability at least $1-\epsilon$, at least one vertex $v\in B^+_{G[U_1]}(u,\Delta/6)$ is sampled. In this case, since $r_1\ge\Delta/6$, we also have $v\in B^+_{G[U_1]}(u,r_1)$, or equivalently $u\in B^-_{G[U_1]}(v,r_1)$, so vertex $u$ is removed from $U$ when cutting out $B^-_{G[U_1]}(v,r_1)$ (if not earlier). Taking a union bound over all out-heavy vertices, we conclude that with probability at least $1-m\epsilon$, the final set $U$ has no out-heavy vertices.
\end{proof}

\begin{lemma}\label{lem:2.9}
Assume that $\textup{vol}(V^+)<\frac54m$. With probability at least $1/2$, each in-ball $B^-_{G[U_1]}(v,r_1)$ cut by the algorithm satisfies $|E[B^+_{G[U_1]}(v,r_1)]|\le\frac34m$.
\end{lemma}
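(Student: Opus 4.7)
Call $v\in U_1$ \emph{heavy} if $|E[B^+_{G[U_1]}(v,r_1)]|>\tfrac34m$, and let $W$ denote the set of heavy vertices. The plan is to combine a volume bound on $W$ with a random-order estimate, then apply Markov's inequality to conclude that with constant probability no heavy sample produces a meaningful in-ball cut.

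For the volume bound, I would observe that if $v\in W$, then $B^+_{G[U_1]}(v,r_1)\subseteq B^+_G(v,\Delta/4)$ (since $r_1\le\Delta/4$ and $G[U_1]\subseteq G$), so $|E[B^+_G(v,\Delta/4)]|>\tfrac34m$. For $C$ large this vastly exceeds the $V^+$-threshold $m/(8\ln(2C\ln(1/\epsilon)))$, hence $W\subseteq V^+$ and the Case~2 hypothesis gives $\textup{vol}(W)\le\textup{vol}(V^+)<\tfrac54m$. The inequality $\textup{vol}(S)\ge 2|E[S]|$ also gives $\textup{vol}(B^+_{G[U_1]}(v,r_1))>\tfrac32m$ for each $v\in W$.

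For the random-order estimate, the key symmetry $u\in B^+_{G[U_1]}(v,r_1)\iff v\in B^-_{G[U_1]}(u,r_1)$ shows that any sample $u\in S_1\cap B^+_{G[U_1]}(v,r_1)$ processed before $v$ absorbs $v$ into $u$'s in-ball, so that $v$'s own in-ball cut removes nothing meaningful. Hence a heavy $v\in W$ threatens the lemma only if $v$ is the earliest element of $S_1\cap B^+_{G[U_1]}(v,r_1)$ in the random order. Modelling the independent sampling with rates $\lambda_u=C\ln(1/\epsilon)\deg(u)/(2m)$ together with a uniform random order as an exponential race, the probability of this event for a fixed $v$ is at most
\[ \frac{\lambda_v}{\sum_{u\in B^+_{G[U_1]}(v,r_1)}\lambda_u}\cdot\bigl(1-e^{-\sum_u\lambda_u}\bigr)\;\le\;\frac{\deg(v)}{\textup{vol}(B^+_{G[U_1]}(v,r_1))}\;\le\;\frac{2\deg(v)}{3m} . \]
Summing over $v\in W$ yields an expected count of at most $2\textup{vol}(W)/(3m)$, and Markov's inequality bounds the failure probability by this expected count.

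The hard part will be extracting the stated constant $\tfrac12$ rather than a cruder one: the naive bound above is $2\textup{vol}(V^+)/(3m)\le 5/6$, which only gives success probability $1/6$ via Markov. I expect the final argument to close this gap either by retaining the exponential-race factor $(1-e^{-\mu})/\mu$ exactly (using that $\mu\ge 3C\ln(1/\epsilon)/4$ is large so $e^{-\mu}$ is negligible and the factor is strictly less than $1/\mu$) or by exploiting the random radius $r_1\in[\Delta/6,\Delta/4]$: since $r\mapsto|E[B^+_{G[U_1]}(v,r)]|$ is non-decreasing, $W$ is smaller in expectation over $r_1$ than in the worst case, yielding a tighter volume bound on $W$ that converts directly into the required success probability.
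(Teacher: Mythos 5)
Your central reduction is not valid. The lemma's event ranges over \emph{every} in-ball cut by the algorithm, i.e.\ every sampled center $v\in S_1$: in \Cref{algo:6} the ball $B^-_{G[U_1]}(v,r_1)$ is defined with respect to the snapshot $U_1$, and it is cut (intersected with the current $U$) when $v$'s turn comes even if $v$ itself was already removed by an earlier ball. So a center that has been ``absorbed'' is not neutralized -- its in-ball can still remove many vertices and is still a ball ``cut by the algorithm'' -- and bounding only the probability that $v$ is the earliest sampled element of $S_1\cap B^+_{G[U_1]}(v,r_1)$ bounds a strictly smaller event than the one you need. Moreover, even granting that reduction, your own accounting gives an expected count of at most $2\,\textup{vol}(V^+)/(3m)\le 5/6$, hence success probability only $1/6$, and neither proposed repair closes the gap: the race factor $(1-e^{-\mu})$ is essentially $1$ for the large $\mu$ in play, so ``retaining it exactly'' gains nothing, and averaging over the shared radius $r_1$ is unquantified (the exponential-race model for ``sampled and earliest'' under Bernoulli sampling plus a uniform order is also asserted rather than proved).

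The missing idea is that the paper's proof does not use the random processing order inside \Cref{algo:6} at all; the constant $1/2$ comes from the interaction between the \emph{preceding contracted-ball step} and the sampling rate. Under the Case-2 assumption, \Cref{lem:2.6} shows that a vertex whose ball of radius $\Delta/4$ has more than $\frac34m$ edges escapes $B^+_{G'}(t',r)$, and hence survives into $U_1$, with probability at most $\frac1{2C\ln(1/\epsilon)}$; multiplying by the sampling probability $C\ln(1/\epsilon)\cdot\frac{\deg(v)}{2m}$ gives at most $\frac{\deg(v)}{4m}$ per vertex, so the expected number of heavy sampled centers is at most $\frac12$, and Markov finishes. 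Your argument discards exactly this leverage, using only the crude volume bound $\textup{vol}(V^+)<\frac54m$, which is why it cannot reach $\frac12$. Note also that the heaviness the paper controls is that of the balls $B^-_G(v,\Delta/4)\supseteq B^-_{G[U_1]}(v,r_1)$ around the sampled centers -- this is what bounds the size of the removed pieces so they can be labeled by property~(a), and it is what \Cref{lem:2.6} can address (the $B^+$ superscript in the lemma statement should be read in this light); under your literal out-ball reading, \Cref{lem:2.6} gives no handle whatsoever, which is a further sign the proposed route cannot be completed as written.
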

\begin{proof}
By \Cref{lem:2.6}, for each vertex $v\in V$ with $|E[B^-_G(v,\Delta/4)]|>\frac34m$, we have $v\notin B^+_{G'}(t',r)$ with probability at most $\frac1{2C\ln(1/\epsilon)}$. Conditioned on this event for vertex $v\in V$, the algorithm samples that vertex with probability $C\ln(1/\epsilon)\cdot\frac{\deg(v)}{2m}$. Therefore, the overall probability of cutting in-ball $B^+_{G[U_1]}(v,r_1)$ is at most $\frac1{2C\ln(1/\epsilon)}\cdot C\ln(1/\epsilon)\cdot\frac{\deg(v)}{2m}=\frac{\deg(v)}{4m}$. Summing over all $v\in V$ with $|E[B^-_G(v,\Delta/4)]|>\frac34m$, the expected number of such in-balls is at most $\sum_{v\in V}\frac{\deg(v)}{4m}=1/2$. It follows that with probability at least $1/2$, no such in-ball is cut, which means all in-balls $B^+_G(v,r_1)$ satisfy $|E[B^+_{G[U_1]}(v,r_1)]|\le|E[B^+_G(v,\Delta/4)]|\le\frac34m$.
\end{proof}

If any in-ball $B^-_{G[U_1]}(v,r_1)$ cut by the algorithm satisfies $|E[B^+_{G[U_1]}(v,r_1)]|>\frac34m$, then the algorithm gives up and recursively solves the original $G[U_1]$ (ignoring the output of \Cref{algo:6}). By \Cref{lem:2.9}, if the assumption $\textup{vol}(V^+)<\frac54m$ holds, then the algorithm continues with probability at least $1/2$. In this case, the algorithm labels all strongly connected components of removed vertices by property~(\ref{item:a}) of \Cref{thm:2.2}. Then, the algorithm essentially repeats \Cref{algo:6} on the remaining vertices except that out-balls are now cut, and the radius range is now $[\frac\Delta8,\frac\Delta6]$.

\begin{algo}\label{algo:9} \
 \begin{enumerate}
 \item[1.] Initialize $U$ as the remaining vertices. Let $U_2$ be a snapshot of $U$ at this time.
 \item[(3a)] Sample each vertex $v\in U_2$ with probability $C\ln(1/\epsilon)\cdot\frac{\deg(v)}{2m}$. Let $S_2$ be the random sample.
 \item[(3b)] Sample a random radius $r_2\in[\frac\Delta8,\frac\Delta6]$.
 \item[(3c)] For each vertex $v\in S_2$ in a random order:
  \begin{enumerate}
  \item[i.] Cut the out-ball $B^+_{G[U_2]}(v,r_2)$. That is, add the edges of $G[U]$ cut by $B^+_{G[U_2]}(v,r_2)$ to $S$, and then set $U\gets U\setminus B^+_{G[U_2]}(v,r_2)$.
  \end{enumerate}
 \end{enumerate}
\end{algo}

\begin{lemma}\label{lem:2.10}
With probability at least $1-2m\epsilon$, all remaining vertices in $U$ are both out-light and in-light in $G[U]$, and each out-ball $B^+_{G[U_2]}(v,r_2)$ cut by the algorithm satisfies $|E[B^+_{G[U_2]}(v,r_2)]|\le\frac34m$.
\end{lemma}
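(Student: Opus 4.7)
The plan is to derive both conclusions from two applications of the sampling argument already used in \Cref{lem:2.8}: the first, applied to \Cref{algo:6}, is exactly \Cref{lem:2.8} itself, and the second, applied to \Cref{algo:9}, is its in/out-symmetric dual. The only other ingredient is monotonicity: since $U\subseteq U_2\subseteq U_1$ and the algorithm uses the induced subgraphs $G[U_1]$, $G[U_2]$, $G[U]$, whenever $H'\subseteq H$ and $r'\le r$ we have $B^{\pm}_{H'}(v,r')\subseteq B^{\pm}_H(v,r)$, so upper bounds on edge counts of larger-radius balls in larger subgraphs transfer to smaller-radius balls in smaller subgraphs. I do not expect a genuinely hard step; the main point to watch is that \Cref{algo:9} only guarantees $r_2\ge\Delta/8$ (rather than $r_1\ge\Delta/6$), so the symmetric application has to settle for radius $\Delta/8$ --- which, happily, exactly matches the definition of in-light.

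For the out-light half of the first conclusion and the cut-out-ball bound, \Cref{lem:2.8} gives that with probability at least $1-m\epsilon$, every $v\in U_2$ satisfies $|E[B^+_{G[U_1]}(v,\Delta/6)]|\le\tfrac{3}{4}m$. Each out-ball cut by \Cref{algo:9} has the form $B^+_{G[U_2]}(v,r_2)$ with $v\in S_2\subseteq U_2$ and $r_2\le\Delta/6$, so by monotonicity it is contained in $B^+_{G[U_1]}(v,\Delta/6)$ and therefore has at most $\tfrac{3}{4}m$ edges. Specialized to $v$ in the final $U\subseteq U_2$ and radius $\Delta/8\le\Delta/6$, the same containment yields $|E[B^+_{G[U]}(v,\Delta/8)]|\le\tfrac{3}{4}m$, so every remaining vertex is out-light in $G[U]$.

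For the in-light half, the plan is to run the symmetric version of the argument of \Cref{lem:2.8} on \Cref{algo:9}. Fix any $u\in U_2$ with $|E[B^-_{G[U_2]}(u,\Delta/8)]|>\tfrac{3}{4}m$, which forces $\textup{vol}(B^-_{G[U_2]}(u,\Delta/8))>\tfrac{3}{2}m$. The exponential-tail calculation from the proof of \Cref{lem:2} (used the same way in \Cref{lem:2.8}) then shows that, for the constant $C$ chosen in step~(3a) of \Cref{algo:9}, the probability that no vertex of $B^-_{G[U_2]}(u,\Delta/8)$ is placed in $S_2$ is at most $\epsilon$; if any such $v$ is sampled, then $d(v,u)\le\Delta/8\le r_2$ gives $u\in B^+_{G[U_2]}(v,r_2)$, so $u$ is removed on iteration~2. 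A union bound over the at most $m$ candidate $u$'s shows that, with probability at least $1-m\epsilon$, every remaining $u\in U$ satisfies $|E[B^-_{G[U_2]}(u,\Delta/8)]|\le\tfrac{3}{4}m$, and monotonicity ($G[U]\subseteq G[U_2]$) then gives in-light in $G[U]$. Union-bounding over the two failure events yields the overall $1-2m\epsilon$ probability claimed by the lemma.
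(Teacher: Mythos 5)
Your proposal is correct and follows essentially the same route as the paper: the out-light claim and the bound on cut out-balls come from \Cref{lem:2.8} plus ball monotonicity under shrinking radius and vertex set, and the in-light claim comes from rerunning the sampling/union-bound calculation of \Cref{lem:2} (as in \Cref{lem:2.8}) on \Cref{algo:9} with radius $\Delta/8\le r_2$, with a final union bound over the two failure events giving $1-2m\epsilon$. No substantive differences from the paper's argument.
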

\begin{proof}
By \Cref{lem:2.8}, all vertices in $U_2$ satisfy $|E[B^+_{G[U_1]}(v,\Delta/6)]|\le\frac34m$ with probability at least $1-m\epsilon$, so each vertex in $U\subseteq U_2$ is out-light in $G[U_2]$, and hence also in $G[U]$. Since $r_2\le\Delta/6$, each out-ball $B^+_{G[U_2]}(v,r_2)$ cut by the algorithm satisfies $|E[B^+_{G[U]}(u,r_2)]|\le|E[B^+_{G[U_1]}(v,r_2)]|\le\frac34m$.

It remains to show that all remaining vertices in $U$ are in-light in $G[U]$. Consider a vertex $u\in V$ with $|E[B^-_{G[U_2]}(u,\Delta/8)]|>\frac34m$. By a similar calculation to the proof of \Cref{lem:2}, with probability at least $1-\epsilon$, at least one vertex $v\in B^-_{G[U_2]}(u,\Delta/8)$ is sampled. In this case, since $r_2\ge\Delta/8$, we also have $v\in B^-_{G[U_2]}(u,r_2)$, or equivalently $u\in B^+_{G[U_2]}(v,r_2)$, so vertex $u$ is removed from $U$ when cutting out $B^+_{G[U_2]}(v,r_2)$ (if not earlier). Taking a union bound over all vertices, we conclude that with probability at least $1-m\epsilon$, all vertices $v$ in the final set $U$ satisfy $|E[B^-_{G[U_2]}(u,\Delta/8)]|\le\frac34m$, so they are in-light in $G[U_2]$, and hence also in $G[U]$.
\end{proof}
The algorithm labels all strongly connected components in the final set $U$ by property~(\ref{item:c}) of \Cref{thm:2.2}, and all components of removed vertices by property~(\ref{item:a}), which is correct with high probability by~\Cref{lem:2.10}. No recursive call is made.  

We now bound the probability of cutting each edge. In Algorithms~\ref{algo:6} and~\ref{algo:9}, each edge is cut with probability $O(\frac{\log\log(1/\epsilon)}{\Delta})\cdot w(u,v)$, which can be analyzed identically to the proof of \Cref{lem:4} (for iteration $i=1$). The initial out-ball $B^+_{G'}(t',r)$ cuts each edge with probability at most $\frac4\Delta\cdot w(u,v)$.

Finally, we bound the total running time. In the initial sample of $8\ln(2C\ln(1/\epsilon))$ vertices, the algorithm determines which vertices belong to $V^+$, which is a Dijkstra computation for each vertex and runs in $O((m+n\log\log n)\log\log(1/\epsilon))$ total time. The running times of Algorithms\ref{algo:6} and~\ref{algo:9} can be analyzed identically to \Cref{lem:6}, using a speedup of Dijkstra's algorithm and the fact that the ordering is random, for a total time of $O((m+n\log\log n)\log\log(1/\epsilon))$.

We summarize the guarantees of Case~2 in the lemma below.
\begin{lemma}\label{lem:2.12}
In Case~2 of the algorithm, each edge $(u,v)$ joins $S$ with probability $O(\frac{\log\log(1/\epsilon)}{\Delta})\cdot w(u,v)$. There are two outcomes of the algorithm. In the first outcome, the algorithm correctly labels components by property~(\ref{item:a}) of \Cref{thm:2.2} and makes a recursive call on the remaining component, if any. In the second outcome, the algorithm labels components by properties~(\ref{item:a})~and~(\ref{item:c}) which is correct with probability at least $1-2m\epsilon$, and makes no recursive call. Under the assumption that $\textup{vol}(V^+)<\frac54m$, the second outcome occurs with constant probability. The running time is $O((m+n\log\log n)\log\log(1/\epsilon))$.
\end{lemma}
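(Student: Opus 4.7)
Lemma~\ref{lem:2.12} is a packaging of the sub-lemmas already proven in this subsection, so my plan is to verify each of the four claims --- per-edge cut probability, characterization of the two outcomes with correct labeling, constant probability of the second outcome, and running time --- by invoking the appropriate sub-lemma and filling in short bookkeeping steps.

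For the \emph{cut probability}, I would partition the events that could cut a fixed edge $(u,v)$ into three disjoint sources: (i)~the initial out-ball $B^+_{G'}(t',r)$ in the contracted graph, (ii)~the in-balls of \Cref{algo:6}, and (iii)~the out-balls of \Cref{algo:9}. For source~(i), I would track the quantities $\min_{t\in T}d(t,u)$ and $\min_{t\in T}d(t,v)$, where $T$ is the set of sampled vertices not in $V^+$ (hence contracted into $t'$); the edge is cut iff the uniform $r\in[\Delta/4,\Delta/2]$ lies in a window of length at most $w(u,v)$, giving probability at most $\tfrac{4}{\Delta}\cdot w(u,v)$. Sources (ii)~and~(iii) are direct reuses of the proof of \Cref{lem:4} with sampling rate $C\ln(1/\epsilon)\cdot\tfrac{\deg(v)}{2m}$ in place of $2^{2^i}\ln(1/\epsilon)\cdot\tfrac{\deg(v)}{2m}$: the expected number of samples is $O(\log(1/\epsilon))$, the concavity-of-$\ln$ step costs an $O(\log\log(1/\epsilon))$ factor, and the radius windows $[\Delta/6,\Delta/4]$ and $[\Delta/8,\Delta/6]$ both have width $\Theta(\Delta)$, so each source contributes at most $O(\tfrac{\log\log(1/\epsilon)}{\Delta})\cdot w(u,v)$. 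Summing gives the claim; in the give-up branch the algorithm discards \Cref{algo:6}'s cuts, so no double counting occurs.

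For \emph{outcomes and labeling}, I would read the two outcomes off the pseudocode. The first outcome is triggered precisely when some cut in-ball $B^-_{G[U_1]}(v,r_1)$ violates the $\tfrac34m$ size threshold: in that case only the strongly connected components inside the initial ball $B^+_{G'}(t',r)$ are labeled~(\ref{item:a}), which is correct since the volume removed is at most $m$ (hence at most $m/2$ edges), and the algorithm recurses on $G[U_1]$. The second outcome labels the removed components from both \Cref{algo:6} and \Cref{algo:9} by~(\ref{item:a}), correct by the threshold checks together with \Cref{lem:2.9}, and labels the final $U$ by~(\ref{item:c}), correct with probability at least $1-2m\epsilon$ by \Cref{lem:2.10}. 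Under $\textup{vol}(V^+)<\tfrac54m$, \Cref{lem:2.9} gives that the give-up condition fails to trigger with probability at least $1/2$, yielding the claimed constant probability of the second outcome.

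The \emph{running time} breaks into: one Dijkstra call per sampled vertex to test $V^+$-membership for the $8\ln(2C\ln(1/\epsilon))$ initial samples, costing $O((m+n\log\log n)\log\log(1/\epsilon))$ via Thorup's priority queue~\cite{thorup2003integer}; a single $O(m+n\log\log n)$ call for $B^+_{G'}(t',r)$; and \Cref{algo:6} and \Cref{algo:9}, which I would analyze by specializing the Dijkstra-speedup argument of \Cref{lem:6} to one iteration with sampling rate $O(\log(1/\epsilon))$, bounding the expected priority-queue insertions of any vertex by $O(\log\log(1/\epsilon))$ and hence the total cost by $O((m+n\log\log n)\log\log(1/\epsilon))$. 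The main obstacle I anticipate is the bookkeeping around the give-up branch: one must confirm that discarding \Cref{algo:6}'s output truly removes its tentative cuts from $S$, so the cut-probability accounting does not overcount, while still leaving the recursive call on $G[U_1]$ to make progress; this is a local pseudocode detail but the one spot where a naive merger of the sub-lemmas could silently double-count.
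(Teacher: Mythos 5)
Your proposal is correct and follows essentially the same route as the paper: Lemma~\ref{lem:2.12} is proved there exactly as a packaging of the preceding discussion, with the cut probability split into the initial ball ($\frac4\Delta\cdot w(u,v)$) plus Algorithms~\ref{algo:6} and~\ref{algo:9} analyzed as in \Cref{lem:4}, the two outcomes read off the give-up check with \Cref{lem:2.9} giving the constant probability and \Cref{lem:2.10} the correctness of labels~(\ref{item:a}) and~(\ref{item:c}), and the running time bounded via the $V^+$-membership Dijkstras plus the \Cref{lem:6}-style speedup. Your anticipated obstacle about the give-up branch is harmless: the per-edge bound already sums the three sources, so retaining or discarding the tentative cuts of \Cref{algo:6} does not affect the stated $O(\frac{\log\log(1/\epsilon)}{\Delta})\cdot w(u,v)$ bound, and the paper simply discards that output and recurses on $G[U_1]$.
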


\subsubsection{Case 3}
This case is symmetric to Case~2 and aims to succeed under the assumption that $\textup{vol}(V^-)<\frac54m$. The algorithm is essentially the same, with out-balls replaced by in-balls and vice versa. The guarantees are identical to \Cref{lem:2.12}, except that the assumption is now $\textup{vol}(V^-)<\frac54m$.

\subsubsection{Combining all three cases}
The final algorithm executes Case~2 and Case~3 each with probability $\frac1{2\ln\ln(1/\epsilon)}$, and Case~1 with the remaining probability $1-\frac1{\ln\ln(1/\epsilon)}$. If there is a strongly connected component of $G-S$ that does not satisfy any of the conditions in property~(\ref{item:2}) of \Cref{thm:2.2}, then the algorithm recursively processes that component. Such a component must have more than $\frac34m$ edges, so there is at most one such component. By design, the algorithm finishes with a set $S$ that satisfies property~(\ref{item:2}).

It remains to bound the probability that each edge joins $S$, as well as the running time. We measure progress by the size of the recursive component, since the algorithm stops once the component falls below $\frac34m$ edges. (If there is no recursive component, then the algorithm immediately finishes.)

First, suppose that $\textup{vol}(V^+)\ge\frac54m$ and $\textup{vol}(V^-)\ge\frac54m$, which is the assumption for Case~1. There is a constant probability of selecting Case~1, and conditioned on that, the recursive component (if any) has $\Omega(\frac m{\log\log(1/\epsilon)})$ fewer edges with constant probability by \Cref{lem:2.5}. It follows that the recursive component has $\Omega(\frac m{\log\log(1/\epsilon)})$ fewer edges in expectation.

Next, suppose that $\textup{vol}(V^+)<\frac54m$. The algorithm selects Case~2 with probability $\frac1{2\ln\ln(1/\epsilon)}$, and conditioned on that, the algorithm finishes with constant probability by \Cref{lem:2.12}. If we interpret a finished algorithm as having a recursive component of no edges, then the recursive component also has $\Omega(\frac m{\log\log(1/\epsilon)})$ fewer edges in expectation.

Finally, the case $\textup{vol}(V^+)<\frac54m$ is symmetric. Overall, each case decreases the number of edges by $\Omega(\frac m{\log\log(1/\epsilon)})$ in expectation, so the expected number of recursive calls is $O(\log\log(1/\epsilon))$.\footnote{This can be made rigorous by treating the process as a (super-)martingale and applying the optional stopping theorem. More precisely, define the super-martingale as the number of edges remaining plus $t\cdot\Omega(\frac m{\log\log1/\epsilon})$, where $t$ is the number of recursive calls so far. If $T$ is the total number of rounds, then by the optional stopping theorem, $\mathbb E[T\cdot \Omega(\frac m{\log\log1/\epsilon})]\le m$, as promised.} By \Cref{lem:2.5,lem:2.12}, the probability that an edge $(u,v)$ joins $S$ is $O(\frac1\Delta)\cdot w(u,v)$ in Case~1 and $O(\frac{\log\log(1/\epsilon)}{\Delta})\cdot w(u,v)$ in Cases~2~and~3. Together with the probabilities of selecting each case, the overall probability that edge $(u,v)$ joins $S$ on a given call of the algorithm is $O(\frac1\Delta)\cdot w(u,v)$. Since the expected number of calls is $O(\log\log(1/\epsilon))$, the total probability that edge $(u,v)$ joins $S$ is $O(\frac{\log\log(1/\epsilon)}{\Delta})\cdot w(u,v)$, as promised.

It remains to bound the expected running time. By \Cref{lem:2.5,lem:2.12}, the algorithm runs in $O(m+n\log\log n)$ time in Case~1 and $O((m+n\log\log n)\log\log(1/\epsilon))$ time in Cases~2~and~3. Together with the probabilities of selecting each case, the expected running time is $O(m+n\log\log n)$ on a given call of the algorithm. Since the expected number of calls is $O(\log\log(1/\epsilon))$, the total expected running time is $O((m+n\log\log n)\log\log(1/\epsilon))$, as promised.

\subsection{Decomposition Interface}\label{sec:decomposition-interface}

For the rest of the paper, we focus on negative-weight SSSP. We start with the directed LDD interface used by the negative-weight SSSP algorithm. This interface is essentially \Cref{algo:1} except that instead of iterating over $i\in\{1,2,\ldots,L\}$ in a single recursive instance, the algorithm tracks $i$ as a recursive parameter and creates a separate recursive instance for each iteration $i$. Since this recursive instance may have fewer edges than the initial instance of \Cref{algo:1}, the algorithm also needs to track the original number $m_0$ of edges. The other difference is that the algorithm still recursively solves components with weak diameter $\Delta$, namely the components labeled (\ref{item:b}), with $\Delta/2$ as the new diameter parameter. For simplicity, we do not specify the base cases, deferring that to the negative weight SSSP algorithm.

\begin{lemma}\label{lem:decompose}
There is a recursive algorithm $\textsc{Decompose}(H,\Delta,i,m_0)$, where $m_0\ge|E(H)|$, that cuts a random subset $S\subseteq E$ of edges and recursively decomposes strongly connected components (SCCs) of $H-S$ as follows:
 \begin{enumerate}
 \item If $i=0$, then there are three possibilities for a SCC $H'$ of $H-S$:
  \begin{enumerate}
  \item The SCC $H'$ has at most $\frac34m_0$ edges. In this case, the algorithm recursively calls\linebreak $\textsc{Decompose}(H',\Delta,0,m_{H'})$ on this component.\label{item:a2}
  \item The SCC $H'$ is certified to have weak diameter at most $\Delta/2$ in $H$. In this case, the algorithm recursively calls $\textsc{Decompose}(H',\Delta/2,0,m_{H'})$ on this component.\label{item:b2}
  \item The SCC $H'$ satisfies neither case above. Let $U\subseteq V$ be the union of all such SCCs. The algorithm recursively calls $\textsc{Decompose}(H[U],\Delta,1,m_0)$.
  \end{enumerate}
 Each edge $(u,v)$ joins $S$ with probability at most $O(\frac{\log\log(1/\epsilon)}\Delta)\cdot w(u,v)$. The algorithm takes\linebreak $O((m+n\log\log n)\log\log(1/\epsilon))$ expected time and succeeds with probability at least $1-2m_0\epsilon$.
 \item If $i>0$, then there are two possibilities:
  \begin{enumerate}
  \item The component $H'$ has at most $1.5m_0/2^{2^{i-1}}$ edges. In this case, the algorithm recursively calls $\textsc{Decompose}(H',\Delta,0,m_{H'})$ on this component.
  \item Let $U\subseteq V$ be the union of all remaining SCCs. The algorithm recursively calls\linebreak $\textsc{Decompose}(H[U],\Delta,i+1,m_0)$. When $i=\lceil\lg\lg m_0\rceil$, there are no remaining components, so there is no recursive call.
  \end{enumerate}
 Each edge $(u,v)$ joins $S$ with probability at most $O(\frac{2^i\log\log(1/\epsilon)}\Delta)\cdot w(u,v)$. The algorithm takes\linebreak $O((m+n\log\log n)\cdot(2^i+\log\log(1/\epsilon)))$ expected time and succeeds with probability at least $1-2m_0\epsilon$.
  \end{enumerate}
Moreover, suppose that the original instance is $(G,\Delta_0,0,|E(G)|)$. In each recursive instance $(H,\Delta,i,m_0)$ where $\Delta<\Delta_0$, the subgraph $H$ has weak diameter at most $\Delta$ in $G$.
\end{lemma}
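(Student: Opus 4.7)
The plan is to interpret $\textsc{Decompose}$ as a reorganization of \Cref{algo:1} in which the outer loop over $i\in\{1,\ldots,L\}$ is unrolled into a chain of recursive calls, with $i=0$ handling the preprocessing. I would verify the two clauses and the weak-diameter invariant in turn.

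For $i=0$, the procedure is a direct invocation of the preprocessing step (\Cref{thm:2.2}) with parameters $\Delta$ and $\epsilon$. The three labels produced by \Cref{thm:2.2} map onto cases~(\ref{item:a2}), (\ref{item:b2}), and the residue of the interface: components carrying label~(\ref{item:a}) become case~(\ref{item:a2}) and are recursed on at the same $\Delta$; components carrying label~(\ref{item:b}) (with certified weak diameter $\Delta/2$) become case~(\ref{item:b2}) and are recursed on with halved $\Delta$; components carrying label~(\ref{item:c}) are bundled into a single induced subgraph handed to the $i=1$ instance. The per-edge cut probability, expected running time, and failure probability are inherited verbatim from \Cref{thm:2.2}.

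For $i>0$, a call should execute exactly steps~(3a)--(3c) of iteration $i$ in \Cref{algo:1}: sample $S_i\subseteq V(H)$, draw $r_i\in[a_i,a_{i-1}]$, and cut the resulting out- and in-balls in a random order, recursing on each removed ball $H'$ as an $i=0$ call. The cut probability bound $O(2^i\log\log(1/\epsilon)/\Delta)\cdot w(u,v)$ is precisely \Cref{lem:4} applied to this single iteration. The size bound $|E(H')|\le 1.5m_0/2^{2^{i-1}}$ on each removed ball is the argument inside \Cref{lem:5}: for $i=1$ the out-light/in-light property of the residue coming from the $i=0$ call gives $|E(H')|\le\tfrac34m_0$, while for $i>1$ \Cref{lem:2} bounds $\textup{vol}(B^\pm_{G[U_i]}(v,a_{i-1})\cap U_i)\le 2m_0/2^{2^{i-1}}$ and hence the edge bound. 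When $i=\lceil\lg\lg m_0\rceil$, the sampling probability $\min\{1,\,2^{2^i}\ln(1/\epsilon)\deg(v)/(2m)\}$ equals $1$ for every vertex, so every vertex is swept into some ball and no residue is passed on. The expected running time follows from the per-iteration analysis in \Cref{lem:6}: a fixed vertex is enqueued in Dijkstra at most $2^i+O(\log\log(1/\epsilon))$ times in expectation during iteration $i$, which with Thorup's priority queue yields the stated bound. The failure probability is the union bound over \Cref{thm:2.2} and \Cref{lem:2} for this single iteration.

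The weak-diameter invariant is proved by induction on the recursion tree: at any call $(H,\Delta,i,m_0)$ with $\Delta<\Delta_0$, every pair $u,v\in V(H)$ satisfies $d_G(u,v)\le\Delta$. The invariant is vacuous at the root, where $\Delta=\Delta_0$. Each recursive call either preserves $\Delta$ and passes to an induced subgraph of $H$, trivially preserving the invariant, or it is the case-(\ref{item:b2}) branch at $i=0$ where $\Delta$ halves and the SCC $H'$ is certified of weak diameter at most $\Delta/2$ in $H$; since $H$ is an induced subgraph of $G$, distances in $G$ are no larger than distances in $H$, so the invariant survives. I expect the main subtlety to be extracting just the per-iteration contributions from the global recurrences of \Cref{lem:5,lem:6}, which were stated over the entire for-loop of \Cref{algo:1} inside a single call; the per-call accounting in the unrolled form must be shown to reproduce the same bounds without double-counting across the recursion tree.
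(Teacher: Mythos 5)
Your plan is the same as the paper's: unroll \Cref{algo:1} so that the $i=0$ call runs the preprocessing of \Cref{thm:2.2} and each $i>0$ call executes the single iteration $i$ of the for-loop, with the cut probability taken from \Cref{lem:4}, the size bounds from \Cref{lem:2} (and the out-light/in-light property for $i=1$), the running time from the per-iteration vertex-enqueue count $2^i+O(\log\log(1/\epsilon))$ inside the proof of \Cref{lem:6}, and the weak-diameter claim by induction down the recursion tree. The per-iteration extraction you flag as the "main subtlety" is indeed how the paper argues, and it goes through: the only adjustment the paper makes is to state the sampling probability with $m_0$ in the denominator (and a constant $C$), and either normalization supports the per-iteration invocations of \Cref{lem:2}, \Cref{lem:4}, and the Dijkstra counting.

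There is, however, one concrete miscalibration in your write-up: you invoke \Cref{thm:2.2} (and implicitly the radii $a_i$) with diameter parameter $\Delta$, yet assert that label-(\ref{item:b}) components come "with certified weak diameter $\Delta/2$". With parameter $\Delta$, the preprocessing only certifies weak diameter at most $\Delta$ (in Case~1 the kept piece is $B^+_G(t,r)\cap B^-_G(t,r)$ with $r\le\Delta/2$, giving diameter $\le\Delta$), which is not enough for case~(\ref{item:b2}) of the interface nor for the final invariant that an instance with parameter $\Delta/2$ has weak diameter at most $\Delta/2$ in $G$ --- and that invariant is exactly what the SSSP analysis (Claim~\ref{claim:nonegcycles}) later consumes. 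The paper avoids this by simulating \Cref{algo:1} with diameter parameter $\Delta/2$ throughout the instance, so that label-(\ref{item:b}) components are certified at $\Delta/2$; the cut-probability and time bounds are unaffected up to constants. With that one-line fix your argument matches the paper's proof.
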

\begin{proof}
We simulate \Cref{algo:1} with diameter parameter $\Delta/2$ by breaking the algorithm into more recursive steps, starting with $i=0$ as the recursive parameter. On the preprocessing step, the algorithm calls \Cref{thm:2.2} and makes appropriate recursive calls to components labeled (\ref{item:a}) and (\ref{item:b}). The desired properties are guaranteed by \Cref{thm:2.2}. Then, instead of continuing \Cref{algo:1}, the algorithm makes a recursive call on $H[U]$ with parameters $i=1$ and $m_0=m$.

On a recursive instance with $i>0$, the algorithm skips the preprocessing step and simulates the loop at step~\ref{item:for} for that value of $i$, except that the sampling probability is $\min\{1,\,C\cdot 2^{2^i}\ln(1/\epsilon)\cdot\frac{\deg(v)}{2m_0}\}$ for large enough $C>0$ to guarantee that \Cref{lem:2} holds with high probability. The algorithm makes a recursive call on each cut ball with parameter $i=0$, and makes a call with parameter $i+1$ on the remaining vertices. Note that when $i=\lceil\lg\lg m_0\rceil$, all vertices are sampled, so there are no remaining vertices. When $i>1$, each cut ball has volume at most $2m_0/2^{2^{i-1}}$, so each recursive instance has at most $m_0/2^{2^{i-1}}$ edges. When $i=1$, each cut ball has at most $\frac34m_0$ edges, so in both cases, the number of edges is at most $1.5m_0/2^{2^{i-1}}$. Then, instead of continuing onto iteration $i+1$, the algorithm makes a recursive call on $H[U]$ with parameter $i+1$ and the same value of $m_0$. By \Cref{lem:4}, each edge is cut with probability $O(\frac{2^i\log\log(1/\epsilon)}{\Delta})\cdot w(u,v)$. Finally, by inspecting the proof of \Cref{lem:6}, the expected number of times each vertex is added to the priority queue is at most $2^i+O(\log\log(1/\epsilon))$, so the algorithm runs in $O((m+n\log\log n)\cdot(2^i+\log\log(1/\epsilon)))$ time using Thorup's priority queue~\cite{thorup2003integer}.

Finally, suppose that the original instance is $(G,\Delta_G,0,m_G)$. Consider a recursive instance $(H,\Delta,i,m_0)$. If it is a recursive call from step~(\ref{item:b2}) of the parent instance, then $H$ has weak diameter at most $\Delta$ in the parent graph. Since the parent graph is a subgraph of $G$, we conclude that $H$ has weak diameter at most $\Delta$ in $G$. Otherwise, if recursive instance $(H,\Delta,i,m_0)$ is a result of a parent instance with the same parameter $\Delta<\Delta_G$, then by induction (from the root to the leaves), the parent graph has weak diameter at most $\Delta$ in $G$. Since $H$ is a subgraph of the parent graph, it also has weak diameter at most $\Delta$ in $G$.
\end{proof}

\section{Negative-Weight Single Source Shortest Paths Algorithm}\label{sec:sssp-main}

For the rest of the paper, we focus on negative-weight single source shortest paths, proving our main theorem below.
\begin{theorem}\label{thm:main}
There is a randomized algorithm for negative-weight single source shortest paths that runs in $O((m+n\log\log n)\log(nW)\log n\log\log n)$ time and succeeds with high probability, where $W$ is the maximum absolute value of a negative weight edge. If a negative weight cycle exists, then the algorithm instead outputs a negative weight cycle.
\end{theorem}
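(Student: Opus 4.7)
The plan is to adopt the outer scaling framework of Bernstein, Nanongkai, and Wulff-Nilsen and its refinement by Bringmann, Cassis, and Fischer, and to replace their LDD subroutine with the Decompose interface from \Cref{lem:decompose}. After a preliminary doubling of edge weights (so that the least negative weight is a power of two), it suffices to implement $O(\log(nW))$ outer rounds of a \textsc{ScaleDown} subroutine: given a graph whose minimum edge weight is at least $-2B$, compute a price function $\phi$ such that the reduced weights satisfy $w_\phi(e) \ge -B$ for every edge $e$. After $O(\log(nW))$ halvings the weights become non-negative and a single Dijkstra call finishes the computation.

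For the inner \textsc{ScaleDown} I would work with $G_{\ge 0}$, the graph obtained by zeroing out negative edges, and invoke \textsc{Decompose} on $G_{\ge 0}$ with diameter parameter $\Delta = \Theta(B)$ and error parameter $\epsilon = 1/\textup{poly}(n)$. Each resulting strongly connected component $H'$ has weak diameter at most $\Delta$ in $G_{\ge 0}$; the standard argument then shows that inside such a component no simple path can have reduced weight below $-\Delta$, so recursing on $H'$ produces a local price function $\phi_{H'}$ that makes all internal edges non-negative. To stitch the local prices into a global one, add a virtual super-source with zero-weight arcs to every vertex and run a bounded-hop Bellman-Ford-style relaxation whose depth equals the expected number of SCC crossings of any negative-weight path. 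Because \textsc{Decompose} guarantees cut probability $O(\ell(n)/\Delta)$ with $\ell(n) = O(\log n \log\log n)$ and the positive cost of any candidate negative path is at most $\Delta$, a negative-weight path crosses only $O(\log n \log\log n)$ SCCs in expectation, bounding the number of relaxation sweeps.

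For the runtime, one invocation of \textsc{ScaleDown} performs a single top-level call to \textsc{Decompose}, costing $O((m+n\log\log n)\log n\log\log n)$ in expectation by the LDD theorem, plus $O(\ell(n)) = O(\log n\log\log n)$ rounds of Dijkstra/Bellman-Ford-style relaxation over the cut edges, again $O((m+n\log\log n)\log n\log\log n)$ using Thorup's priority queue. Multiplying by the $O(\log(nW))$ outer scaling rounds gives the claimed bound $O((m+n\log\log n)\log(nW)\log n\log\log n)$. The $\epsilon = 1/\textup{poly}(n)$ choice keeps the cumulative failure probability of all \textsc{Decompose} invocations at $1/n^{\Omega(1)}$.

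The main obstacle I expect is handling infeasibility cleanly. Rather than replacing the noisy binary search of \cite{bringmann2023negative} after the fact, I would fold negative-cycle detection directly into each recursion level: whenever a recursive \textsc{ScaleDown} call returns a purported local price function that fails to make some reduced weight at least $-B$, or the stitching relaxation fails to converge in the expected number of sweeps, the tentative distance labels together with predecessor pointers already witness a closed walk of strictly negative reduced cost, from which a simple negative cycle can be extracted by following predecessors backward. The delicate point is to install this check at every level and bound the extra bookkeeping by the cost of a single run of the algorithm, so that the failure case is detected immediately and does not inflate the stated expected runtime by more than a constant factor.
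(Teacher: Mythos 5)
Your outer structure (scaling to $O(\log(nW))$ rounds, decomposing $G_{\ge0}$ with $\Delta=\Theta(B)$, stitching component price functions with a Bellman--Ford/Dijkstra hybrid whose work is governed by the expected number of cut edges on a shortest path) matches the paper, but your cost accounting hides the step where the actual improvement has to happen. You charge one \textsc{ScaleDown} as ``a single top-level call to \textsc{Decompose} plus $O(\ell(n))$ relaxation rounds,'' yet the recursion on the components of weak diameter $\le\Delta$ is not free: those components must themselves be decomposed at diameter $\Delta/2,\Delta/4,\dots$ down to $W/2$, i.e.\ $O(\log n)$ diameter levels. If at each level you invoke the LDD as a black box with loss $\ell(n)=O(\log n\log\log n)$ and cost $O((m+n\log\log n)\log n\log\log n)$, one \textsc{ScaleDown} costs $O((m+n\log\log n)\log^2 n\log\log n)$ --- a full log factor worse than claimed (and worse than \cite{bringmann2023negative}). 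The paper's point is precisely to \emph{not} use the LDD per level: \Cref{lem:decompose} exposes the LDD's internal recursion as a tree of instances $(H,\Delta,i,m_0)$ with per-instance cut probability $O(2^i\log\log(1/\epsilon)/\Delta)$ and cost $O((m+n\log\log n)(2^i+\log\log(1/\epsilon)))$, and the shortest-path computation is run bottom-up on that same tree; a joint Seymour-style charging with the potential $x=\frac4W\, d\,m_0$ (mirroring the recurrence in \Cref{lem:5}) shows each vertex is charged only $O(\log n\log\log n)$ over \emph{all} size and diameter levels combined. Without this entwining your plan does not achieve the stated bound.

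The negative-cycle handling is also not sound as described. ``The stitching relaxation fails to converge in the expected number of sweeps'' is not a certificate of infeasibility --- the sweep count is only bounded in expectation, so exceeding it can be bad luck; and a returned price function violating the $-B$ bound gives a negative reduced-weight edge, not a negative closed walk. Worse, when a negative cycle genuinely exists you have no bound at all on the running time of the relaxation, since paths can keep improving for $\Omega(n)$ sweeps. The paper instead threads an auxiliary weight $w_{G'_{\ge0}}$ through \textsc{BellmanFordDijkstra}, terminates as soon as some recorded path $P$ has auxiliary weight exceeding the node's diameter $d$, and closes $P$ into a negative cycle using the weak-diameter guarantee of \Cref{lem:decompose} (\Cref{claim:nonegcycles}); bounding the \emph{expected} number of iterations before such a $P$ appears when $H$ already contains a negative cycle requires the separate reference-path argument with the threshold $M=-(d+1)|V(H)|W$ in \Cref{lemma:neg_cycle_in_G_prime}. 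You would need some equivalent of both mechanisms for your folded-in detection to keep the expected runtime within a constant factor.
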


At a high level, the algorithm follows the framework of~\cite{bringmann2023negative} replacing their recursive parameter $\kappa(G)$ with weak diameter, as previously done by~\cite{fischer2025simple}. The algorithm of~\cite{bringmann2023negative} loses a logarithmic factor from each recursive decomposition step, as well as another from the recursion depth. Instead, we use the specialized decomposition procedure from \Cref{lem:decompose}, which merges both logarithmic factors into a single $O(\log n\log\log n)$, obtaining the improvement.

In an orthogonal direction, we also remove the technical noisy binary search step from~\cite{bringmann2023negative}, which is used to recover a negative weight cycle if one exists. Instead, we show that a simple augmented Bellman-Ford/Dijkstra algorithm can successfully recover one.

\subsection{Additional Preliminaries}\label{sec:sssp-prelims}

A \emph{price function} $\phi:V\to\mathbb R$ is a function on the vertices used to reweight the graph: the reweighted graph $G_\phi$ satisfies $w_{G_\phi}(u,v)=w_G(u,v)+\phi(u)-\phi(v)$ for all edges $(u,v)\in E$. A result of Johnson~\cite{johnson1977efficient} reduces the Negative-Weight SSSP problem to computing a potential function $\phi$ such that all edge weights of $G_\phi$ are non-negative, after which Dijkstra's algorithm can be run on $G_\phi$ to recover the single-source shortest paths in $G$.

The iterative \emph{scaling technique} of~\cite{bernstein2025negative} does not compute such a potential function in one go. Instead, if the graph $G$ has all edge weights at least $-W$, then the goal is to either compute a potential function $\phi$ such that $G_\phi$ has all edge weights at least $-W/2$, or output a negative-weight cycle; we call this problem \textsc{Scale}.~\cite{bernstein2025negative} show that Negative-Weight SSSP reduces to $O(\log(nW))$ iterations of \textsc{Scale} when the original graph has all edge weights at least $-W$. Moreover,~\cite{bringmann2023negative} show that if each iteration of \textsc{Scale} runs in $T$ time \emph{in expectation}, then Negative-Weight SSSP can be computed in $O(T\log(nW))$ time \emph{with high probability} (and with Las Vegas guarantee) by restarting any runs of \textsc{Scale} that exceed $2T$ time and applying a Chernoff bound. Hence, to obtain Theorem~\ref{thm:main}, we only focus on the task of solving \textsc{Scale} in $O((m+n\log\log n)\log n\log\log n)$ expected time. It even suffices to solve \textsc{Scale} in this time with high probability, since the solution can be easily checked for correctness and \textsc{Scale} can be restarted if incorrect.

We need two main subroutines from~\cite{bernstein2025negative,bringmann2023negative}. The first subroutine is a Bellman-Ford/Dijkstra hybrid algorithm for Negative-Weight SSSP, also introduced in~\cite{bernstein2025negative} and refined in~\cite{bringmann2023negative}. For convenience, we do not mention a source vertex in the input to \textsc{BellmanFordDijkstra}. Instead, the goal is to compute, for each vertex $v\in V$, the shortest path ending at $v$ (and starting from anywhere). Note that the shortest path has weight at most $0$ since the empty path at $v$ of weight $0$ is a possible choice. To frame this problem as SSSP, simply add a source vertex $s$ with zero-weight edges to each vertex in $V$, and call SSSP on the new graph $G_s$ with source $s$. For further convenience, we also integrate a potential function $\phi:V\to\mathbb R$ directly into \textsc{BellmanFordDijkstra}, which runs the Bellman-Ford/Dijkstra hybrid on $(G_s)_\phi$ instead (with the extension $\phi(s)=0$), and quickly finds single-source shortest paths with few negative-weight edges in $(G_s)_\phi$. These single-source shortest paths in $(G_s)_\phi$ correspond exactly to single-source shortest paths in $G_s$, which in turn correspond to shortest paths in $G$ ending at each vertex.

In the version below, we also include the algorithm's guarantees after each iteration $i$ to allow for early termination if a negative-weight cycle is detected. In addition, we also maintain the weights of our current shortest paths under a separate, auxiliary weight function $w'$. This can be implemented by recording the distance of each current path $P$ as a tuple $(w_G(P),w'(P))$, but only using the first coordinate for comparisons. For completeness, we include this proof in the appendix.

\begin{restatable}[Lemmas 25 and 62 of~\cite{bringmann2023negative}]{lemma}{BFD}\label{lem:BFD}
Let $G=(V,E,w_G)$ be a directed graph and $\phi$ be a potential function. For each vertex $v\in V$, let $\eta_{G,\phi}(v)$ be the minimum number of edges of negative weight in $G_\phi$ among all shortest paths in $G$ ending at $v$. There is an algorithm \textsc{BellmanFordDijkstra}$(G,\phi)$ that computes a shortest path ending at each vertex in time\linebreak $O(\sum_v(\deg(v)+\log\log n)\cdot\eta_{G,\phi}(v))$.

The first $i$ iterations of the algorithm runs in $O((m+n\log\log n)\cdot i)$ time and computes, for each vertex $v\in V$, the minimum weight $d_i(v)$ of a path $P_i(v)$ from $s$ to $v$ among those that contains less than $i$ negative-weight edges. Moreover, given input auxiliary weights $w'$ on the edges (independent of the weights $w_G$ that determine shortest paths), the algorithm can also compute the weight $d_i'(v)$ under $w'$ of some such path $P_i(v)$.
\end{restatable}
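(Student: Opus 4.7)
The plan is to implement \textsc{BellmanFordDijkstra}$(G,\phi)$ as an iterative hybrid that alternates a Dijkstra phase on non-negative reduced-weight edges with a single Bellman--Ford relaxation of all negative reduced-weight edges. I would first reduce ``shortest path ending at $v$ (possibly empty)'' to standard SSSP by adjoining a source $s$ with a zero-weight arc to every vertex, then work throughout in $(G_s)_\phi$ with the extension $\phi(s)=0$, partitioning edges into $E^+$ (non-negative reduced weight) and $E^-$ (negative reduced weight). Maintain tentative distances $d(\cdot)$ initialized so that $d(s)=0$ and all others are $+\infty$. Iteration $i$ performs (a) a Dijkstra computation that relaxes only edges of $E^+$, initialized from the set of vertices whose $d$-value changed in the previous iteration, and (b) a single Bellman--Ford pass relaxing every edge of $E^-$ exactly once. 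Using Thorup's integer priority queue~\cite{thorup2003integer}, a single iteration costs $O(m+n\log\log n)$, yielding the coarse bound $O((m+n\log\log n)\cdot i)$ for the first $i$ iterations.

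For correctness, I would prove by induction on $i$ that after iteration $i$ the value $d(v)$ equals $d_i(v)$, the minimum weight over $s$-$v$ paths in $(G_s)_\phi$ (equivalently $G_s$) using fewer than $i$ edges of $E^-$, with a witnessing path $P_i(v)$ recoverable through stored predecessors. The inductive step is transparent: the Bellman--Ford pass extends any previously discovered path by at most one further $E^-$ edge, and the subsequent Dijkstra phase optimally extends each resulting prefix along $E^+$. In particular, a genuine shortest path ending at $v$ in $G$ is recovered by iteration $\eta_{G,\phi}(v)+1$.

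For the refined per-vertex bound $O(\sum_v(\deg(v)+\log\log n)\eta_{G,\phi}(v))$, I would charge priority-queue and relaxation work to vertices rather than iterations. Once iteration $\eta_{G,\phi}(v)$ completes, $d(v)$ has attained its final value and cannot strictly decrease again, so $v$ is inserted into the Dijkstra priority queue across all iterations at most $O(\eta_{G,\phi}(v))$ times; each insertion contributes $O(\log\log n)$ priority-queue time and $O(\deg(v))$ edge-relaxation work, and summing over $v$ yields the claimed bound. The auxiliary weight $w'$ is tracked by storing a pair $(d(v),d'(v))$ at each vertex, using only the first coordinate in all priority-queue comparisons and in ``improvement'' tests, and setting $d'(v)\gets d'(u)+w'(u,v)$ whenever a relaxation along $(u,v)$ replaces the current path. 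Since all choices of $P_i(v)$ are determined by the $d$-comparisons, the resulting $d_i'(v)$ is exactly the $w'$-weight of the selected path.

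The step I expect to be the main obstacle is justifying the ``frontier restart'' in part (a) of each iteration, i.e.\ showing that initializing the Dijkstra phase only from vertices whose $d$-value decreased in the preceding Bellman--Ford pass is complete: no other vertex can have an improved $d$-value at the start of iteration $i$, because any new shortest path with fewer than $i$ negative edges must traverse a newly relaxed $E^-$ edge followed by an $E^+$-only suffix. Combined with a standard optimality argument for Dijkstra on $E^+$, this keeps each iteration's work at $O(m+n\log\log n)$ while preserving the inductive invariant, and it is also what makes the per-vertex charging argument in the previous paragraph valid.
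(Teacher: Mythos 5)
Your overall design is the same as the paper's (a phase-alternating Dijkstra/Bellman--Ford hybrid on $(G_s)_\phi$ with Thorup's queue, the invariant that after iteration $i$ the tentative value equals $d_i(v)$, and a charging argument via $\eta_{G,\phi}(v)$), and the ``frontier restart'' worry you flag is exactly what the paper resolves with its invariants. However, there is one concrete step in your algorithm that breaks the refined bound: in step (b) you relax \emph{every} edge of $E^-$ in \emph{every} iteration. That costs $\Theta(|E^-|)$ per iteration over up to $\max_v \eta_{G,\phi}(v)$ iterations, i.e.\ total negative-edge work $\Theta(|E^-|\cdot\max_v\eta_{G,\phi}(v))$, which is not bounded by $\sum_v(\deg(v)+\log\log n)\,\eta_{G,\phi}(v)$: take one vertex $u$ with $\eta_{G,\phi}(u)=1$ carrying $\Theta(m)$ outgoing negative edges, plus a long chain of negative edges forcing $\Theta(n)$ iterations; your version spends $\Theta(mn)$ scanning $u$'s out-edges while the claimed bound is far smaller. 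Your charging paragraph (``each insertion contributes $O(\deg(v))$ edge-relaxation work'') silently assumes the correct algorithm, in which the Bellman--Ford pass relaxes only the negative edges leaving vertices whose $d$-value changed in the current iteration --- this is exactly the paper's set $A$ (vertices extracted or improved in the current Dijkstra phase). With that restriction, each negative out-edge of $v$ is scanned only when $v\in A_i$, which happens at most $\eta_{G,\phi}(v)$ (plus $O(1)$) times, and your per-vertex charging goes through; as stated, the algorithm and the analysis do not match.

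A second, smaller point: the completeness of the frontier restart is not automatic from ``any improved path must traverse a newly relaxed $E^-$ edge followed by an $E^+$-only suffix''; you also need that the head of that negative edge was strictly improved in the preceding Bellman--Ford pass (otherwise it is not in the restart set). The paper proves this via the two quantities $d_i(v)$ and $d_i'(v)$ (value after the Dijkstra phase versus after the Bellman--Ford phase) and the identities expressing $d_{i+1}$ in terms of $d_i'$ restricted to vertices with $d_i(u)>d_i'(u)$; you should state and prove the analogous pair of invariants rather than appeal to the suffix decomposition alone, since that is also what justifies the ``inserted at most $O(\eta_{G,\phi}(v))$ times'' claim used in the running-time charge.
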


The second subroutine is given a graph with its strongly connected components and a potential function, and fixes the DAG edges outside the SCCs while keeping the edge weights inside the SCCs non-negative.
\begin{lemma}[Lemma 3.2 of~\cite{bernstein2025negative}]\label{lem:fix-dag}
Consider a graph $G$ with strongly connected components $C_1,\ldots,C_\ell$ and a potential function $\phi$, and suppose that all edges inside the SCCs have non-negative weight in $G_\phi$. There is an $O(n+m)$ time algorithm that adjusts $\phi$ so that all edge weights in $G_\phi$ are now non-negative.
\end{lemma}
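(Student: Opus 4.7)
The plan is to adjust $\phi$ by a per-SCC additive constant, which automatically preserves all internal reduced weights while fixing the inter-SCC edges via a linear-time shortest-path computation on the condensation DAG. Concretely, I will (i) build the SCC condensation $H$ from the given components $C_1,\ldots,C_\ell$, (ii) associate to each inter-SCC edge $(u,v)$ its current reduced weight $h(u,v) = w(u,v)+\phi(u)-\phi(v)$ (possibly negative), (iii) compute per-SCC offsets $\alpha_C$ via DAG shortest paths, and (iv) set $\phi(v) \leftarrow \phi(v) + \alpha_C$ for every $v \in C$ and every SCC $C$.

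To implement step (iii), I first obtain a topological order of $H$ in $O(n+m)$ via a standard DFS (or by reusing the Tarjan/Kosaraju output that produced the SCCs). I then initialize $\alpha_C = 0$ for every SCC $C$ and process the SCCs in topological order: for each outgoing inter-SCC edge $(u,v)$ with $u \in C_i$ and $v \in C_j$, I relax $\alpha_{C_j} \leftarrow \min\{\alpha_{C_j},\, \alpha_{C_i} + h(u,v)\}$. Each edge is relaxed exactly once, so the pass runs in $O(n+m)$, and because $H$ is a DAG this single topological sweep correctly computes shortest-path-style distances even in the presence of negative $h$ values.

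Correctness then splits into two cases. For an internal edge $(u,v)$ with $u,v \in C$, both endpoints receive the same offset $\alpha_C$, so the reduced weight $w(u,v)+\phi(u)-\phi(v)$ is unchanged and remains non-negative by hypothesis. For an inter-SCC edge $(u,v)$ with $u \in C_i$, $v \in C_j$, the new reduced weight equals $h(u,v) + \alpha_{C_i} - \alpha_{C_j}$, which is non-negative because the relaxation guarantees $\alpha_{C_j} \leq \alpha_{C_i} + h(u,v)$. Summing up, the runtime is dominated by SCC computation, the topological sort, and the single relaxation pass, all of which are $O(n+m)$.

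There is no real obstacle; the only subtlety is that the shift must be \emph{uniform within each SCC} in order to preserve internal reduced weights, and this is built into the definition of $\alpha_C$. A minor point to note is that the DAG shortest path implicitly uses a virtual super-source (the initialization $\alpha_C=0$ for all $C$), which ensures feasibility of the system of inequalities $\alpha_{C_j} \leq \alpha_{C_i} + h(u,v)$ regardless of how negative the $h$ values become.
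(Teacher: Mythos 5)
Your proof is correct, and it is essentially the same argument as the cited Lemma 3.2 of \cite{bernstein2025negative} (which this paper does not reprove): a uniform per-SCC offset computed by a single topological-order relaxation pass over the condensation DAG, which preserves the non-negative internal reduced weights and makes every inter-SCC edge satisfy $\alpha_{C_j}\le\alpha_{C_i}+h(u,v)$, all in $O(n+m)$ time.
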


We say that a vertex subset $U\subseteq V$ has (weak) diameter at most $d$ in a graph $G=(V,E)$ if for any two vertices $u,v\in U$, there is a path in $G$ from $u$ to $v$ of weight at most $d$.
Let $G_{\ge 0}$ be $G$ with negative weight edges replaced by edges of weight 0.
That is, $w_{G_{\ge 0}}(e) = \max\{w_G(e),0\}$.
For a given graph $H$, let \( n_H = |V(H)| \) and \( m_H = |E(H)| \) denote the number of vertices and edges in $H$.

\subsection{The Scale Algorithm}

Recall the specifications of the \textsc{Scale} problem: given an input graph \( G = (V, E, w_G) \) with edge weights \( w_G(e) \geq -W \) for all \( e \in E \), either return a reweighted graph \( G_\phi \) with weights at least \( -W/2 \) or output a negative-weight cycle in \( G \). We present an algorithm that runs in \( O\left( (m + \log \log n) \log n \log \log n \right) \) expected time and succeeds with high probability, which is sufficient to prove Theorem~\ref{thm:main}. 

The algorithm operates in two main phases: (1) recursively decomposing a scaled version of the input graph, and (2) iteratively computing distances on the decomposition.

\subsubsection{Phase 1: Recursive Decomposition}

Let \( G = (V, E, w_G) \) be the input graph. We define a scaled graph \( G' = (V, E, w_{G'}) \) by increasing the weight of every edge by $W/2$, i.e., $w_{G'}(e)=w_G(e)+W/2$ for all edges $e\in E$.

The recursive decomposition calls \textsc{Decompose}$(G'_{\ge0},nW/2,0,m)$ from \Cref{lem:decompose}, i.e., with initial diameter $nW/2$, and stops the recursion if either a single vertex remains, or the diameter falls below $W/2$. The recursive process creates a decomposition tree, where each recursive instance $(H,d,i,m_0)$ is represented by a node $(H,d)$ with tag $(i,m_0)$, whose children are the recursive calls made by this instance.

Once the decomposition is complete, we check each leaf node for the presence of an edge $(u,v)$ that is negative in $G'$. If such an edge exists, then we claim there must be a negative-weight cycle in $G$. To find a negative-weight cycle, we run Dijkstra's algorithm from \( v \) in \( G'_{\ge0} \) (where negative-weight edges in \( G' \) are replaced by edges of weight 0) to find a path to \( u \) of weight at most $W/2$ in $G'_{\ge 0}$; we will show that such a path must exist. Note that this path also has weight at most $W/2$ in $G$, where edge weights are only smaller. The algorithm then outputs the cycle formed by concatenating this path with the edge \( (u,v) \), which is a negative-weight cycle since the edge $(u,v)$ has negative weight in $G'$ and hence weight less than $-W/2$ in $G$.

At this point, if the algorithm does not terminate early with a negative-weight cycle, then all edges in leaf nodes are non-negative in $G'$.

\subsubsection{Phase 2: Iterative Distance Computation}

In this phase, we iteratively compute distance estimates on the decomposition tree, starting from the leaves and moving up to the root. Initially, we set \( \phi(v) = 0 \) for all \( v \in V \). For each node \( (H, d) \) in the decomposition tree, we skip further computation if \( (H, d) \) is a leaf, as all edges in \( H \) are non-negative. For a non-leaf node $(H,d)$, assume that we have already computed a $\phi$ such that all SCCs of \( (H \setminus S_{(H, d)})_\phi \) have non-negative weights. Using Lemma~\ref{lem:fix-dag}, we adjust $\phi$ so that all edges of \( (H \setminus S_{(H, d)})_\phi \) now have non-negative weights. 
We then run \textsc{BellmanFordDijkstra} on $H$ with potential function $\phi$, and use $w_{G'_{\ge 0}}$ as an auxiliary function to help detect a negative-weight cycle.

If \textsc{BellmanFordDijkstra} records a path with auxiliary weight more than $d$, then we terminate \textsc{BellmanFordDijkstra} early and recover such a path $P$ with endpoints $u$ and $v$. 
We then run Dijkstra's algorithm from \( v \) in \( G'_{\ge0} \) to find a shortest path \( P' \) back to \( u \). We show that concatenating \( P \) and \( P' \) produces a negative-weight cycle in $G$. 

After processing all nodes at the current level, we update \( \phi \) and proceed to the next level. After processing all the levels, we guarantee that $G'_\phi$ has non-negative weight edges everywhere.


\begin{algorithm}[H]
\caption{\textsc{Scale} Algorithm}
\begin{algorithmic}[1]
\Function{Scale}{$G = (V, E, w_G), W$}
    \Comment{Input: Graph $G$ with $w_G(e) \geq -W$}
    \ForAll{edges $e \in E$}
        \State $w_{G'}(e) \gets w_G(e) + W/2$
    \EndFor
    \State $G' \gets (V, E, w_{G'})$

    \State \textbf{// Phase 1: Recursive Decomposition}
    \State $d_0 \gets nW/2$
    \State Call the recursive \Call{Decompose}{$G'_{\ge0},d_0,0,m$} with $\epsilon=m^{-5}$. stopping the recursion when $d\le W/2$

    \State \textbf{// Phase 2: Iterative Distance Computation}
    \State $\phi \gets \ $\Call{RecursiveDistanceComputation}{$G',d_0$}

    \State \Return the final potential function $\phi$
\EndFunction
\end{algorithmic}
\end{algorithm}

\begin{algorithm}[H]
\caption{Phase 2: \textsc{RecursiveDistanceComputation} of $G'$ with decomposition tree $\mathcal T$ on $G'_{\ge0}$}
\begin{algorithmic}[1]
\Function{RecursiveDistanceComputation}{$H,d$}
    \If{$(H, d)$ is a leaf}
        \If{\textbf{exists} edge $(u,v) \in E(H)$ with $w_{G'}(u,v) < 0$}
            \State Run Dijkstra's algorithm from $v$ in $G'_{\ge0}$
            \State Let $P$ be the path from $v$ to $u$ in $G'_{\ge0}$
            \State \Return negative-weight cycle formed by $P + (u,v)$
        \Else
            \State \Return$\phi(v)\gets0$ for all $v\in V(H)$ \Comment{All edges are non-negative}
        \EndIf
    \Else
        \ForAll{child nodes $(H',d')$ in $\mathcal T$}
            \State Make a recursive call $\Call{RecursiveDistanceComputation}{H',d'}$
            \If{a negative-weight cycle $C$ is returned}
                \State \Return negative-weight cycle $C$
            \Else
                \State Let $\phi_{(H',d')}$ be the recursive output
            \EndIf
        \EndFor
        \State Merge all $\phi_{(H',d')}$ into a single $\phi_0$ \Comment{$\phi(v)=\phi_{(H',d')}(v)$ for the component $H'$ containing $v$}
        \State Adjust $\phi_0$ so DAG edges are non-negative \Comment{Lemma~\ref{lem:fix-dag}}
        \State Run \textsc{BellmanFordDijkstra}$(H,w_{G'}|_H,\phi_0)$ with auxiliary weight $w_{H_{\ge 0}}$, performing:
        \ForAll{iterations $j$ during \textsc{BellmanFordDijkstra}}
             \State Compute values $d_j(v)$ and $d'_j(v)$ for all $v\in V$
             \If{$d'_j(v) > d$ for some vertex $v$}
                        \State Recover the path $P$ from $u$ to $v$ with auxiliary weight exceeding $d$
                        \State Run Dijkstra's algorithm from $v$ in $G'_{\ge0}$
                        \State Let $P'$ be the path from $v$ to $u$ in $G'_{\ge0}$
                        \State \Return negative-weight cycle formed by $P + P'$
                    \EndIf
                \EndFor
                \State Return $\phi(v)\gets$ computed distance for all $v \in V(H)$
            \EndIf
\EndFunction
\end{algorithmic}
\end{algorithm}

\subsection{Analysis}

Our ultimate goal is to prove the following theorem.

\begin{restatable}{theorem}{Master}\label{thm:master}
    There is an algorithm \textsc{Scale} such that, for any input graph \( G = (V, E, w_G) \) with edge weights \( w_G(e) \geq -W \) for all \( e \in E \), \textsc{Scale}$(G)$ either returns a reweighted graph \( G_\phi \) with edge weights at least \( -W/2 \), or outputs a negative-weight cycle in \( G \). The algorithm runs in expected time \( O\left( (m + \log\log n) \log n \log \log n \right) \) and succeeds with high probability.
\end{restatable}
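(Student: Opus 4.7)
The plan is to verify correctness of Phases 1 and 2 and then bound the expected running time. I will rely on \Cref{lem:decompose}, which guarantees that every non-root node $(H,d)$ of the decomposition tree $\mathcal T$ has $H$ with weak diameter at most $d$ in $G'_{\ge 0}$. For a leaf $(H,d)$ with $d\le W/2$, if some edge $(u,v)\in E(H)$ satisfies $w_{G'}(u,v)<0$, equivalently $w_G(u,v)<-W/2$, the weak-diameter bound supplies a Dijkstra path $P'$ in $G'_{\ge 0}$ from $v$ back to $u$ of $w_{G'_{\ge 0}}$-weight at most $d\le W/2$, so $w_G(P')\le W/2$ and the cycle $(u,v)+P'$ has $w_G$-weight strictly less than $0$. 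Otherwise every edge in the leaf is nonnegative in $w_{G'}$, and $\phi\equiv 0$ serves as the base case of the inductive correctness argument.

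For an internal node $(H,d)$, I assume inductively that each child returned a potential making its component nonnegative in $w_{G'}$. Merging these potentials yields $\phi_0$ such that every SCC of $(H-S_{(H,d)})_{\phi_0}$ is nonnegative, and \Cref{lem:fix-dag} extends nonnegativity to the DAG edges between SCCs, leaving only edges in $S_{(H,d)}$ as potentially negative. Running \textsc{BellmanFordDijkstra}$(H,w_{G'}|_H,\phi_0)$ with auxiliary weight $w_{H_{\ge 0}}$ then produces a $\phi$ making $H_\phi$ nonnegative, unless the check $d'_j(v)>d$ triggers. For the cycle-detection case I will use the identity
\[
d'_j(v)-d_j(v)=\sum_{e\in P_j(v):\,w_{G'}(e)<0}|w_{G'}(e)|,
\]
together with $d_j(v)\le 0$ (the zero-weight virtual edge gives a $w_{G'}$-path of weight $0$) and $|w_{G'}(e)|\le W/2$ per negative-in-$G'$ edge (since $w_G(e)\ge -W$), to conclude that $P_j(v)$ contains strictly more than $2d/W$ edges that are negative in $G'$. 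Writing $w_G(P_j(v))=d_j(v)-k\cdot W/2$ for $k$ the total edge count of $P_j(v)$ then gives $w_G(P_j(v))<-d$, and the Dijkstra detour $P'$ in $G'_{\ge 0}$ from $v$ to the start $u$ of $P_j(v)$ has $w_G(P')\le d$ by the weak-diameter guarantee, so $P_j(v)+P'$ is a genuine negative cycle in $G$.

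For the running time, Phase 1 costs $O((m+n\log\log n)\log n\log\log n)$ in expectation by \Cref{lem:decompose}. For Phase 2, at each node $(H,d,i,m_0)\in\mathcal T$, conditioning on no early termination forces $w_{G'_{\ge 0}}(P_j(v))\le d$ on the paths BFD returns, so the per-edge cut probability $O(2^i\log\log n/d)\cdot w_{G'_{\ge 0}}(e)$ from \Cref{lem:decompose} yields $\mathbb E[\eta_{H,\phi_0}(v)]=O(2^i\log\log n)$ and hence a \textsc{BellmanFordDijkstra} cost of $O((m_H+n_H\log\log n)\cdot(2^i+\log\log n))$ in expectation. This matches the per-node bound driving \Cref{lem:6}, so summing over $\mathcal T$ gives total Phase 2 cost $O((m+n\log\log n)\log n\log\log n)$; an early termination at some node only shortens its BFD, so the same bound applies on the cycle branch. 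The main obstacle I anticipate is the root node, where \Cref{lem:decompose} does not certify weak diameter and the Dijkstra detour $P'$ could in principle be long: I will handle this by noting that any shortest simple path $P_j(v)$ satisfies $w_{G'_{\ge 0}}(P_j(v))\le -\sum_{e:w_{G'}(e)<0}w_{G'}(e)\le (n-1)\cdot W/2<d_0$, so $d'_j(v)>d_0$ forces $P_j(v)$ to be non-simple, and a negative cycle in $G$ can then be extracted directly from the sub-walk between two occurrences of a repeated vertex (which has $w_{G'}$-weight at most $0$ by the shortness of $P_j(v)$, hence strictly negative $w_G$-weight). Finally, high-probability success follows by union-bounding the $O(m\epsilon)$ failure probabilities of \Cref{thm:2.2} and \Cref{lem:2} across the $O(m)$ recursive calls with $\epsilon=m^{-5}$.
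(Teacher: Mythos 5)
Your algorithm and the correctness argument (leaf nodes via the weak-diameter certificate, internal nodes via merging potentials, \Cref{lem:fix-dag}, and the auxiliary-weight check, and your handling of the root by observing that simple paths of non-positive $w_{G'}$-weight have auxiliary weight below $d_0$ and extracting a negative closed sub-walk from a non-simple $P_j(v)$) essentially mirror the paper's Claims \ref{claim:top}, \ref{claim:nonegcycles} and \ref{claim:sparsehitting}, and the root treatment is a legitimate alternative to the paper's Dijkstra detour. The gaps are in the running-time analysis. Your statement that the per-node cost ``matches the per-node bound driving \Cref{lem:6}, so summing over $\mathcal T$ gives $O((m+n\log\log n)\log n\log\log n)$'' does not follow: \Cref{lem:6} analyzes the recursion of \Cref{algo:1}, which does \emph{not} recurse on components certified to have weak diameter at most $\Delta$, whereas the tree produced by \textsc{Decompose} recurses on exactly those components with parameter $\Delta/2$, and there are $\Theta(\log n)$ diameter scales between $d_0=nW/2$ and $W/2$. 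Summing a \Cref{lem:6}-style bound per diameter scale yields $O(\log^2 n\log\log n)$ per vertex, i.e.\ the bound of~\cite{bringmann2023negative} that the theorem is supposed to beat. The paper closes precisely this hole by introducing the combined parameter $x=\frac{4}{W}\cdot d\,m_0\le 2nm$, checking that every branch of the tree either shrinks $x$ by a constant factor, shrinks it by $2^{2^{i-1}}$ at charge $O(2^i\log\log(1/\epsilon))$, or increments $i$, and re-solving the recurrence of \Cref{lem:5} with $x$ in place of $m$ to get a per-vertex charge of $O(\log x\log\log(1/\epsilon))=O(\log n\log\log n)$. Some argument of this kind is the heart of the improvement, and your proposal has nothing in its place (the same issue affects your one-line bound for the total cost of Phase 1).

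The second gap is the cycle branch: ``an early termination at some node only shortens its BFD, so the same bound applies'' is unsound. When $H$ contains a negative-weight cycle under $w_{G'}$, \textsc{BellmanFordDijkstra} without the termination check does not terminate at all (\Cref{lem:lazy-dijkstra-correctness}), so there is no finite baseline that early termination ``shortens''; moreover, ``conditioning on no early termination'' changes the joint distribution of the cut edges, so the unconditional per-edge cut probability of \Cref{lem:decompose} cannot simply be invoked under that conditioning. The paper instead argues unconditionally: when some vertex has every shortest path of auxiliary weight exceeding $d$, or when a negative cycle containing a positive edge exists (handled via the threshold $M=-(d+1)\cdot|V(H)|\cdot W$ and an extremal reference path $Q$ as in \Cref{lemma:neg_cycle_in_G_prime}), one takes the longest prefix $Q'$ of $Q$ with auxiliary weight at most $d$, appends one edge to form $Q''$, bounds the number of cut edges on $Q''$ by \Cref{claim:sparsehitting}, and shows by an exchange argument that within that many BFD iterations the recorded path already has auxiliary weight exceeding $d$, so the check fires in expected $O(2^i\log\log(1/\epsilon))$ iterations. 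You need this (or an equivalent) argument to bound the expected BFD time on nodes containing negative cycles; without it the claimed expected running time is not established. (A minor further slip: with $\mathbb E[\eta(v)]=O(2^i\log\log n)$ the BFD bound is $O((m_H+n_H\log\log n)\cdot 2^i\log\log n)$, not $O((m_H+n_H\log\log n)\cdot(2^i+\log\log n))$; this does not affect the final bound but reflects the same conflation.)
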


To prove this theorem, we establish several intermediate results. 
\begin{claim}\label{claim:top}
    Let $(H,d)$ be any node in the decomposition tree with \( d = d_0 \). Then, for all simple paths \( P \) in \( H \) with \( w_{G'}(P) \leq 0 \), we have \( w_{G'_{\ge0}}(P) \leq d \).
\end{claim}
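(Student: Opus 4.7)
The plan is to bound $w_{G'_{\ge 0}}(P)$ directly by decomposing $P$ into its $G'$-negative and $G'$-nonnegative edges and using that the total weight under $w_{G'}$ is non-positive. The key quantitative input is the scaling step: since $w_{G'}(e) = w_G(e) + W/2$ and $w_G(e) \ge -W$, every edge satisfies $w_{G'}(e) \ge -W/2$, so any $G'$-negative edge contributes absolute value at most $W/2$.

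Concretely, I would proceed as follows. Let $E^- = \{e \in P : w_{G'}(e) < 0\}$ and $E^+ = \{e \in P : w_{G'}(e) \ge 0\}$. The hypothesis $w_{G'}(P) \le 0$ rearranges to
\[
\sum_{e \in E^+} w_{G'}(e) \le \sum_{e \in E^-} \bigl(-w_{G'}(e)\bigr) \le |E^-| \cdot \tfrac{W}{2},
\]
using $-w_{G'}(e) \le W/2$ for every $e \in E^-$. On the other hand, by definition of $G'_{\ge 0}$, the edges in $E^-$ contribute $0$ while those in $E^+$ contribute their $w_{G'}$ weight, so
\[
w_{G'_{\ge 0}}(P) = \sum_{e \in E^+} w_{G'}(e).
\]
Since $P$ is simple we have $|E^-| \le |E(P)| \le n-1$, and combining the two displays yields $w_{G'_{\ge 0}}(P) \le (n-1)\cdot W/2 \le nW/2 = d_0 = d$, as claimed.

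There is essentially no nontrivial obstacle here; the only thing to be careful about is remembering that the claim is specifically about the root-level diameter $d = d_0 = nW/2$, which is precisely what matches the worst-case count of $n-1$ negative edges each losing $W/2$ under the conversion from $G'$ to $G'_{\ge 0}$. The statement would fail for arbitrary internal nodes of the decomposition tree (whose $d$ may be much smaller), which is exactly why the claim restricts to $d = d_0$.
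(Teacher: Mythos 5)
Your proof is correct and follows essentially the same route as the paper's: split the edges of $P$ into $G'$-negative and $G'$-nonnegative ones, use the scaling bound $w_{G'}(e)\ge -W/2$ together with the simplicity of $P$ to bound the total negative contribution by $nW/2=d_0$, and observe that $w_{G'_{\ge0}}(P)$ equals $w_{G'}(P)$ plus the absolute value of that negative contribution. The only difference is cosmetic bookkeeping (you bound the positive part directly, the paper adds $|\mathrm{neg}_{G'}(P)|$ to $w_{G'}(P)$), so no further comment is needed.
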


\begin{proof}
    Let \( \text{neg}_{G'}(P)\le 0 \) denote the total negative weight (i.e., sum of negative-weight edges) of \( P \) in \( {G'} \). For a simple path \( P \) in \( H \) with \( w_{G'}(P) \leq 0 \), each negative-weight edge in \( {G'} \) has weight at least \( -W/2 \), and \( P \) contains at most \( \lvert V(H) \rvert \leq n \) edges. Therefore, \( \text{neg}_{G'}(P) \geq -W/2 \cdot n = -d_0 \).
    In \( G'_{\ge0} \), all negative-weight edges are replaced with edges of weight 0. Hence, the weight of \( P \) in \( G'_{\ge0} \) is given by \( w_{G'_{\ge0}}(P) = w_{G'}(P) + \lvert \text{neg}_{G'}(P) \rvert \). Substituting \( w_{G'}(P) \leq 0 \) and \( \lvert \text{neg}_{G'}(P) \rvert \leq W/2 \cdot n = d \), we find \( w_{G'_{\ge0}}(P) \leq d \), as desired.
\end{proof}

\begin{claim}\label{claim:nonegcycles}
    Let $(H,d)$ be a node in the decomposition tree with \( d < d_0 \). If there exists a path \( P \) in \( H \) such that \( w_{G'}(P) \leq 0 \) but \( w_{G'_{\ge0}}(P) > d \), then  concatenating $P$ with a shortest path $P'$ in $G'_{\ge 0}$ from the end of $P$ back to its start produces a negative-weight cycle in $G$. In particular, if \( G \) has no negative-weight cycles, then for all (potentially non-simple) paths \( P \) in \( H \) with \( w_{G'}(P) \leq 0 \), we have \( w_{G'_{\ge0}}(P) \leq d \).
\end{claim}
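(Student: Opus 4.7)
The plan is to construct the negative cycle explicitly, with \Cref{lem:decompose} supplying the only nontrivial ingredient. Since $(H,d)$ is a non-root node of the decomposition tree (as $d<d_0$) and \textsc{Decompose} was invoked on $G'_{\ge 0}$, the last sentence of \Cref{lem:decompose} guarantees that $H$ has weak diameter at most $d$ in $G'_{\ge 0}$. Writing $u,v$ for the start and end of $P$, this immediately yields a path $P'$ from $v$ back to $u$ in $G'_{\ge 0}$ with $w_{G'_{\ge 0}}(P')\le d$, which is what a Dijkstra run in $G'_{\ge 0}$ produces.

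Next I would translate the hypothesis on $P$ into a lower bound on its edge count. Let $N\ge 0$ denote the total absolute negative weight of $P$ under $w_{G'}$. Since $w_{G'_{\ge 0}}$ zeros out the negative edges, $w_{G'_{\ge 0}}(P)=w_{G'}(P)+N$; combined with $w_{G'}(P)\le 0$ and $w_{G'_{\ge 0}}(P)>d$, this gives $N>d$. Every negative-in-$G'$ edge has weight at least $-W/2$, since $w_{G'}(e)=w_G(e)+W/2\ge -W/2$, so each such edge contributes at most $W/2$ to $N$. Consequently $k_P\cdot W/2\ge N>d$, where $k_P$ is the edge count of $P$.

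The final arithmetic converts back to $w_G$. From $w_G(e)=w_{G'}(e)-W/2$ edgewise and $w_{G'}(P)\le 0$, we get $w_G(P)\le -k_P\cdot W/2$. Since $w_G(e)\le w_{G'_{\ge 0}}(e)$ edgewise as well, $w_G(P')\le w_{G'_{\ge 0}}(P')\le d$. Adding the two bounds, $w_G(P+P')\le -k_P\cdot W/2+d<0$, so $P+P'$ is a negative-weight cycle in $G$. The ``in particular'' clause is the contrapositive of what we just proved: if $G$ has no negative-weight cycles, then the premise $w_{G'_{\ge 0}}(P)>d$ cannot hold for any path $P$ with $w_{G'}(P)\le 0$.

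There is no substantial obstacle in the argument. The one place that needs attention is the appeal to \Cref{lem:decompose}: the weak-diameter guarantee is with respect to the graph on which \textsc{Decompose} was invoked, namely $G'_{\ge 0}$, and not with respect to $H$ itself. This is precisely why the closing path $P'$ must be sought in $G'_{\ge 0}$ rather than inside $H$, and why Phase~2 runs Dijkstra on $G'_{\ge 0}$ when a candidate negative cycle is discovered.
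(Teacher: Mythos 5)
Your proposal is correct and follows essentially the same route as the paper: translate the hypotheses into $|\mathrm{neg}_{G'}(P)|>d$, deduce $w_G(P)<-d$, and close the walk with a path $P'$ of $G'_{\ge0}$-weight (hence $w_G$-weight) at most $d$ via the weak-diameter guarantee of \Cref{lem:decompose}. The only cosmetic difference is that you bound $w_G(P)$ through the total edge count $k_P$ rather than directly through the total negative weight, which yields the same conclusion.
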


\begin{proof}
    Suppose there exists a path \( P \) in \( H \) such that \( w_{G'}(P) \leq 0 \) but \( w_{G'_{\ge0}}(P) > d \). Let \( \text{neg}_{G'}(P) \) denote the total negative weight of \( P \) in \( H \). The weight of \( P \) in \( G'_{\ge0} \) is given by \( w_{G'_{\ge0}}(P) = w_{G'}(P) + \lvert \text{neg}_{G'}(P) \rvert \). Since \( w_{G'}(P) \leq 0 \) and \( w_{G'_{\ge0}}(P) > d \), it follows that \( \lvert \text{neg}_{G'}(P) \rvert > d \).
    For each negative-weight edge $e$ in \( H \), its weight in $G$ is $w_G(e)=w_{G'}(e)-W/2\le w_{G'}(e)-|w_{G'}(e)|$, so the weight of \( P \) in \( G \) is \( w_G(P) \le w_{G'}(P) - \lvert \text{neg}_{G'}(P) \rvert \). Substituting \( w_{G'}(P) \leq 0 \) and \( \lvert \text{neg}_{G'}(P) \rvert > d \), we find \( w_G(P) < -d \).
    By the last statement of \Cref{lem:decompose}, \( V(H) \) has diameter at most \( d \) in \( G'_{\ge0} \), so the shortest path \( P' \) in \( G'_{\ge0} \) from the end of \( P \) back to its start satisfies \( w_{G'_{\ge0}}(P') \leq d \). Since edge weights can only be smaller in $G$, we also have $w_G(P')\le d$. Combining \( P \) and \( P' \) forms a cycle \( C = P + P' \) in \( G \) with \( w_G(C) = w_G(P) + w_G(P') < -d + d = 0 \), so $C$ is a negative-weight cycle.
\end{proof}

\begin{claim}\label{claim:sparsehitting}
    Let $(H,d)$ be a non-leaf node in the decomposition tree with tag $(i,m_0)$, and let $S$ be the edges cut by \textsc{Decompose}$(H,d,i,m_0)$. Then, for any (potentially non-simple) path \( P \) in \( H \) with \( w_{G'_{\ge0}}(P) \leq d \), the number of edges of $P$ inside $S$ (counting multiplicity) has expectation at most \( O(2^i\log\log(1/\epsilon)) \).
\end{claim}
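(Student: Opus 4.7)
The plan is to prove this via a direct application of linearity of expectation, using the per-edge cut probabilities established in \Cref{lem:decompose}. The decomposition is applied to $G'_{\ge 0}$, so the weights $w(u,v)$ appearing in the cut probability bounds of \Cref{lem:decompose} are precisely the weights $w_{G'_{\ge 0}}(u,v)$.

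First, I would set up the random variables. For each occurrence $e$ of an edge in the path $P$ (so that repeated traversals are counted separately), let $X_e$ be the indicator that this edge belongs to $S$. The quantity to bound is $\mathbb{E}\bigl[\sum_e X_e\bigr]$, where the sum is over occurrences (counting multiplicity). Note that if the same edge $(u,v)$ of $H$ appears multiple times in $P$, the corresponding indicators are perfectly correlated, but this is irrelevant for computing expectations.

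Next, I would invoke \Cref{lem:decompose} to bound each $\Pr[X_e = 1]$. When $i = 0$, the lemma gives $\Pr[e \in S] \le O\bigl(\tfrac{\log\log(1/\epsilon)}{d}\bigr) \cdot w_{G'_{\ge0}}(e)$, which fits the target bound since $2^0 = 1$. When $i > 0$, the lemma gives $\Pr[e \in S] \le O\bigl(\tfrac{2^i\log\log(1/\epsilon)}{d}\bigr) \cdot w_{G'_{\ge0}}(e)$. In both cases,
\[
\mathbb{E}\Bigl[\sum_e X_e\Bigr] \;\le\; O\Bigl(\tfrac{2^i\log\log(1/\epsilon)}{d}\Bigr) \cdot \sum_{e \text{ in } P} w_{G'_{\ge0}}(e) \;=\; O\Bigl(\tfrac{2^i\log\log(1/\epsilon)}{d}\Bigr) \cdot w_{G'_{\ge0}}(P),
\]
where the final sum naturally counts repeated edges with multiplicity since $w_{G'_{\ge0}}(P)$ is defined as the sum of edge weights traversed by $P$.

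Finally, substituting the hypothesis $w_{G'_{\ge0}}(P) \le d$ collapses the expression to $O(2^i \log\log(1/\epsilon))$, completing the claim. The only real subtlety to verify is that the cut probability bounds in \Cref{lem:decompose} are taken with respect to the weight function on the decomposed graph, which here is $w_{G'_{\ge0}}$, and that the meaning of $S$ in the claim matches the edges cut by this single recursive call (not transitively by descendants); both points are immediate from the statement of \Cref{lem:decompose}. Given these identifications, there is no genuine obstacle — the claim is essentially the contrapositive packaging of the per-edge probabilistic guarantee into a path-level statement.
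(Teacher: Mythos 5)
Your proof is correct and takes essentially the same route as the paper: invoke the per-edge cut probability bound of \Cref{lem:decompose} (with weights $w_{G'_{\ge0}}$), apply linearity of expectation over the edge occurrences of $P$, and use $w_{G'_{\ge0}}(P)\le d$. The paper's only addition is a one-line remark that the $2m_0\epsilon\le 2m^{-4}$ failure probability of \textsc{Decompose} contributes negligibly to the expectation, a caveat your argument implicitly absorbs by reading the per-edge bound of \Cref{lem:decompose} as unconditional.
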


\begin{proof}
By \Cref{lem:decompose}, each edge $(u,v)$ joins $S$ with probability at most $\frac{O(2^i\log\log(1/\epsilon))}d\cdot w_{G'_{\ge0}}(u,v)$. Since $w_{G'_{\ge0}}(P)\le d$, the result follows by linearity of expectation. Note that we are ignoring the $2m\epsilon\le2m^{-4}$ probability of failure since even if all edges join $S$ in the event of failure, the overall contribution to the expectation is negligible.
\end{proof}

We now proceed to analyze the runtime and correctness of the \textsc{Scale} algorithm. We start with the leaf nodes.

\begin{lemma}
If there is an edge in a leaf node $(H,d)$ with negative weight under $w_{G'}$, then the algorithm outputs a negative-weight cycle in $G$.
\end{lemma}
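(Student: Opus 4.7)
The plan is to directly verify that the cycle produced by the algorithm has negative weight in $G$. Suppose $(u,v)$ is an edge in the leaf node $(H,d)$ with $w_{G'}(u,v) < 0$. First I would handle the geometry of leaves: recall the stopping rule is either $|V(H)|=1$ or $d \le W/2$; since $H$ contains an edge, the former is impossible, so we must be in the regime $d \le W/2$. In particular $d < d_0 = nW/2$, which lets me invoke the last sentence of \Cref{lem:decompose} to conclude that $V(H)$ has weak diameter at most $d$ in $G'_{\ge 0}$. Consequently, a path from $v$ to $u$ in $G'_{\ge 0}$ of weight at most $d \le W/2$ exists, and Dijkstra's algorithm from $v$ will recover some such path $P$.

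Next I would translate the bound on $P$ from $G'_{\ge 0}$ back to $G$. Since every edge $e$ satisfies $w_G(e) = w_{G'}(e) - W/2$, and the $\ge 0$ operation only increases weights, an edge-by-edge comparison gives $w_G(e) \le w_{G'_{\ge 0}}(e)$ in all cases (the inequality is tight for non-negative $w_{G'}$-edges after subtracting $W/2$ and then adding back). Summing over the edges of $P$ yields $w_G(P) \le w_{G'_{\ge 0}}(P) \le W/2$. Meanwhile, the special edge satisfies $w_G(u,v) = w_{G'}(u,v) - W/2 < -W/2$ by hypothesis. Combining, the cycle $C = P + (u,v)$ has $w_G(C) = w_G(P) + w_G(u,v) < W/2 + (-W/2) = 0$, so the algorithm correctly outputs a negative-weight cycle in $G$.

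The only real subtlety is that ``weak diameter'' is measured in the ambient graph $G'_{\ge 0}$ rather than in $H$ itself, which is exactly what makes Dijkstra in $G'_{\ge 0}$ the right subroutine (rather than Dijkstra restricted to $H$). Once that is recognized, everything else is a direct computation. I do not anticipate a main obstacle: the lemma is essentially a bookkeeping step, ensuring that the two-sentence cycle-extraction procedure in the pseudocode matches the diameter guarantees we carried along through Phase~1.
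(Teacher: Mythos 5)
Your proof is correct and takes essentially the same route as the paper: invoke the weak-diameter guarantee of the decomposition lemma to get a $v$-to-$u$ path of weight at most $d\le W/2$ in $G'_{\ge 0}$ (hence in $G$), and close the cycle with the edge $(u,v)$, whose $G$-weight is below $-W/2$. Your extra care about the leaf stopping condition ($d\le W/2$ since $H$ has an edge) and the edge-by-edge comparison $w_G(e)\le w_{G'_{\ge 0}}(e)$ just makes explicit steps the paper leaves implicit.
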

\begin{proof}
Suppose that $(u,v)$ is an edge in $H$ with $w_{G'}(u,v)<0$, which means that $w_G(u,v)<-W/2$. By \Cref{lem:decompose}, $H$ has weak diameter at most $d$ under $w_{G'_{\ge0}}$, so there is a path $P$ from $v$ to $u$ with $w_G(P)\le w_{G'_{\ge0}}(P)\le d\le W/2$. Concatenating the edge $(u,v)$ with the path $P$ gives a negative-weight cycle in $G$.
\end{proof}

We now consider non-leaf nodes. Let \( n_H = |V(H)| \) and \( m_H = |E(H)| \) denote the number of vertices and edges in $H$.

\begin{lemma}\label{lemma:neg_cycle_in_G}
    Consider a non-leaf node $(H,d)$ with tag $(i,m_0)$, and suppose \( H \) does not contain a cycle with negative weight under $w_{G'}$. Then, \textsc{BellmanFordDijkstra}$(H,w_{G'}|_H,\phi_0)$ runs in expected time \(O\left( (m_H + n_H \log \log n_H)\cdot2^i\log\log(1/\epsilon) \right) \).
\end{lemma}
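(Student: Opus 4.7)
The strategy is to invoke Lemma~\ref{lem:BFD}, which bounds BFD's runtime by $O\bigl(\sum_v(\deg(v)+\log\log n_H)\cdot\eta_{H,\phi_0}(v)\bigr)$, so it suffices to prove $\mathbb{E}[\eta_{H,\phi_0}(v)]=O(2^i\log\log(1/\epsilon))$ for each $v\in V(H)$; linearity of expectation combined with $\sum_v\deg(v)=2m_H$ then yields the claimed overall runtime.

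I would first characterize the $\phi_0$-negative edges of $H$. By induction on the decomposition tree, each recursive subcall returns a potential making every edge internal to its own subgraph non-negative under $w_{G'}$. Once these are merged into $\phi_0$, every edge inside an SCC of $H-S$ is $\phi_0$-non-negative, and Lemma~\ref{lem:fix-dag} then promotes every DAG edge between SCCs to non-negative as well. Thus the only $\phi_0$-negative edges of $H$ lie in the cut set $S$ produced by \textsc{Decompose}$(H,d,i,m_0)$.

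Next, because $H$ has no $w_{G'}$-negative cycle, there exists a simple shortest $w_{G'}$-path $P_v$ ending at $v$ in $H$, with $w_{G'}(P_v)\le 0$ (the trivial length-$0$ path at $v$ is a candidate). By the previous step, $\eta_{H,\phi_0}(v)\le |P_v\cap S|$. Claim~\ref{claim:sparsehitting} will bound $\mathbb{E}[|P_v\cap S|]$ once I know $w_{G'_{\ge 0}}(P_v)\le d$: for $d=d_0$ this follows from Claim~\ref{claim:top} applied to the simple path $P_v$ with $w_{G'}(P_v)\le 0$; for $d<d_0$ it follows from Claim~\ref{claim:nonegcycles}, which otherwise produces a negative-weight cycle in $G$ and thereby triggers BFD's auxiliary-weight early termination. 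Given $w_{G'_{\ge 0}}(P_v)\le d$, Claim~\ref{claim:sparsehitting} gives $\mathbb{E}[|P_v\cap S|]=O(2^i\log\log(1/\epsilon))$, and substitution into Lemma~\ref{lem:BFD} completes the bound.

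The hardest part is the early-termination branch: when $G$ contains a negative cycle, $w_{G'_{\ge 0}}(P_v)\le d$ may fail on the true shortest path and Claim~\ref{claim:sparsehitting} no longer directly controls the iteration count. The fix is to observe that the iteration at which BFD aborts is governed by the first candidate path whose auxiliary weight exceeds $d$; every strictly earlier candidate satisfies $w_{G'_{\ge 0}}\le d$ and is therefore controlled by Claim~\ref{claim:sparsehitting}, so the expected iteration count remains $O(2^i\log\log(1/\epsilon))$ and the runtime bound carries through.
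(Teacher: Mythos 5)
Your first part (the case where every vertex has a shortest path of auxiliary weight at most $d$) matches the paper's Case~1: negative edges of $H_{\phi_0}$ lie only in $S_{(H,d)}$, so $\eta(v)$ is bounded by the $S$-edges on such a path, Claim~\ref{claim:sparsehitting} controls that in expectation, and Lemma~\ref{lem:BFD} finishes. (One small repair: you should pick $P_v$ to be a shortest path of auxiliary weight at most $d$ when one exists, not an arbitrary simple shortest path, since an arbitrary choice may exceed $d$ even when a good one exists.)

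The genuine gap is your final paragraph, i.e.\ the branch where some vertex $v$ has every shortest path of auxiliary weight exceeding $d$ (equivalently, when $G$ has a negative cycle even though $H$ does not). Your argument is that ``every strictly earlier candidate satisfies $w_{G'_{\ge0}}\le d$ and is therefore controlled by Claim~\ref{claim:sparsehitting}.'' This does not work for two reasons. First, the candidate paths recorded by \textsc{BellmanFordDijkstra} are random objects that depend on the cut set $S$ (the negative edges of $H_{\phi_0}$ are exactly the $S$-edges, and the algorithm's path choices depend on which edges are negative), whereas Claim~\ref{claim:sparsehitting} bounds the expected number of $S$-edges only on a path \emph{fixed independently of} $S$; you cannot apply it to adaptively discovered candidates. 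Second, even granting control of the earlier candidates' $S$-edge counts, that does not bound the iteration at which the abort fires: you must exhibit a reason why the check $d'_j(\cdot)>d$ triggers within $O(2^i\log\log(1/\epsilon))$ iterations in expectation. The paper does this by a specific construction: take a bad vertex $v$, let $Q$ be a shortest path to $v$ \emph{minimizing} auxiliary weight, let $Q''$ be the minimal prefix of $Q$ whose auxiliary weight exceeds $d$, and note $Q''$ is a fixed path (so Claim~\ref{claim:sparsehitting} applies to its length-$\le d$ prefix $Q'$) and is itself a shortest path to its endpoint $u$. After a number of iterations bounded by the $S$-edges on $Q''$, BFD attains $d_j(u)=w_{G'}(Q'')$, and an exchange argument using the minimality of $Q$ shows that \emph{any} path realizing this value has auxiliary weight at least $w_{G'_{\ge0}}(Q'')>d$, so the abort check necessarily fires by then. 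Without the minimizing choice of $Q$, the prefix truncation, and the exchange step, the claim that ``the expected iteration count remains $O(2^i\log\log(1/\epsilon))$'' is unsupported.
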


\begin{proof}
    In the proof, all shortest paths are with respect to edge weights $(w_{G'}|_H)_{\phi_0}$. There are two cases to consider:

    \textbf{Case 1:} For each vertex $v\in V(H)$, there exists a shortest path to $v$ with auxiliary weight at most $d$. In this case, the shortest path $P$ to $v$ with \( w_{G'_{\ge0}}(P) \leq d \) has an expected $O(2^i\log\log(1/\epsilon))$ edges inside $S_{(H,d)}$ by Claim~\ref{claim:sparsehitting}. Since $H_\phi$ has non-negative weight edges outside of $S_{(H,d)}$, we have $\mathbb E[\eta(v)]\le O(2^i\log\log(1/\epsilon))$ for the function $\eta=\eta_{(w_{G'}|_H,\,\phi_0)}$ defined in Lemma~\ref{lem:BFD}. By Lemma~\ref{lem:BFD}, \textsc{BellmanFordDijkstra} runs in expected time \( O(\sum_v(\deg_H(v)+\log\log n)\cdot\eta(v)), \) which has expectation \( O((m_H + n_H \log \log n_H) \cdot2^i\log\log(1/\epsilon)) \) since \( \mathbb E[\eta(v)]\le 2^i\log\log(1/\epsilon) \) for all $v\in V(H)$. Note that \textsc{BellmanFordDijkstra} can still terminate early and output a negative-weight cycle in $G$, but that only speeds up the running time.

    \textbf{Case 2:} For some vertex $v\in V(H)$, every shortest path to $v$ has auxiliary weight more than $d$. By Claim~\ref{claim:nonegcycles}, this implies the existence of a negative-weight cycle in \( G \). Among all shortest paths to $v$, let $Q$ be one minimizing $w_{G'_{\ge0}}(Q)$, which is still greater than $d$. Take the longest prefix $Q'$ of $Q$ with weight at most $d$ in $G'_{\ge0}$. Since $w_{G'_{\ge0}}(Q')\le d$, the number of edges of $Q'$ inside $S_{(H,d)}$ has expectation \( O(2^i\log\log(1/\epsilon)) \) by Claim~\ref{claim:sparsehitting}. Let $Q''$ be the path $Q'$ concatenated with the next edge in $Q$, so that $w_{G'_{\ge0}}(Q'')>d$ and the number of edges of $Q''$ inside $S_{(H,d)}$ still has expectation \( O(2^i\log\log(1/\epsilon)) \). Since $Q''$ is a prefix of shortest path $Q$, it is also a shortest path. Let $u\in V(H)$ be the endpoint of $Q''$, and let $i$ be the number of iterations of \textsc{BellmanFordDijkstra} before $d_i(u)= w_{G'}(Q'')$, so that $i$ has expected value $O(2^i\log\log(1/\epsilon))$. After $i$ iterations, \textsc{BellmanFordDijkstra} finds a shortest path $P$ ending at $u$ with $w_{G'}(P)=d_i(u)=w_{G'}(Q'')$. Moreover, $w_{G'_{\ge0}}(P)\ge w_{G'_{\ge0}}(Q'')$ since otherwise, replacing the prefix $Q''$ of $Q$ by $P$ results in a shortest path to $v$ of lower weight in $G'_{\ge0}$, contradicting the assumption on $Q$. It follows that after an expected $O(2^i\log\log(1/\epsilon))$ iterations of \textsc{BellmanFordDijkstra}, a path $P$ is found with $w_{G'}(P)\le 0$ and $w_{G'_{\ge0}}(P)>d$, and the algorithm terminates with a negative-weight cycle. The runtime is $O((m_H + n_H \log \log n_H) \cdot i)$, which has expectation $O((m_H + n_H \log \log n_H)\cdot2^i\log\log(1/\epsilon))$.
\end{proof}

\begin{lemma}\label{lemma:neg_cycle_in_G_prime}
Consider a non-leaf node $(H,d)$ with tag $(i,m_0)$, and suppose \( H \) contains a cycle with negative weight under $w_{G'}$. Then, \textsc{BellmanFordDijkstra}$(H,w_{G'}|_H,\phi_0)$ runs in expected time\linebreak \(O\left( (m_H + n_H \log \log n_H) \cdot2^i\log\log(1/\epsilon) \right) \).
\end{lemma}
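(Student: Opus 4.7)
The plan is to argue that the presence of a negative cycle $C$ in $H$ (under $w_{G'}$) causes \textsc{BellmanFordDijkstra} to trigger the early termination check $d'_j(v)>d$ within $O(2^i\log\log(1/\epsilon))$ iterations in expectation; together with Lemma~\ref{lem:BFD}'s per-iteration cost of $O(m_H+n_H\log\log n_H)$, this yields the claimed bound.

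I would start with two structural observations. Cycle weights are invariant under potential reweighting, so $C$ remains a negative cycle in $H_{\phi_0}$. Moreover, any negative cycle $C'$ in $H_{\phi_0}$ must satisfy $w_{G'_{\ge 0}}(C')\ge 1$: otherwise every edge of $C'$ would have $w_{G'}(e)\le 0$ and cut probability zero in \textsc{Decompose}, placing $C'$ entirely in $H\setminus S_{(H,d)}$ where $\phi_0$ makes its edges non-negative in $H_{\phi_0}$, contradicting $w_{(G')_{\phi_0}}(C')=w_{G'}(C')<0$. Consequently, every negative edge of $C$ in $H_{\phi_0}$ lies in $S_{(H,d)}$.

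Next, I would exhibit a witness walk $Q$ from $s$ with $w_{G'_{\ge 0}}(Q)>d$ and an expected $O(2^i\log\log(1/\epsilon))$ edges in $S_{(H,d)}$. Starting at $s$, take the zero-weight auxiliary edge to a vertex on $C$ and then either traverse $C$ exactly $T=\lceil d/w_{G'_{\ge 0}}(C)\rceil+1$ times when $w_{G'_{\ge 0}}(C)\le d$, or traverse the prefix of $C$ whose auxiliary weight first exceeds $d$ when $w_{G'_{\ge 0}}(C)>d$. Applying Claim~\ref{claim:sparsehitting} to $C$ (or to the prefix minus its last edge in the second case) yields
\begin{equation*}
\mathbb{E}[|Q\cap S_{(H,d)}|]\;\le\; T\cdot O\!\left(\frac{2^i\log\log(1/\epsilon)\cdot w_{G'_{\ge 0}}(C)}{d}\right)=O(2^i\log\log(1/\epsilon)),
\end{equation*}
since in both cases $T\cdot w_{G'_{\ge 0}}(C)=O(d)$. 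Because negative edges of $Q$ in $H_{\phi_0}$ are confined to $S_{(H,d)}$, the same bound holds on the expected number of negative edges of $Q$ in $H_{\phi_0}$.

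Finally, I would adapt the Case~2 substitution argument of Lemma~\ref{lemma:neg_cycle_in_G}. Let $M$ be the number of negative edges of $Q$ in $H_{\phi_0}$, and pick $Q^*$ to minimize $w_{G'}$ among walks ending at the endpoint of $Q$ with at most $M$ negative edges and auxiliary weight exceeding $d$, breaking ties by minimum $w_{G'_{\ge 0}}$. Let $Q''$ be the critical prefix of $Q^*$ whose auxiliary weight first exceeds $d$, ending at a vertex $u$. After $j=(\text{number of negative edges of }Q'')+1$ iterations, \textsc{BellmanFordDijkstra} has computed $d_j(u)\le w_{G'}(Q'')$; splicing any recorded walk $P$ realizing this value into $Q^*$ produces a walk of the same $w_{G'}$-weight, and the tie-breaking choice of $Q^*$ forces $w_{G'_{\ge 0}}(P)\ge w_{G'_{\ge 0}}(Q'')>d$, triggering the check.

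The main obstacle is making the substitution airtight in the presence of the negative cycle: because unrestricted shortest walks in $H_{\phi_0}$ are $-\infty$, one must reason within the finite-budget space of walks with $\le M$ negative edges and auxiliary weight $>d$, and verify that the spliced walk remains in this space so that the tie-breaking inequality applies. The case $w_{G'_{\ge 0}}(C)>d$, where the witness uses only a prefix of $C$ rather than full traversals, requires additional technical bookkeeping to compare against the walks recorded by \textsc{BellmanFordDijkstra}.
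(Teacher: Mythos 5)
Your reduction of the paper's ``Case 1'' (a negative cycle all of whose edges are non-positive under $w_{G'}$) is fine: such a cycle is never cut, so if \textsc{BellmanFordDijkstra} is actually invoked at $(H,d)$ then $\phi_0$ would make its edges non-negative, a contradiction; this is a legitimate alternative to the paper's observation that the algorithm returns a cycle from a leaf before reaching this node. Your witness walk built from $C$, the bound $\mathbb E[|Q\cap S_{(H,d)}|]=O(2^i\log\log(1/\epsilon))$, and the transfer of that bound to the iteration count via the budget ``at most $M$ negative edges'' are also sound. The problem is the splice step, which is exactly the subtlety you flag at the end and do not resolve. You define $Q^*$ by a \emph{hard constraint} ``auxiliary weight exceeding $d$'' (minimize $w_{G'}$, tie-break by auxiliary weight), and then argue that if the recorded path $P$ had $w_{G'_{\ge0}}(P)<w_{G'_{\ge0}}(Q'')$, splicing $P$ for $Q''$ would contradict the choice of $Q^*$. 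But the spliced walk $\tilde Q=P\cdot R$ has auxiliary weight $w_{G'_{\ge0}}(P)+w_{G'_{\ge0}}(R)$, which may drop to at most $d$; in that case $\tilde Q$ simply leaves your feasible set, and neither the minimality of $w_{G'}(Q^*)$ nor the tie-break yields any contradiction. One cannot repair this by appending extra traversals of $C$ to push the auxiliary weight back above $d$, because that inflates the negative-edge count beyond the budget $M$ (which is what keeps the expected iteration count small), and dropping the budget makes the minimization ill-posed (repeating the negative cycle drives $w_{G'}$ to $-\infty$ while staying feasible).

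This is precisely the issue the paper's device is designed for: the reference family is ``walks with $w_{G'}\le M$'' for the hugely negative threshold $M=-(d+1)\,n_H\,W$, chosen so that (i) membership is preserved under the splice, since replacing $Q''$ by $P$ with $w_{G'}(P)\le w_{G'}(Q'')$ only decreases $w_{G'}$, and (ii) every member automatically has auxiliary weight exceeding $d$, using exactly the fact you proved (every negative cycle under $w_{G'}$ that survives to this call contains a positive-weight edge, so a walk of weight at most $M$ decomposes into more than $d$ negative cycles each contributing at least $1$ to $w_{G'_{\ge0}}$). With that family, minimizing $w_{G'_{\ge0}}$ directly (no tie-breaking) makes the exchange argument close. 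As written, your proof has a genuine gap at this step; adopting the $w_{G'}\le M$ threshold in place of the ``auxiliary weight $>d$'' constraint would fix it and essentially reproduces the paper's Case~2.
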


\begin{proof}
    There are two cases to consider:

    \textbf{Case 1:} \( H \) contains a negative-weight cycle under $w_{G'}$ consisting entirely of non-positive-weight edges. All of these non-positive-weight edges under $w_{G'}$ have weight $0$ under $w_{G'_\ge0}$, and by \Cref{lem:decompose}, \textsc{Decompose} cuts these edges with probability $0$. Since this cycle is strongly connected, it must remain intact throughout the decomposition and persists in one of the leaf nodes. A negative-weight edge from this cycle is found in this leaf node, and the algorithm computes a negative-weight cycle and terminates early before $\textsc{BellmanFordDijkstra}$ is called. 

    \textbf{Case 2:} All negative-weight cycles under $w_{G'}$ include at least one positive-weight edge. This case is the most subtle and requires carefully defining a reference path $Q$.

Define $M=-(d+1)\cdot|V(H)|\cdot W$. We first claim that any (potentially non-simple) path in $H$ of weight at most $M$ under $w_{G'}$ must have weight exceeding $d$ under $w_{G'_{\ge0}}$. Given such a path, we first remove any cycles on the path of non-negative weight under $w_{G'}$. The remaining path still has weight at most $M=-(d+1)\cdot|V(H)|\cdot W$, so it must have at least $(d+1)\cdot|V(H)|$ edges, which means it can be decomposed into at least $d+1$ cycles of negative weight under $w_{G'}$. By assumption, each of these cycles has at least one positive-weight edge under $w_{G'}$. It follows that the path has weight at least $d+1$ in $G'_{\ge0}$, concluding the claim.

Among all paths in $H$ of weight at most $M$ under $w_{G'}$, let $Q$ be one with minimum possible weight under $w_{G'_{\ge0}}$, which must be greater than $d$. Take the longest prefix $Q'$ of $Q$ with weight at most $d$ under $w_{G'_{\ge0}}$. Since $w_{G'_{\ge0}}(Q')\le d$, the number of edges of $Q'$ inside $S_{(H,d)}$ has expectation \( O(2^i\log\log(1/\epsilon)) \) by Claim~\ref{claim:sparsehitting}. Let $Q''$ be the path $Q'$ concatenated with the next edge in $Q$, so that $w_{G'_{\ge0}}(Q'')>d$ and the number of edges of $Q''$ inside $S_{(H,d)}$ still has expectation \( O(2^i\log\log(1/\epsilon)) \). Let $u\in V(H)$ be the endpoint of $Q''$, and let $i$ be the number of iterations of \textsc{BellmanFordDijkstra} before $d_i(u)\le w_{G'}(Q'')$, so that $i$ has expected value $O(2^i\log\log(1/\epsilon))$. After $i$ iterations, \textsc{BellmanFordDijkstra} finds a path $P$ ending at $u$ with $w_{G'}(P)=d_i(u)\le w_{G'}(Q'')$. Moreover, $w_{G'_{\ge0}}(P)\ge w_{G'_{\ge0}}(Q'')$ since otherwise, replacing the prefix $Q''$ of $Q$ by $P$ results in a path of lower weight in $G'_{\ge0}$ and of weight at most $M$ in $H$, contradicting the assumption on $Q$. It follows that after an expected $O(2^i\log\log(1/\epsilon))$ iterations of \textsc{BellmanFordDijkstra}, a path $P$ is found with $w_{G'}(P)\le 0$ and $w_{G'_{\ge0}}(P)>d$, and the algorithm terminates with a negative-weight cycle. The runtime is $O((m_H + n_H \log \log n_H) \cdot i)$, which has expectation $O((m_H + n_H \log \log n_H)\cdot2^i\log\log(1/\epsilon))$.
\end{proof}

\begin{lemma}
The algorithm has expected running time $O((m+n\log\log n)\log n\log\log n)$.
\end{lemma}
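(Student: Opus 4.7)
The plan is to sum the local expected work over all nodes of the decomposition tree produced in Phase 1, charging each node with both its \textsc{Decompose} invocation and the \textsc{BellmanFordDijkstra} call run on it in Phase 2. Setting $\epsilon=m^{-5}$ so that $\log\log(1/\epsilon)=\Theta(\log\log n)$, \Cref{lem:decompose} gives \textsc{Decompose} cost $O((m_H+n_H\log\log n)(2^i+\log\log n))$ at each node with tag $(i,m_0)$, and \Cref{lemma:neg_cycle_in_G,lemma:neg_cycle_in_G_prime} give \textsc{BellmanFordDijkstra} cost $O((m_H+n_H\log\log n)\cdot 2^i\log\log n)$, which dominates. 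Leaf nodes only invoke a single Dijkstra from one endpoint of a negative-weight edge at cost $O(m_H+n_H\log\log n)$, which is easily absorbed.

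Writing $Y_v=\sum_{N\ni v}2^{i(N)}$ and $Y_e=\sum_{N\ni e}2^{i(N)}$, where the sums run over decomposition-tree nodes containing $v$ or $e$, linearity of expectation reduces the total expected runtime to
\[ O(\log\log n)\cdot\Big(\sum_e\mathbb{E}[Y_e]+\log\log n\cdot\sum_v\mathbb{E}[Y_v]\Big), \]
so it suffices to establish the averaged bounds $\sum_e\mathbb{E}[Y_e]=O(m\log n)$ and $\sum_v\mathbb{E}[Y_v]=O(n\log n)$.

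To obtain these bounds, I trace the trajectory of a fixed vertex $v$ (the edge case is analogous) through the tree. The trajectory is a root-to-leaf path that alternates \emph{chains} of consecutive tags $(0,m_0),(1,m_0),\ldots,(i^{\star},m_0)$ at a single diameter with \emph{resets}: a case (a) transition drops $m_0$ at the same diameter, whereas a case (b) transition at tag $0$ halves $\Delta$. Each chain terminated by a case (a) reset contributes $O(2^{i^{\star}})$ to $Y_v$, and since \Cref{lem:decompose} guarantees $\log m_0-\log m_0'\ge 2^{i^{\star}-1}-O(1)$, these chain contributions telescope inside each fixed diameter to give $O(\log m_\Delta)$, reproducing the recurrence solved in \Cref{lem:6}.

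The main obstacle is aggregating across the $O(\log n)$ diameter halvings from $\Delta_0=nW/2$ down to $\Delta\le W/2$: a naive sum of the per-diameter bound would yield $O(\log^2 n)$, off by a factor of $\log n$. To close this gap I plan to charge diameter halvings and case (a) size drops against a single combined progress potential of size $O(\log n)$ along any root-to-leaf path, exploiting the structural guarantee at the end of \Cref{lem:decompose} that case (b) components come with a certified weak diameter at most $\Delta/2$. Carrying through this unified amortization---essentially the analysis of \Cref{lem:6} extended to track $\Delta$ and $m_0$ simultaneously---delivers the desired averaged $O(\log n)$ bound on $\mathbb{E}[Y_x]$, and hence the claimed $O((m+n\log\log n)\log n\log\log n)$ expected runtime.
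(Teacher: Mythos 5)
Your overall plan coincides with the paper's proof: charge the $O((m_H+n_H\log\log n_H)\cdot 2^i\log\log(1/\epsilon))$ per-node cost (from \Cref{lem:decompose} and \Cref{lemma:neg_cycle_in_G,lemma:neg_cycle_in_G_prime}) to the vertices of $H$, and then show each vertex is charged only $O(\log n\log\log n)$ along its root-to-leaf trajectory, with chains of tags $(0,m_0),\ldots,(i^\star,m_0)$ telescoping against drops in $\log m_0$ exactly as in \Cref{lem:5,lem:6}. However, the one place where the lemma actually earns its single $\log n$ factor is the step you leave as a promise: the ``single combined progress potential'' tracking $\Delta$ and $m_0$ simultaneously is never exhibited, and without it your argument stops at the $O(\log m_\Delta)$-per-diameter bound you correctly flag as insufficient. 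The paper makes this concrete by defining $x=\frac4W\,d\,m_0$ for a node $(H,d)$ with tag $(i,m_0)$; since the recursion stops below $d<W/2$ and starts at $d_0=nW/2$, one has $1\le x\le 2nm$, and the transition table of \Cref{lem:decompose} becomes: at $i=0$, case (a) gives $x'\le\frac34x$, case (b) gives $x'\le\frac12x$, case (c) increments $i$; at $i>0$, case (a) gives $x'\le x/2^{2^{i-1}}$ (up to the factor $1.5$) and the other case increments $i$. This is literally the recurrence of \Cref{lem:5} with $x$ in place of $m$, so the charge per vertex is $O(\log x\log\log(1/\epsilon))=O(\log n\log\log n)$ with $\epsilon=m^{-5}$.

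If you prefer to finish it in your own ``chains and resets'' language rather than by quoting the recurrence, the two facts you need to state explicitly are: (i) a case~(b) reset can only occur at tag $i=0$, so each of the at most $\lg n+O(1)$ diameter halvings along a trajectory carries only an $O(\log\log n)$ charge; and (ii) $m_0$ is monotone non-increasing along the whole root-to-leaf path (components are subgraphs), so the case~(a) telescoping against $\log m_0$ is global across diameter levels rather than restarting at each new $\Delta$. Together these give total charge $O((\log m+\log n)\log\log n)$, which is the paper's bound; as written, your proposal asserts rather than proves this, so it is the same approach with its pivotal step left unexecuted. (Two minor points: the leaf-node Dijkstra used to report a negative cycle runs on all of $G'_{\ge0}$, i.e.\ costs $O(m+n\log\log n)$, but occurs at most once since the algorithm then terminates; and the averaged bounds $\sum_v\mathbb E[Y_v]=O(n\log n)$ are weaker than what the potential argument actually gives, namely $\mathbb E[Y_v]=O(\log n)$ per vertex, which is what the charging scheme uses.)
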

\begin{proof}
We first bound the expected running time of each node $(H,d)$ with tag $(i,m_0)$ by\linebreak \( O((m_H + n_H \log \log n_H)\cdot 2^i\log\log(1/\epsilon)) \)   under both \textsc{Decompose} and \textsc{RecursiveDistanceComputation}. \textsc{Decompose}$(H,d,i,m_0)$ takes expected time \( O((m_H + n_H \log \log n_H)\cdot 2^i\log\log(1/\epsilon)) \) by Lemma~\ref{lem:decompose}. In \textsc{RecursiveDistanceComputation}$(H,d)$, Lemma~\ref{lem:fix-dag} adjusts $\phi$ in $O(n_H+m_H)$ time, and \Cref{lemma:neg_cycle_in_G,lemma:neg_cycle_in_G_prime} establish an expected runtime of \( O((m_H + n_H \log \log n_H) \cdot2^i\log\log(1/\epsilon)) \) for \textsc{BellmanFordDijkstra}. There may be an additional $O(m+n\log\log n)$ time to output a negative-weight cycle, but this only happens once. Overall, the expected running time is $O((m_H + n_H \log \log n_H) \cdot2^i\log\log(1/\epsilon))$.

We now bound the total expected running time by $O((m+n\log\log n)\log n\log\log n)$. For each node $(H,d)$, we charge the $O((m_H + n_H \log \log n_H) \cdot2^i\log\log(1/\epsilon))$ expected running time to the vertices in $H$ so that each vertex is charged at most $2^i\log\log(1/\epsilon)$, where a vertex $v$ with charge $c$ is responsible for $O(\deg_H(v)+\log\log n_H)$ expected running time. To bound the total expected running time, it suffices to show that each vertex is charged $O(\log n\log\log n)$ times.

For a node $(H,d)$ with tag $(i,m_0)$, define a new parameter $x=4/W\cdot dm_0$. Since the recursion becomes a leaf at $d<W/2$, we always have $d\ge W/4$, which means $x\ge1$. Also, since $d\le d_0=nW/2$, we have $x\le2nm$. Consider each possibility in \Cref{lem:decompose}:
 \begin{enumerate}
 \item If $i=0$, then
  \begin{enumerate}
  \item In the recursive call $\textsc{Decompose}(H',\Delta,0,m_{H'})$, the new parameter $x$ is at most $\frac34x$.
  \item In the recursive call $\textsc{Decompose}(H',\Delta/2,0,m_{H'})$, the new parameter $x$ is at most $\frac12x$.
  \item In the recursive call $\textsc{Decompose}(H[U],\Delta,1,m_0)$, the new parameter $i$ increases by $1$.
  \end{enumerate}
 \item If $i>0$, then
  \begin{enumerate}
  \item In the recursive call $\textsc{Decompose}(H',\Delta,0,m_{H'})$, the new parameter $x$ is at most $x/2^{2^{i-1}}$.
  \item In the recursive call $\textsc{Decompose}(H[U],\Delta,i+1,m_0)$, the new parameter $i$ increases by $1$.
  \end{enumerate}
 \end{enumerate}
This recursive structure is the same as the one in the proof of \Cref{lem:5}, except that $x$ replaces $m$ and there is a separate recursive instance for each iteration $i$. Grouping these recursive instances together and using the same analysis, we conclude that each vertex is charged $O(\log x\log\log(1/\epsilon))$ times, which is $O(\log n\log\log n)$ since $x\le 2nm$ and $\epsilon=m^{-5}$.
\end{proof}

With correctness and running time established, we now finish the proof of \Cref{thm:master}. To convert distances in \( G' \) into a reweighted graph \( G_\phi \) with edge weights at least \( -W/2 \), we note that setting \( \phi \) as the distances in $G'$ makes \( G'_\phi \) non-negative. Since \( w_{G'}(e) = w_G(e) + W/2 \) for all $e\in E$, it follows that \( w_{G_\phi}(e) = w_{G'_\phi}(e)-w_{G'}(e)+w_G(e) \geq -W/2 \), as desired.

\bibliographystyle{alpha}
\bibliography{ref2}

\appendix

\newcommand\IN{\mathit{in}}
\newcommand\OUT{\mathit{out}}
\newcommand\Bin{B^\IN}
\newcommand\Bout{B^\OUT}
\newcommand\Order{O}
\newcommand\set[1]{\{\,#1\,\}}
\newcommand\Ex{\mathbf E}
\newcommand\Geom{Geom}
\newcommand\dist{dist}

\section{Proof of Lemma~\ref{lem:BFD} (Bellman-Ford Dijkstra)}\label{app:BFD}
This section is devoted to a proof of the following lemma, stating that Dijkstra's algorithm can be adapted to work with negative edges in time depending on the $\eta_G(v)$ values. The proof is taken from~\cite{bringmann2023negative}. Recall that $\eta_G(v)$ denotes the minimum number of negative-weight edges in a shortest $s$-$v$ path in~$G$.

\BFD*

This lemma is basically \cite[Lemma~3.3]{bernstein2025negative}, but the statement differs slightly. We provide a self-contained proof that morally follows the one in~\cite[Appendix~A]{bernstein2025negative}.

We give the pseudocode for \Cref{lem:BFD} in \Cref{alg:lazy-dijkstra}. Throughout, let $G = (V, E, w)$ be the given directed weighted graph with possibly negative edge weights. We write $E^{\geq 0}$ for the subset of edges with nonnegative weight, and $E^{< 0}$ for the subset of edges with negative weight. In the pseudocode, we rely on Thorup's priority queue:

\begin{lemma}[Thorup's Priority Queue~\cite{thorup2003integer}] \label{lem:thorup}
There is a priority queue implementation for storing $n$ integer keys that supports the operations $\textsc{FindMin}$, $\textsc{Insert}$ and $\textsc{DecreaseKey}$ in constant time, and $\textsc{Delete}$ in time $\Order(\log\log n)$.
\end{lemma}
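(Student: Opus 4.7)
The plan is to follow Thorup's original construction~\cite{thorup2003integer} and build the priority queue as a hierarchical bucket structure on top of Fredman--Willard atomic heaps, exploiting word-level parallelism on the word RAM model. First I would fix the machine word size $w$ so that integer keys fit in $O(1)$ words and $w \ge \log n$. I would then recall the atomic heap subroutine: for any set of $\log^{O(1)} n$ integer keys, \textsc{FindMin}, \textsc{Insert}, \textsc{DecreaseKey}, and \textsc{Delete} can each be performed in constant time by packing the keys into a constant number of machine words and comparing or shuffling them in parallel.

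Next I would describe the top-level structure as a van Emde Boas-style tree of depth $O(\log\log n)$, where each internal node stores a summary of which of its children contain at least one key. Each leaf bucket stores at most $\log^{O(1)} n$ keys in an atomic heap, and a global pointer to the leaf containing the current minimum is maintained. \textsc{FindMin} simply reads this pointer in constant time. For \textsc{Insert} and \textsc{DecreaseKey}, the key is routed directly to its leaf, inserted or updated in the atomic heap in constant time, and the minimum pointer is compared against the new key in constant time; summary bits on the root-to-leaf path are updated lazily, with the cost piggybacked onto subsequent \textsc{Delete} operations so that \textsc{Insert} and \textsc{DecreaseKey} remain worst-case constant.

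The main obstacle is \textsc{Delete}. When the current minimum is removed, its leaf bucket may become empty, forcing the algorithm to search for the next nonempty leaf by walking up the van Emde Boas tree and then descending into a sibling subtree. Each level costs $O(1)$ via an atomic-heap query on the child summaries, and the depth is $O(\log\log n)$, giving the claimed bound. A secondary difficulty is that atomic heaps have bounded capacity, so leaf buckets must occasionally be split or merged to maintain size invariants; this rebalancing has to be charged carefully so that its amortized cost stays within $O(\log\log n)$ per deletion without inflating \textsc{Insert} or \textsc{DecreaseKey}. Verifying that every word-level packing trick actually runs in $O(1)$ in the word RAM model, and that the lazy summary-update amortization survives arbitrary interleavings of operations, is the most delicate part, which is why we invoke~\cite{thorup2003integer} as a black box rather than reconstruct the full argument here.
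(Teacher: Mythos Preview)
The paper does not prove this lemma at all; it is stated with a citation to~\cite{thorup2003integer} and used as a black box throughout. Your proposal likewise ends by invoking~\cite{thorup2003integer} as a black box, so in that sense you match the paper exactly, and the intervening sketch is extra exposition the paper does not attempt.

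As a side remark on the sketch itself: the high-level picture of atomic heaps at the leaves plus a shallow recursive summary structure is in the right spirit, but Thorup's actual construction is not a van Emde Boas tree over the key universe (which would give depth $O(\log w)$, not $O(\log\log n)$) and does not rely on lazy summary updates charged to deletions in the way you describe. The $O(\log\log n)$ bound comes from a different recursion on the number of elements, and the constant-time \textsc{Insert}/\textsc{DecreaseKey} is obtained more directly. None of this affects the paper, since both you and the authors ultimately defer to the citation, but if you intend the sketch to stand on its own you would need to revisit those details.
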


\begin{algorithm}[t]
\caption{The version of Dijkstra's algorithm implementing \Cref{lem:BFD}.} \label{alg:lazy-dijkstra}
\begin{algorithmic}[1]
\State Initialize $d[s] \gets 0$ and $d[v] \gets \infty$ for all vertices $v \neq s$
\State Initialize a Thorup priority queue $Q$ with keys $d[\cdot]$ and add $s$ to $Q$
\Repeat
\medskip
\Statex[1]\emph{(The Dijkstra phase)}
\State $A \gets \emptyset$
\While{$Q$ is nonempty}
    \State Remove the vertex $v$ from $Q$ with minimum $d[v]$
    \State Add $v$ to $A$
    \ForAll{edge $(v, x) \in E^{\geq 0}$}
        \If{$d[v] + w(v, w) < d[x]$} \label{alg:lazy-dijkstra:line:dijkstra-relax-start}
            \State Add $x$ to $Q$
            \State $d[x] \gets d[v] + w(v, x)$ \label{alg:lazy-dijkstra:line:dijkstra-relax-end}
        \EndIf
    \EndFor
\EndWhile

\medskip
\Statex[1]\emph{(The Bellman-Ford phase)}
\ForAll{$v \in A$}
    \ForAll{edge $(v, x) \in E^{< 0}$}
        \If{$d[v] + w(v, x) < d[x]$} \label{alg:lazy-dijkstra:line:bf-relax-start}
            \State Add $x$ to $Q$
            \State $d[x] \gets d[v] + w(v, x)$ \label{alg:lazy-dijkstra:line:bf-relax-end}
        \EndIf
    \EndFor
\EndFor

\medskip
\Until{$Q$ is empty}
\State\Return $d[v]$ for all vertices $v$
\end{algorithmic}
\end{algorithm}

For the analysis of the algorithm, we define two central quantities. Let $v$ be a vertex, then we define
\begin{align*}
    \dist_i(v) &= \min\set{ w(P) : \text{$P$ is an $s$-$v$-path containing less than $i$ negative edges}}, \\
    \dist'_i(v) &= \min\left\{ \dist_i(v), \min_{\substack{u \in V\\w(u, v) < 0}} \dist_i(u) + w(u, v) \right\}.
\end{align*}
Note that $\dist_0(v) = \dist'_0(v) = \infty$. We start with some observations involving these quantities $\dist_i$ and $\dist_i'$:

\begin{observation} \label{obs:1}
For all $i$, $\dist_i(v) \geq \dist'_i(v) \geq \dist_{i+1}(v)$.
\end{observation}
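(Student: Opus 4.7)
The plan is to verify the two inequalities in the statement independently, each by directly unpacking the definitions of $\dist_i$ and $\dist_i'$. Nothing beyond these definitions should be needed, so this is essentially a bookkeeping argument with no substantive obstacle; the main care point is making sure the bound on the number of negative edges is handled correctly when one extends a path.

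For the first inequality $\dist_i(v) \geq \dist_i'(v)$, I would simply observe that $\dist_i'(v)$ is defined as a minimum whose candidate set explicitly contains $\dist_i(v)$, so the inequality is immediate.

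For the second inequality $\dist_i'(v) \geq \dist_{i+1}(v)$, I would argue that each of the two kinds of candidates appearing in the definition of $\dist_i'(v)$ is already at least $\dist_{i+1}(v)$. First, any $s$-$v$ path with fewer than $i$ negative edges also has fewer than $i+1$ negative edges, so the set of paths minimized over in the definition of $\dist_{i+1}(v)$ contains the set used to define $\dist_i(v)$; this gives $\dist_i(v) \geq \dist_{i+1}(v)$. Second, for any $u$ with $w(u,v) < 0$ and any $s$-$u$ path $P$ with fewer than $i$ negative edges, appending the edge $(u,v)$ produces an $s$-$v$ path with fewer than $i+1$ negative edges and weight $w(P) + w(u,v)$; taking the minimum over such $P$ yields $\dist_i(u) + w(u,v) \geq \dist_{i+1}(v)$.

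Combining the two cases, every candidate contributing to the minimum defining $\dist_i'(v)$ is at least $\dist_{i+1}(v)$, so $\dist_i'(v) \geq \dist_{i+1}(v)$, completing the observation. As noted, there is no real difficulty here; I would just make sure the ``less than $i$'' vs.\ ``less than $i+1$'' accounting is written out carefully so that appending a single negative edge stays within the allowed count.
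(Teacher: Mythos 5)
Your proof is correct: both inequalities follow directly from the definitions exactly as you argue, with the key point being that appending one negative edge $(u,v)$ to a path with fewer than $i$ negative edges stays within the "fewer than $i+1$" budget. The paper states this observation without proof, treating it as immediate from the definitions, and your write-up is precisely the routine argument being implicitly invoked.
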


\begin{observation} \label{obs:2}
For all $v$,
\begin{equation*}
    \dist_{i+1}(v) = \min\left\{\dist_i(v), \min_{\substack{u \in V\\\dist_i(u) > \dist'_i(u)}} \dist'_i(u) + \dist_{G^{\geq 0}}(u, v)\right\}.
\end{equation*}
\end{observation}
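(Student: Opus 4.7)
My plan is to prove the equality by showing both inequalities, working entirely at the level of walks (or paths) decorated by their negative-edge count. The two quantities $\dist_i$ and $\dist'_i$ have clean interpretations: $\dist_i(v)$ is the minimum weight of an $s$-$v$ walk using fewer than $i$ negative edges, while $\dist'_i(v)$ additionally allows one more negative edge provided it is the final edge. With this in mind, I will attack the two directions separately, building explicit walks for the upper bound and dissecting an optimal walk for the lower bound.

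For the $\leq$ direction, the first term $\dist_i(v) \ge \dist_{i+1}(v)$ is immediate from Observation~\ref{obs:1}. For the second term, fix $u$ with $\dist_i(u) > \dist'_i(u)$; then by definition $\dist'_i(u)$ is attained by some $u'$ with a negative edge $(u',u)$ and $\dist'_i(u) = \dist_i(u') + w(u',u)$. I will concatenate a witnessing $s$-$u'$ walk (with fewer than $i$ negative edges), the edge $(u',u)$, and a shortest nonnegative $u$-$v$ path in $G^{\ge 0}$. The resulting $s$-$v$ walk uses at most $i$ negative edges and has weight exactly $\dist'_i(u) + \dist_{G^{\geq 0}}(u,v)$, giving $\dist_{i+1}(v) \le \dist'_i(u) + \dist_{G^{\geq 0}}(u,v)$.

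For the $\geq$ direction, I will take any walk $P^*$ realizing $\dist_{i+1}(v)$ and split on how many negative edges it uses. If it uses fewer than $i$, then $w(P^*) \ge \dist_i(v)$, which is captured by the first term of the min. Otherwise it uses exactly $i$ negative edges; let $(u',u)$ be the last one. The prefix to $u'$ has $i-1 < i$ negative edges, so its weight is at least $\dist_i(u')$; the suffix from $u$ to $v$ contains no negative edges, so its weight is at least $\dist_{G^{\geq 0}}(u,v)$. Combining, $w(P^*) \ge \dist_i(u') + w(u',u) + \dist_{G^{\geq 0}}(u,v) \ge \dist'_i(u) + \dist_{G^{\geq 0}}(u,v)$.

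The main subtlety, and the step I expect to require the most care, is the constraint $\dist_i(u) > \dist'_i(u)$ imposed on the inner minimum: the lower bound I just derived uses this particular $u$, but I have not verified that this $u$ satisfies the strict inequality. The fix is to handle the equality case separately: if $\dist_i(u) = \dist'_i(u)$, then there is an $s$-$u$ walk of weight $\dist_i(u)$ using fewer than $i$ negative edges, and appending the nonnegative $u$-$v$ suffix from $P^*$ gives an $s$-$v$ walk with fewer than $i$ negative edges of weight $\dist_i(u) + \dist_{G^{\geq 0}}(u,v) = \dist'_i(u) + \dist_{G^{\geq 0}}(u,v) \le w(P^*)$. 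Thus $\dist_i(v)$ already lies below $w(P^*)$, and the bound is absorbed into the first term of the min. This closes the case analysis and completes the proof.
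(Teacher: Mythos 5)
Your proposal is correct and takes essentially the same route as the paper: both arguments split an optimal witness at its last negative edge to get the bound $\dist'_i(u)+\dist_{G^{\geq 0}}(u,v)$, and both dispose of the side condition $\dist_i(u)>\dist'_i(u)$ by noting that in the equality case the bound is at least $\dist_i(v)$ and hence absorbed by the first term (the paper phrases this as a contradiction with its standing assumption $\dist_{i+1}(v)<\dist_i(v)$). The only difference is that you also write out the upper-bound direction explicitly, which the paper dismisses as clear.
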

\begin{proof}
The statement is clear if $\dist_{i}(v) = \dist_{i+1}(v)$, so assume that $\dist_{i+1}(v) < \dist_i(v)$. Let $P$ be the path witnessing $\dist_{i+1}(v)$, i.e., a shortest $s$-$v$-path containing less than~$i+1$ negative edges. Let $(x, u)$ denote the last negative-weight edge in $P$, and partition the path~$P$ into subpaths $P_1\, x\, u\, P_2$. Then the first segment $P_1\, x$ is a path containing less than~$i$ negative-weight edges and the segment $u\, P_2$ does not contain any negative-weight edges. Therefore,
\begin{equation*}
    \dist_{i+1}(v) = \dist_i(x) + w(x, u) + \dist_{G^{\geq 0}}(u, v) \geq \dist_i'(u) + \dist_{G^{\geq 0}}(u, v).
\end{equation*}
Suppose, for the sake of contradiction, that $\dist_i(u) = \dist'_i(u)$. Then
\begin{equation*}
    \dist_{i+1}(v) \geq \dist_i(u) + \dist_{G^{\geq 0}}(u, v) \geq \dist_i(v),
\end{equation*}
which contradicts our initial assumption.
\end{proof}

\begin{observation} \label{obs:3}
For all $v$,
\begin{equation*}
    \dist'_i(v) = \min\left\{\dist_i(v), \min_{\substack{u \in V\\\dist_{i-1}(u) > \dist_i(u)\\w(u, v) < 0}} \dist_i(u) + w(u, v) \right\}
\end{equation*}
\end{observation}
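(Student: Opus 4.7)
The plan is to start from the definition $\dist'_i(v) = \min\{\dist_i(v),\; \min_{u:\, w(u,v)<0} \dist_i(u) + w(u,v)\}$ and argue that the inner minimum can be safely restricted to those $u$ that also satisfy $\dist_{i-1}(u) > \dist_i(u)$, because every $u$ with $\dist_{i-1}(u) = \dist_i(u)$ contributes a term that is already dominated by $\dist_i(v)$.

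The first step is a monotonicity observation: any $s$-$u$-path with strictly fewer than $i-1$ negative edges is also such a path with strictly fewer than $i$ negative edges, so $\dist_i(u) \leq \dist_{i-1}(u)$ for every $u$. Hence exactly one of $\dist_{i-1}(u) = \dist_i(u)$ and $\dist_{i-1}(u) > \dist_i(u)$ holds.

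The second step is the key path-concatenation argument. Fix $u$ with $w(u,v) < 0$ and $\dist_{i-1}(u) = \dist_i(u)$, and let $P$ be an $s$-$u$-path achieving $\dist_{i-1}(u)$, so $P$ has fewer than $i-1$ negative edges. Appending the (negative) edge $(u,v)$ produces an $s$-$v$-path with fewer than $i$ negative edges of weight $\dist_{i-1}(u) + w(u,v)$. By definition of $\dist_i(v)$, this gives
\begin{equation*}
\dist_i(v) \;\leq\; \dist_{i-1}(u) + w(u,v) \;=\; \dist_i(u) + w(u,v).
\end{equation*}
Therefore the term $\dist_i(u) + w(u,v)$ contributed by such a $u$ is no smaller than $\dist_i(v)$, which already appears in the outer minimum.

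Combining these two steps, dropping the vertices with $\dist_{i-1}(u) = \dist_i(u)$ from the inner minimum does not change the overall value, and the claimed identity follows. I do not expect any real obstacle here; the only thing to be careful about is the direction of the inequality $\dist_i \leq \dist_{i-1}$ (since allowing more negative edges can only help) and making sure the concatenated path has strictly fewer than $i$ negative edges (it does, since $P$ has at most $i-2$ and we add exactly one more).
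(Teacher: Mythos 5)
Your proof is correct and takes essentially the same route as the paper: both arguments reduce to showing that any $u$ with $w(u,v)<0$ and $\dist_{i-1}(u)=\dist_i(u)$ contributes a term $\dist_i(u)+w(u,v)\ge\dist_i(v)$, so it may be dropped from the inner minimum. The only cosmetic difference is that the paper argues by contradiction at a minimizing $u$, obtaining this inequality from \Cref{obs:1} via $\dist_i'(v)=\dist_{i-1}(u)+w(u,v)\ge\dist_{i-1}'(v)\ge\dist_i(v)$, whereas you prove it directly by appending the edge $(u,v)$ to a path witnessing $\dist_{i-1}(u)$, which is just an inlined proof of the same fact (and holds trivially when $\dist_{i-1}(u)=\infty$).
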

\begin{proof}
The statement is clear if $\dist_i(v) = \dist_i'(v)$, so suppose that $\dist_i'(v) < \dist_i(v)$. Then there is some vertex $u \in V$ with $w(u, v) < 0$ such that $\dist_i'(v) = \dist_i(u) + w(u, v)$. It suffices to prove that $\dist_{i-1}(u) > \dist_i(u)$. Suppose for the sake of contradiction that $\dist_{i-1}(u) = \dist_i(u)$. Then $\dist_i'(v) = \dist_{i-1}(u) + w(u, v) \geq \dist'_{i-1}(v)$, which contradicts our initial assumption (by \Cref{obs:1}).
\end{proof}

\begin{lemma}[Invariants of \Cref{alg:lazy-dijkstra}] \label{lem:lazy-dijkstra-invariants}
Consider the $i$-th iteration of the loop in \Cref{alg:lazy-dijkstra} (starting at $1$). Then the following invariants hold:
\smallskip
\begin{enumerate}[1.]
    \item After the Dijkstra phase (after Line \ref{alg:lazy-dijkstra:line:dijkstra-relax-end}):
    \begin{enumerate}[a.]
        \item $d[v] = \dist_i(v)$ for all vertices $v$, and
        \item $A = \set{v : \dist_{i-1}(v) > \dist_i(v)}$.
    \end{enumerate}
    \smallskip
    \item After the Bellman-Ford phase (after Line \ref{alg:lazy-dijkstra:line:bf-relax-end}):
    \begin{enumerate}
        \item $d[v] = \dist'_i(v)$ for all vertices $v$, and
        \item $Q = \set{ v : \dist_i(v) > \dist'_i(v) }$.
    \end{enumerate}
\end{enumerate}
\end{lemma}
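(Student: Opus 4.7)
The plan is to prove all four invariants simultaneously by induction on the iteration index $i$. The four invariants form a natural chain: the postconditions of iteration $i{-}1$'s Bellman-Ford phase become the preconditions for iteration $i$'s Dijkstra phase, which in turn set up iteration $i$'s Bellman-Ford phase.

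For the base case $i=1$, the algorithm begins with $d[s]=0$, $d[v]=\infty$ for $v\ne s$, and $Q=\{s\}$. The Dijkstra phase then reduces to standard single-source Dijkstra from $s$ on $G^{\geq 0}$, which yields $d[v]=\dist_{G^{\geq 0}}(s,v)=\dist_1(v)$ and $A=\{v:\dist_1(v)<\infty\}$. Since $\dist_0\equiv\infty$, this set is exactly $\{v:\dist_0(v)>\dist_1(v)\}$. The ensuing Bellman-Ford phase sets
\[ d[x]=\min\Bigl\{\dist_1(x),\ \min_{v\in A,\,w(v,x)<0}\dist_1(v)+w(v,x)\Bigr\}; \]
since the restriction ``$\dist_{i-1}(u)>\dist_i(u)$'' in Observation~\ref{obs:3} is exactly ``$u\in A$'' at $i=1$, this gives $d[x]=\dist_1'(x)$, and the final $Q$ is $\{x:\dist_1(x)>\dist_1'(x)\}$, as required.

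For the inductive step I will assume invariants (2a) and (2b) after iteration $i{-}1$, giving $d[v]=\dist'_{i-1}(v)$ and $Q=\{v:\dist_{i-1}(v)>\dist'_{i-1}(v)\}$ at the start of iteration $i$. I will view the Dijkstra phase as a multi-source Dijkstra on $G^{\geq 0}$ from $Q$ with source distances $\dist'_{i-1}(u)$, while each $v\notin Q$ carries a ``pre-loaded'' value $d[v]=\dist'_{i-1}(v)=\dist_{i-1}(v)$ that relaxations can only decrease. This produces
\[ d[v]=\min\Bigl\{\dist'_{i-1}(v),\ \min_{u\in Q}\dist'_{i-1}(u)+\dist_{G^{\geq 0}}(u,v)\Bigr\}. \]
For $v\notin Q$ the first term equals $\dist_{i-1}(v)$, so Observation~\ref{obs:2} (with ``$\dist_{i-1}(u)>\dist'_{i-1}(u)$'' recognized as ``$u\in Q$'') gives $d[v]=\dist_i(v)$. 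For $v\in Q$ the $u=v$ term of the inner minimum already equals $\dist'_{i-1}(v)$, so the outer minimum collapses onto the inner minimum; combined with $\dist_i(v)\le\dist'_{i-1}(v)<\dist_{i-1}(v)$, Observation~\ref{obs:2} again yields $d[v]=\dist_i(v)$. The set $A$ collects exactly the vertices initially in $Q$ together with those whose $d$-value strictly decreases during the phase, which is $\{v:\dist_{i-1}(v)>\dist_i(v)\}$, completing invariants (1a,1b).

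The Bellman-Ford phase of iteration $i$ then starts with $d[v]=\dist_i(v)$ and $A=\{v:\dist_{i-1}(v)>\dist_i(v)\}$. Relaxing every negative out-edge from $A$ gives
\[ d[x]=\min\Bigl\{\dist_i(x),\ \min_{v\in A,\,w(v,x)<0}\dist_i(v)+w(v,x)\Bigr\}, \]
which equals $\dist'_i(x)$ by Observation~\ref{obs:3}, and $Q$ ends as the set of $x$ whose $d$-value strictly decreased, namely $\{x:\dist_i(x)>\dist'_i(x)\}$. The main obstacle will be justifying the multi-source Dijkstra step precisely: I must verify that the algorithm's single priority queue seeded only with $Q$ (rather than explicitly treating every $v\notin Q$ as an alternate source at distance $\dist_{i-1}(v)$) still realizes the displayed formula for $d[v]$. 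The key observation is that any would-be shorter path through some $v\notin Q$ must either end at $v$ (in which case the pre-loaded $d[v]$ already reflects it) or strictly decrease $d[v]$ at some point during the phase (at which moment $v$ is pushed into $Q$ and processed like any other source), so no relevant path is missed.
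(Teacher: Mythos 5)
Your proposal is correct and follows essentially the same route as the paper's proof: induction on $i$, with the base case being plain Dijkstra, the inductive Dijkstra phase analyzed via the multi-source formula $d[v]=\min\{\dist'_{i-1}(v),\,\min_{u\in Q}\dist'_{i-1}(u)+\dist_{G^{\geq 0}}(u,v)\}$ combined with Observations~\ref{obs:1}--\ref{obs:3}, and $A$ and $Q$ characterized as the initial queue together with the vertices whose $d$-value strictly decreased. Your handling of the pre-loaded values for $v\notin Q$ versus $v\in Q$ is in fact slightly more explicit than the paper's, but the argument is the same in substance.
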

\begin{proof}
We prove the invariants by induction on $i$.

\paragraph{First Dijkstra Phase.}
We start with the analysis of the first iteration, $i = 1$. The execution of the Dijkstra phase behaves exactly like the regular Dijkstra algorithm. It follows that $d[v] = \dist_{G^{\geq 0}}(s, v) = \dist_1(v)$, as claimed in Invariant 1a. Moreover, we include in $A$ exactly all vertices which were reachable from $s$ in $G^{\geq 0}$. Indeed, for these vertices $v$ we have that $\dist_1(v) = \dist_{G^{\geq 0}}(s, v) < \infty$ and $\dist_0(v) = \infty$, and thus $A = \set{ v : \dist_0(v) > \dist_1(v) }$, which proves Invariant 1b.

\paragraph{Later Dijkstra Phase.}
Next, we analyze the Dijkstra phase for a later iteration, $i > 1$. Letting $d'$ denote the state of the array $d$ after the Dijkstra phase, our goal is to prove that $d'[v] = \dist_i(v)$ for all vertices $v$. So fix any vertex $v$; we may assume that $\dist_{i+1}(v) < \dist_i(v)$, as otherwise the statement is easy using that the algorithm never increases $d[\cdot]$. A standard analysis of Dijkstra's algorithm reveals that
\begin{equation*}
    d'[v] = \min_{u \in Q} (d[u] + \dist_{G^{\geq 0}}(u, v)),
\end{equation*}
where $Q$ is the queue before the execution of Dijkstra. By plugging in the induction hypothesis and \Cref{obs:2}, we obtain that indeed
\begin{equation*}
    d'[v] = \min_{\substack{u \in V\\\dist_{i-1}(v) > \dist'_{i-1}(v)}} d[u] + \dist_{G^{\geq 0}}(u, v) = \dist_i(v),
\end{equation*}
which proves Invariant 1a.

To analyze Invariant 1b and the set $A$, first recall that we reset $A$ to an empty set before executing the Dijkstra phase. Afterwards, we add to $A$ exactly those vertices that are either (i) contained in the queue $Q$ initially or (ii) for which $d'[v] < d[v]$. Note that these sets are exactly (i) $\set{ v : \dist_i(v) > \dist'_i(v) }$ and (ii) $\set{v : \dist'_{i-1}(v) > \dist_i(v)}$ whose union is exactly $\set{v : \dist_{i-1}(v) > \dist_i(v)}$ by \Cref{obs:1}.

\paragraph{Bellman-Ford Phase.}
The analysis of the Bellman-Ford phase is simpler. Writing again~$d'$ for the state of the array $d$ after the execution of the Bellman-Ford phase, by \Cref{obs:3} we have that
\begin{equation*}
    d'[v] = \min_{\substack{u \in A\\w(u, v) < 0}} d[u] + w(u, v) = \min_{\substack{u \in V\\\dist'_{i-1}(u) > \dist_i(u)\\w(u, v) < 0}} \dist_i(u) + w(u, v) = \dist_i'(v),
\end{equation*}
which proves Invariant 2a. Here again we have assumed that $\dist_i'(v) < \dist_i(v)$, as otherwise the statement is trivial since the algorithm never increases $d[\cdot]$.

Moreover, after the Dijkstra phase has terminated, the queue $Q$ was empty. Afterwards, in the current Bellman-Ford phase, we have inserted exactly those vertices $v$ into the queue for which $\dist_i(v) > \dist_i'(v)$ and thus $Q = \set{ v : \dist_i(v) > \dist_i'(v)}$, which proves Invariant~2b.
\end{proof}

From these invariants (and the preceding observations), we can easily conclude the correctness of \Cref{alg:lazy-dijkstra}:

\begin{lemma}[Correctness of \Cref{alg:lazy-dijkstra}] \label{lem:lazy-dijkstra-correctness}
If the given graph $G$ contains a negative cycle, then \Cref{alg:lazy-dijkstra} does not terminate. Moreover, if \Cref{alg:lazy-dijkstra} terminates, then it has correctly computed $d[v] = \dist_G(s, v)$.
\end{lemma}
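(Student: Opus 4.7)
The plan is to deduce both claims from the invariants in \Cref{lem:lazy-dijkstra-invariants} together with the stabilization behavior implied by \Cref{obs:2,obs:3}. Suppose the algorithm terminates at the end of some iteration $i$. Then its exit condition yields $Q = \emptyset$, so Invariant~2b forces $\dist_i(v) = \dist'_i(v)$ for every vertex $v$. Invariant~1a additionally tells us that $d[v] = \dist_i(v)$, so to conclude the second claim it suffices to show $\dist_i(v) = \dist_G(s,v)$.

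To do so, I would prove by induction on $j \geq i$ that $\dist_j(v) = \dist'_j(v) = \dist_i(v)$ for all $v$. Given the inductive hypothesis at level $j$, \Cref{obs:2} has its inner minimum over the empty set $\{u : \dist_j(u) > \dist'_j(u)\}$, yielding $\dist_{j+1}(v) = \dist_j(v) = \dist_i(v)$; then \Cref{obs:3} applied at level $j+1$ has its inner minimum over the empty set $\{u : \dist_j(u) > \dist_{j+1}(u)\}$, yielding $\dist'_{j+1}(v) = \dist_{j+1}(v) = \dist_i(v)$. From this stabilization I would deduce $\dist_i(v) = \dist_G(s,v)$: the inequality $\dist_j(v) \geq \dist_G(s,v)$ holds for every $j$ by definition, and when $\dist_G(s,v)$ is finite a simple shortest $s$-$v$-path has at most $n-1$ edges and hence fewer than $n$ negative edges, so $\dist_n(v) \leq \dist_G(s,v)$, and stabilization forces $\dist_i(v) \leq \dist_G(s,v)$.

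For the first claim, I would argue its contrapositive: if the algorithm terminates, then $\dist_i(v) = \dist_G(s,v)$ is finite for every $v$, so $G$ contains no negative-weight cycle reachable from $s$. Indeed, if such a cycle existed, then for a vertex $v$ on it, looping around the cycle $k$ times would give $s$-$v$-paths of arbitrarily small weight and thus $\dist_G(s,v) = -\infty$, contradicting finiteness. Under the reduction used by \textsc{BellmanFordDijkstra}, where $s$ is a freshly added source with zero-weight edges to every vertex, every negative cycle of the original graph becomes reachable from $s$, so termination implies the absence of any negative cycle.

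The main obstacle I expect is not the logical skeleton, which falls out of \Cref{lem:lazy-dijkstra-invariants} and the algebraic identities of \Cref{obs:2,obs:3}, but the care needed to carry out the stabilization induction correctly (invoking both \Cref{obs:2} and \Cref{obs:3} in lockstep so that both sequences $\dist_j$ and $\dist'_j$ collapse to the same value) and to treat the boundary cases $\dist_G(s,v) \in \{-\infty, \infty\}$ uniformly. Once the invariants are available, the remaining steps are essentially bookkeeping.
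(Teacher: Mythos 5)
Your route is genuinely different from the paper's (the paper shows that at termination every edge is relaxed and then argues by summing around a negative cycle and by induction on path length), and the core of your argument is sound: the lockstep induction via \Cref{obs:2,obs:3} showing $\dist_j(v)=\dist'_j(v)=\dist_i(v)$ for all $j\ge i$ is correct, and together with $\dist_n(v)\le\dist_G(s,v)$ (valid when $\dist_G(s,v)$ is finite) and the monotonicity of \Cref{obs:1} (needed in case the terminating iteration satisfies $i>n$, where ``stabilization'' alone says nothing) it yields $d[v]=\dist_G(s,v)$ for every $v$ with $\dist_G(s,v)\ne-\infty$. The citation of Invariant~1a should really be Invariant~2a combined with 2b, but since $Q=\emptyset$ means the final Bellman--Ford phase changed nothing, this is harmless.

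The genuine gap is in the first claim, and as written it is circular. Your proof of $\dist_i(v)=\dist_G(s,v)$ establishes the direction $\dist_i(v)\le\dist_G(s,v)$ only under the hypothesis that $\dist_G(s,v)$ is finite; but if a negative cycle reachable from $s$ exists and $v$ lies on it, then $\dist_G(s,v)=-\infty$ while every $\dist_j(v)$ is finite (a walk with fewer than $j$ negative edges has weight at least $-(j-1)\max_e|w(e)|$), so the equality fails in exactly the case you need. Hence the assertion ``if the algorithm terminates, then $\dist_i(v)=\dist_G(s,v)$ is finite for every $v$'' is not supported by the preceding argument, and the contradiction you draw from looping around the cycle presupposes the finiteness you are trying to establish; the scenario ``the algorithm terminates although a reachable negative cycle exists'' is never actually excluded. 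The fix stays entirely within your framework: if such a cycle $C$ existed, take $v\in C$; walks that reach $v$ and then traverse $C$ a total of $k$ times have weight tending to $-\infty$ and at most $n(k+1)$ negative edges, so $\dist_j(v)\to-\infty$ along a sequence of indices $j$, contradicting your stabilization $\dist_j(v)=\dist_i(v)>-\infty$ for all $j\ge i$ --- no reference to $\dist_G$ is needed. (Alternatively, one can follow the paper: at termination all edges are relaxed, and summing $d[v]\le d[u]+w(u,v)$ around a negative cycle whose $d$-values are finite gives $0\le w(C)<0$.) With that patch, your argument is complete and is a legitimately different, somewhat more ``algebraic'' proof than the paper's relaxation-based one.
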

\begin{proof}
We show that after the algorithm has terminated, all edges $(u, v)$ are \emph{relaxed}, meaning that~$d[v] \leq d[u] + w(u, v)$. Indeed, suppose there is an edge $(u, v)$ which is not relaxed, i.e., $d[v] > d[u] + w(u, v)$. Let $i$ denote the final iteration of the algorithm. By Invariant~2a we have that $d[x] = \dist_i'(x)$ and by Invariant~2b we have that $\dist_i'(x) = \dist_i(x)$ (assuming that $Q = \emptyset$), for all vertices $x$. We distinguish two cases: If $w(u, v) \geq 0$, then we have that $\dist_i(v) > \dist_i(u) + w(u, v)$---a contradiction. And if $w(u, v) < 0$, then we have that $\dist_i'(v) = \dist_i(u) + w(u, v)$---also a contradiction.

So far we have proved that if the algorithm terminates, all edges are relaxed. It is easy to check that if $G$ contains a negative cycle, then at least one edge in that cycle cannot be relaxed. It follows that the algorithm does not terminate whenever $G$ contains a negative cycle.

Instead, assume that $G$ does not contain a negative cycle. We claim that the algorithm has correctly computed all distances. First, recall that throughout we have $d[v] \geq \dist_G(s, v)$. Consider any shortest $s$-$v$-path~$P$; we prove that $d[v] = w(P)$ by induction on the length of $P$. For $|P| = 0$, we have correctly set $d[s] = 0$ initially. (Note that $\dist_G(s, s)$ cannot be negative as otherwise $G$ would contain a negative cycle.) So assume that $P$ is nonempty and that $P$ can be written as $P_1\, u\, v$. Then by induction $d[u] = \dist_G(P_1\, u)$. Since the edge $(u, v)$ is relaxed, we have that $d[v] \leq d[u] + w(u, v) = w(P) = \dist_G(s, v)$. Recall that we also have $d[v] \geq \dist_G(s, v)$ and therefore $d[v] = \dist_G(s, v)$.
\end{proof}

For us, the most relevant change in the proof is the running time analysis. Recall that~$\eta_G(v)$ denotes the minimum number of negative edges in a shortest $s$-$v$-path, and that $\deg(v)$ denotes the out-degree of a vertex $v$.

\begin{lemma}[Running Time of \Cref{alg:lazy-dijkstra}] \label{lem:lazy-dijkstra-time}
Assume that $G$ does not contain a negative cycle. Then \Cref{alg:lazy-dijkstra} runs in time $\Order(\sum_v (\deg(v) + \log\log n) \eta_G(v))$.
\end{lemma}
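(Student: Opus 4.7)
The plan is to amortize the total running time over the vertices extracted from the priority queue across all iterations, using Invariant~1b of \Cref{lem:lazy-dijkstra-invariants} to identify these extractions with strict decreases in $\text{dist}_i(v)$, and then to bound the number of such decreases by $\eta_G(v) + 1$.

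First I would bound the cost of one outer iteration $i$. Let $A_i$ denote the set $A$ at the end of the Dijkstra phase; this is exactly the set of vertices extracted from $Q$ during that phase. For each $v \in A_i$, the algorithm performs one $\textsc{Delete}$ from Thorup's queue at cost $O(\log\log n)$ (by \Cref{lem:thorup}), scans $v$'s outgoing nonnegative edges in $O(\deg(v))$ time, and pays $O(1)$ per relaxation via $\textsc{Insert}$ or $\textsc{DecreaseKey}$. The subsequent Bellman-Ford phase iterates over the same $A_i$ and scans $v$'s outgoing negative edges, also at $O(\deg(v))$ per vertex with $O(1)$ priority-queue operations per edge. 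Hence iteration $i$ costs $O\!\left(\sum_{v \in A_i}(\deg(v) + \log\log n)\right)$.

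Next I would bound $c(v) := |\{i : v \in A_i\}|$. By Invariant~1b, $v \in A_i$ iff $\text{dist}_{i-1}(v) > \text{dist}_i(v)$, so $c(v)$ equals the number of strict decreases in the non-increasing sequence $\text{dist}_0(v) = \infty, \text{dist}_1(v), \text{dist}_2(v), \dots$. Since $G$ has no negative cycle, $\text{dist}_G(s, v)$ is well-defined for reachable $v$, and by definition of $\eta_G(v)$ there is a shortest $s$-$v$-path using exactly $\eta_G(v)$ negative edges; this path is admissible in the definition of $\text{dist}_{\eta_G(v)+1}(v)$, so the sequence stabilizes by index $\eta_G(v) + 1$, giving $c(v) \le \eta_G(v) + 1$. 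Unreachable vertices never enter any $A_i$ and contribute nothing. Consequently the loop terminates after at most $\max_v(\eta_G(v) + 1) \le n$ iterations, because once every sequence $\text{dist}_\cdot(v)$ has stabilized, Invariant~2b forces $Q = \emptyset$ after the following Bellman-Ford phase.

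Summing over iterations gives
\[
\sum_i \sum_{v \in A_i}(\deg(v) + \log\log n) \;=\; \sum_v c(v)(\deg(v) + \log\log n) \;\le\; \sum_v (\eta_G(v) + 1)(\deg(v) + \log\log n),
\]
which matches the stated bound $O\!\bigl(\sum_v(\deg(v) + \log\log n)\,\eta_G(v)\bigr)$ up to the additive $O(m + n\log\log n)$ contributed by the extra $+1$; this additive slack is absorbed into the stated bound in the calling context (equivalently, the bound can be written as $O\!\bigl(\sum_v(\deg(v)+\log\log n)(\eta_G(v)+1)\bigr)$). The main obstacle is the conceptual link between the algorithmic set $A_i$ and the combinatorial quantity $\eta_G(v)$; once Invariant~1b identifies membership in $A_i$ with a strict decrease of $\text{dist}_\cdot(v)$ at step $i$, and the definition of $\eta_G(v)$ caps the number of such decreases, the remainder is a routine amortization relying on the $O(1)$ cost of $\textsc{Insert}$ and $\textsc{DecreaseKey}$ in Thorup's queue.
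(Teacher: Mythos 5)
Your proposal is correct and takes essentially the same route as the paper: bound each outer iteration by $O\bigl(\sum_{v\in A_i}(\deg(v)+\log\log n)\bigr)$ using Thorup's queue, identify membership in $A_i$ with a strict decrease $\mathrm{dist}_{i-1}(v)>\mathrm{dist}_i(v)$ via Invariant~1b, and cap the number of decreases by the stabilization of $\mathrm{dist}_i(v)$ at index $\eta_G(v)+1$. If anything, your count of $\eta_G(v)+1$ (with the additive $O(m+n\log\log n)$ slack absorbed in the calling context) is slightly more careful than the paper's claim that $v$ appears only in $A_1,\dots,A_{\eta_G(v)}$, since $v$ does in fact land in $A_{\eta_G(v)+1}$ whenever $\mathrm{dist}_{\eta_G(v)}(v)>\mathrm{dist}_G(s,v)$.
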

\begin{proof}
Consider a single iteration of the algorithm. Letting $A$ denote the state of the set $A$ at the end of (Dijkstra's phase of) the iteration, the running time of the whole iteration can be bounded by:
\begin{equation*}
    \Order\left(\sum_{v \in A} (\deg(v) + \log\log n)\right).
\end{equation*}
Indeed, in the Dijkstra phase, in each iteration we spend time $\Order(\log\log n)$ for deleting an element from the queue (\Cref{lem:thorup}), but for each such deletion in $Q$ we add a new element to $A$. Moreover, both in the Dijkstra phase and the Bellman-Ford phase we only enumerate edges starting from a vertex in $A$, amounting for a total number of $\Order(\sum_{v \in A} \deg(v))$ edges. The inner steps of the loops (in Lines \ref{alg:lazy-dijkstra:line:dijkstra-relax-start} to \ref{alg:lazy-dijkstra:line:dijkstra-relax-end} and Lines \ref{alg:lazy-dijkstra:line:bf-relax-start} to \ref{alg:lazy-dijkstra:line:bf-relax-end}) run in constant time each (\Cref{lem:thorup}).

Let us write $A_i$ for the state of $i$ in the $i$-th iteration. Then the total running time is
\begin{equation*}
    \Order\left(\sum_{i=1}^\infty \sum_{v \in A_i} (\deg(v) + \log\log n)\right) = \Order\left(\sum_{v \in V} |\set{i : v \in A_i}| \cdot (\deg(v) + \log\log n)\right).
\end{equation*}
To complete the proof, it suffices to show that $|\set{i : v \in A_i}| \leq \eta_G(v)$. To see this, we first observe that $\dist_{\eta_G(v) + 1}(v) = \dist_{\eta_G(v) + 2} = \dots = \dist_G(s, v)$. Since, by the invariants above we know that $A_i = \set{ v : \dist_{i-1}(v) > \dist_i(v)}$, it follows that $v$ can only be contained in the sets $A_1, \dots, A_{\eta_G(v)}$.
\end{proof}

In combination, \Cref{lem:lazy-dijkstra-correctness,lem:lazy-dijkstra-time} complete the proof of \Cref{lem:BFD}.

\end{document}